%%%%%%%%%%%%%%%%%%%%%%% file template.tex %%%%%%%%%%%%%%%%%%%%%%%%%
%
% This is a general template file for the LaTeX package SVJour3
% for Springer journals.          Springer Heidelberg 2010/09/16
%
% Copy it to a new file with a new name and use it as the basis
% for your article. Delete % signs as needed.
%
% This template includes a few options for different layouts and
% content for various journals. Please consult a previous issue of
% your journal as needed.
%
%%%%%%%%%%%%%%%%%%%%%%%%%%%%%%%%%%%%%%%%%%%%%%%%%%%%%%%%%%%%%%%%%%%
%
% First comes an example EPS file -- just ignore it and
% proceed on the \documentclass line
% your LaTeX will extract the file if required
% [arxiv_v2: filecontents example.eps stripped, 246 chars]
\RequirePackage{fix-cm}
\documentclass[smallextended]{svjour3}       % onecolumn (second format)
\smartqed  % flush right qed marks, e.g. at end of proof
\usepackage{graphicx}
\usepackage{amsfonts}
\usepackage{amssymb}
\usepackage{amsfonts,mathrsfs,latexsym,amsmath,amssymb}
\usepackage[misc]{ifsym}
% \usepackage{mathptmx}      % use Times fonts if available on your TeX system
%
% insert here the call for the packages your document requires
%\usepackage{latexsym}
% etc.
%
% please place your own definitions here and don't use \def but
% \newcommand{}{}
%
% Insert the name of "your journal" with
% \journalname{myjournal}
%
\begin{document}

\title{A new trace bilinear form on cyclic $\mathbb{F}_q$-linear $\mathbb{F}_{q^t}$-codes%\thanks{Grants or other notes
%about the article that should go on the front page should be
%placed here. General acknowledgments should be placed at the end of the article.}
}
\subtitle{}

%\titlerunning{Short form of title}        % if too long for running head

\author{Yun Gao \and Tingting Wu \and Fang-Wei Fu %etc.
}

%\authorrunning{Short form of author list} % if too long for running head

\institute { \Letter ~Yun Gao\at
              Chern Institute of Mathematics and LPMC, Nankai University, Tianjin, 300071, China  \\
               \email{ gaoyun2014@126.com}
           \and Tingting Wu \at
            Chern Institute of Mathematics and LPMC, Nankai University, Tianjin, 300071, China  \\
               \email{ wutingting@mail.nankai.edu.com}
           \and Fang-Wei Fu \at
            Chern Institute of Mathematics and LPMC, Nankai University, Tianjin, 300071, China  \\
               \email{fwfu@nankai.edu.cn}
}

\date{Received: date / Accepted: date}
% The correct dates will be entered by the editor

\maketitle

\begin{abstract}
Let $\mathbb{F}_q$ be a finite field of cardinality $q$, where $q$ is a power of a prime number $p$, $t\geq 2$ an even number satisfying $t \not\equiv 1 \;(\bmod \;p)$ and $\mathbb{F}_{q^t}$ an extension field of $\mathbb{F}_q$ with degree $t$. First, a new trace bilinear form on $\mathbb{F}_{{q^t}}^n$ which is called $\Delta$-bilinear form is given, where $n$ is a positive integer coprime to $q$. Then according to this new trace bilinear form, bases and enumeration of cyclic $\Delta$-self-orthogonal and cyclic $\Delta$-self-dual $\mathbb{F}_q$-linear $\mathbb{F}_{q^t}$-codes are investigated when $t=2$. Furthermore, some good $\mathbb{F}_q$-linear $\mathbb{F}_{q^2}$-codes are obtained.

\keywords{$\Delta$-bilinear form \and $\mathbb{F}_q$-linear $\mathbb{F}_{q^t}$-codes \and cyclic $\Delta$-self-orthogonal $\mathbb{F}_q$-linear $\mathbb{F}_{q^t}$-codes \and cyclic $\Delta$-self-dual $\mathbb{F}_q$-linear $\mathbb{F}_{q^t}$-codes}
% \PACS{PACS code1 \and PACS code2 \and more}
% \subclass{MSC code1 \and MSC code2 \and more}
\end{abstract}

\section{Introduction}
\label{intro}
Additive codes over $\mathbb{F}_4$ were first introduced in 1998 in \cite{Calderbank} connecting these codes to binary quantum codes. A year later, $\mathbb{F}_p$-linear codes over $\mathbb{F}_{p^2}$, where $p$ is prime, were connected in \cite{Rains} to nonbinary quantum codes. Additive codes were also generalized and studied in many papers, for example \cite{Bierbrauer,Cao1,Danielsen}.
\par
Let $\mathbb{F}_q$ be a finite field of cardinality $q$, where $q$ is a power of a prime number $p$, and $\mathbb{F}_{q^t}$ be an extension field of $\mathbb{F}_q$ with degree $t$.
An additive code over $\mathbb{F}_4$ is simply an $\mathbb{F}_2$-linear subspace of $\mathbb{F}_4^n$. A natural generalization of this is the following. An $\mathbb{F}_q$-linear $\mathbb{F}_{q^t}$-code $\mathcal {C}$ of length $n$ is an $\mathbb{F}_q$-linear subspace of $\mathbb{F}_{{q^t}}^n$. $\mathcal {C}$ is said to be cyclic if $({c_{n - 1}},{c_0},{c_1}, \cdots ,{c_{n - 2}}) \in \mathcal {C}$ for all $({c_0},{c_1}, \cdots ,{c_{n - 1}}) \in \mathcal {C}$. In particular, $\mathcal {C}$ is closed under componentwise addition and multiplication with elements from $\mathbb{F}_q$ \cite{Dey,Huffman2,Huffman3}.
\par
There have been extensive study and application of $\mathbb{F}_q$-linear $\mathbb{F}_{q^t}$-codes, where $t\geq 2$ is an integer.
Cao et al. \cite{Cao2} studied the structure and a canonical form
decomposition of any $\lambda$-constacyclic $\mathbb{F}_q$-linear code over $\mathbb{F}_{q^l}$. Furthermore, Cao et al. \cite{Cao3} gave the structure and canonical form decompositions of semisimple multivariable $\mathbb{F}_q$-linear codes over $\mathbb{F}_{q^l}$.
Huffman \cite{Huffman1} placed two different trace inner products on $\mathbb{F}_q$-linear $\mathbb{F}_{q^t}$-codes and examined the case $t = 2$ in detail giving specific bases and enumeration for the self-orthogonal and self-dual cyclic codes. This paper is a generalization of Huffman \cite{Huffman1}. $0$-trace bilinear form for all $t$ and $\gamma$-trace bilinear form for $t$ even on $\mathbb{F}_{{q^t}}^n$ have been studied in \cite{Huffman1}. Sharma and Kaur placed a new trace bilinear form on $\mathbb{F}_{{q^t}}^n$ which is called the $*$ bilinear form for all $t$, but they did not consider the $\Delta$-bilinear form for $t$ even.
In this paper, we give a new trace bilinear form on $\mathbb{F}_{{q^t}}^n$ which is called $\Delta$-bilinear form for $t$ even, where $n$ is a positive integer coprime to $q$, and study the bases and enumeration of cyclic $\Delta$-self-orthogonal and cyclic $\Delta$-self-dual $\mathbb{F}_q$-linear $\mathbb{F}_{q^2}$-codes. Our theory and method could be employed to obtain many good codes, which give the same parameters as the best known linear codes.
\par
The present paper is organized as follows. In Section 2, we sketch the basic Lemmas needed in this paper. Section 3 gives a new trace bilinear form on $\mathbb{F}_{{q^t}}^n$ which is called $\Delta$-bilinear form, where $n$ is a positive integer coprime to $q$. According to this new trace bilinear form, bases and enumeration of cyclic $\Delta$-self-orthogonal and cyclic $\Delta$-self-dual $\mathbb{F}_q$-linear $\mathbb{F}_{q^2}$-codes are investigated, respectively, in Section 4. Finally, we describe a program to construct cyclic $\Delta$-self-orthogonal and cyclic $\Delta$-self-dual $\mathbb{F}_3$-linear $\mathbb{F}_{3^2}$-codes of length $7$ and construct some good $\mathbb{F}_q$-linear $\mathbb{F}_{q^2}$-codes in Section 5.

\section{Preliminaries}
Let $\mathcal {R}_n^{(q)}$ and $\mathcal {R}_n^{(q^t)}$ denote the group algebra ${\mathbb{F}_q}[X]/\left\langle {{X^n} - 1} \right\rangle $ and ${\mathbb{F}_{q^t}}[X]/\left\langle {{X^n} - 1} \right\rangle $, respectively, where $X$ is an indeterminate over $\mathbb{F}_p$ and any extension field of $\mathbb{F}_p$, $n$ is a positive integer coprime to $q$ and $t\geq 2$ is an integer. As $\gcd (n,q) = 1$, $X^n-1$ has distinct roots and
$\mathcal {R}_n^{(q)}$, $\mathcal {R}_n^{(q^t)}$ are semi-simple. Furthermore, $\mathcal {R}_n^{(q)}$ and $\mathcal {R}_n^{(q^t)}$ can be written as a direct sum of its minimal ideals, respectively, all of which are fields. A minimal ideal is one which does not contain any smaller nonzero ideal.
If $\mathcal {C}$ is a linear code over $\mathbb{F}_{q^t}$ with parameters $[n,k,d]$, where $n$ is the length of $\mathcal {C}$, $k$ is the dimension of $\mathcal {C}$ and $d$ is the minimum Hamming distance of $\mathcal {C}$, then $\mathcal {C}$ is called maximum distance separable, or MDS for short, if $d = n - k + 1$.
Throughout this paper, ${\dim _K}V$ denotes the dimension of a finite-dimensional vector space $V$ over the field $K$.
\par
Let ${X^n} - 1 = {m_0}(X){m_1}(X) \cdots {m_{s - 1}}(X)$, where $m_i(X)$ is a monic irreducible polynomial over $\mathbb{F}_q$ for $0\leq i\leq s-1$ and $m_0(X)=-1+X$.
Let $\eta'$ be a fixed primitive $n$th root of unity over the splitting field of $X^n-1$ over
 $\mathbb{F}_p$. Then $C_{{l_i}}^{(q)} = \{ {l_i},{l_i}q,{l_i}{q^2}, \cdots \} \;(\bmod \;n)$ is the $q$-cyclotomic coset modulo $n$ and $l_i$ is chosen so that $\{ {{\eta'} ^k}|k \in C_{{l_i}}^{(q)}\} $ are the roots of $m_i(X)$ in a splitting field of $X^n-1$ over $\mathbb{F}_q$, where $0\leq i\leq s-1$.
\par
Next, we sum up some results from \cite{Huffman1} and restate it as Lemmas 1 and 2.
\par
\begin{lemma}{\cite[Lemma 1]{Huffman1}}
For any integer $l$, $C_l^{(q)} = C_l^{({q^t})} \cup C_{lq}^{({q^t})} \cup  \cdots  \cup C_{l{q^{a - 1}}}^{({q^t})}$, where the union is a disjoint union and $a = \gcd (t,| {C_l^{(q)}} |)$. Furthermore, for $0\leq i\leq a-1$, \[| {C_{l{q^i}}^{({q^t})}}| = | {C_l^{({q^t})}} | = \frac{{| {C_l^{(q)}} |}}{{\gcd (t,| {C_l^{(q)}} |)}}.\]
\end{lemma}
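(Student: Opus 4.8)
The plan is to reduce the whole statement to two elementary facts: the characterisation of a cyclotomic coset's size via the multiplicative order of $q$ modulo $n$, and the solvability criterion for a single linear congruence. Write $r=|C_l^{(q)}|$. Since $\gcd(n,q)=1$, the element $q$ is a unit of finite order modulo $n$, so the sequence $l,lq,lq^2,\dots$ is purely periodic modulo $n$; hence $r$ is the least positive integer with $lq^{r}\equiv l\pmod n$, and in fact $lq^{k}\equiv l\pmod n$ holds (for $k\in\mathbb{Z}$) if and only if $r\mid k$. Because $lq^{i}$ lies in the same $q$-cyclotomic coset as $l$, the same integer $r$ is also the least positive integer with $(lq^{i})q^{r}\equiv lq^{i}\pmod n$, for every $i$.

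First I would settle the cardinalities. For each $i$, the size $\bigl|C_{lq^{i}}^{(q^{t})}\bigr|$ is by definition the least positive integer $m$ with $(lq^{i})(q^{t})^{m}\equiv lq^{i}\pmod n$; cancelling the unit $q^{i}$ this reads $lq^{tm}\equiv l\pmod n$, i.e.\ $r\mid tm$, whose least positive solution is $m=r/\gcd(t,r)$. This value is independent of $i$, which yields at once $\bigl|C_{lq^{i}}^{(q^{t})}\bigr|=\bigl|C_{l}^{(q^{t})}\bigr|=|C_{l}^{(q)}|/\gcd\bigl(t,|C_{l}^{(q)}|\bigr)$, and in particular $a\cdot(r/a)=r$ with $a=\gcd(t,r)$.

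Next I would prove the set identity. The inclusion $C_{lq^{i}}^{(q^{t})}\subseteq C_{l}^{(q)}$ is immediate, since every element of the left side has the form $lq^{\,i+tm}$. For the reverse inclusion, take an arbitrary $lq^{\,j}\in C_{l}^{(q)}$ and let $i\in\{0,1,\dots,a-1\}$ be the residue of $j$ modulo $a$. The congruence $ts\equiv j-i\pmod r$ is solvable in integers $s$ precisely because $\gcd(t,r)=a$ divides $j-i$; for such an $s$ one has $lq^{\,i+ts}\equiv lq^{\,j}\pmod n$ by the criterion above. Since $q$ is a unit of finite order modulo $n$, $C_{lq^{i}}^{(q^{t})}=\{\,lq^{\,i+ts}\bmod n:s\in\mathbb{Z}\,\}$, so $lq^{\,j}\in C_{lq^{i}}^{(q^{t})}$. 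Thus $C_{l}^{(q)}=\bigcup_{i=0}^{a-1}C_{lq^{i}}^{(q^{t})}$.

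Finally, disjointness follows from the same bookkeeping --- if $lq^{\,j}$ lies in both $C_{lq^{i}}^{(q^{t})}$ and $C_{lq^{i'}}^{(q^{t})}$ then $i\equiv j\equiv i'\pmod a$, forcing $i=i'$ for $0\le i,i'\le a-1$ --- or, more slickly, by counting: the $a$ sets $C_{lq^{i}}^{(q^{t})}$ each have $r/a$ elements, their union is $C_{l}^{(q)}$, which has $r=a\cdot(r/a)$ elements, so the sets must be pairwise disjoint. The only genuinely delicate point is the exponent bookkeeping in the reverse inclusion, in particular pinning down the solvability criterion $a\mid(j-i)$ and checking that negative exponents cause no trouble; everything else is routine. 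One should also note that the degenerate case $l\equiv 0\pmod n$, where $r=a=1$ and every coset is $\{0\}$, is covered without modification.
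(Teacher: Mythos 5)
Your proof is correct. The paper gives no proof of this lemma at all --- it is quoted verbatim from Huffman's paper as background --- and your argument (reduce everything to the fact that $lq^{k}\equiv lq^{k'}\pmod n$ iff $r\mid k-k'$, then compute coset sizes via the least $m$ with $r\mid tm$ and sort elements into the subcosets by the residue of the exponent modulo $a=\gcd(t,r)$) is the standard one and matches how the result is established in the cited source. The exponent bookkeeping you flag as the delicate point is handled correctly, including the passage from the semigroup $\{s\ge 0\}$ to all of $\mathbb{Z}$, which is justified because $q$ has finite order modulo $n$.
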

\par
\begin{lemma}{\cite[Theorem 1]{Huffman1}}
The following hold.\\
(i) ${X^n} - 1 = {m_0}(X){m_1}(X) \cdots {m_{s - 1}}(X)$, where $m_i(X)$ is a monic irreducible polynomial over $\mathbb{F}_q$ for $0\leq i\leq s-1$ with ${m_i}(X) \leftrightarrow C_{{l_i}}^{(q)}$ and ${m_0}(X) =  - 1 + X \leftrightarrow C_0^{(q)} = \{ 0\} $ and $\leftrightarrow$ gives the association between ${m_i}(X)$ and cyclotomic coset.\\
(ii) For $0\leq i\leq s-1$, ${m_i}(X) = {M_{i,0}}(X){M_{i,1}}(X) \cdots {M_{i,{s_i} - 1}}(X)$, where ${M_{i,j}}(X)$ is a monic irreducible polynomial over $\mathbb{F}_{q^t}$ with ${M_{i,j}}(X) \leftrightarrow C_{{l_i}{q^j}}^{({q^t})}$ for $0 \le j \le {s_i} - 1$. Also ${m_0}(X) =  - 1 + X = {M_{0,0}}(X) \leftrightarrow C_0^{({q^t})} = \{ 0\} $ with $s_0=1$. Furthermore, the factorization of $X^n-1$ into monic irreducible polynomials over $\mathbb{F}_{q^t}$ is given by ${X^n} - 1 = \prod\nolimits_{i = 0}^{s - 1} {\prod\nolimits_{j = 0}^{{s_i} - 1} {{M_{i,j}}(X)} } $.\\
(iii) For $0\leq i\leq s-1$, $\deg {m_i}(X) = | {C_{{l_i}}^{(q)}} |$. In addition, ${s_i} = \gcd (t,| {C_{{l_i}}^{(q)}} |)$ and $\deg {M_{i,j}}(X) = | {C_{{l_i}{q^j}}^{({q^t})}} | = {{| {C_{{l_i}}^{(q)}} |} {\left/
 {\vphantom {{| {C_{{l_i}}^{(q)}} |} {\gcd (t,| {C_{{l_i}}^{(q)}} |)}}} \right.
 \kern-\nulldelimiterspace} {\gcd (t,| {C_{{l_i}}^{(q)}} |)}}$ for $0 \le j \le {s_i} - 1$.\\
 (iv) $\mathcal {R}_n^{(q)} = {\mathcal {K}_0} \oplus {\mathcal {K}_1} \oplus  \cdots  \oplus {\mathcal {K}_{s - 1}}$, where ${\mathcal {K}_i} \leftrightarrow C_{{l_i}}^{(q)}$ and ${\mathcal {K}_i}$ is the ideal of $\mathcal {R}_n^{(q)}$ generated by ${\hat{m_i}}(X) = {{({X^n} - 1)} \mathord{\left/
 {\vphantom {{{X^n} - 1} {{m_i}(X)}}} \right.
 \kern-\nulldelimiterspace} {{m_i}(X)}}$. ${\mathcal {K}_i} \cong {\mathbb{F}_{{q^{{d_i}}}}}$, where ${d_i} = | {C_{{l_i}}^{(q)}} |$ and $\mathcal {K}_i\mathcal {K}_j=\{0\}$ if $i \neq j$.\\
 (v) $\mathcal {R}_n^{({q^t})} = {\mathcal {I}_{0,0}} \oplus {\mathcal {I}_{1,0}} \oplus  \cdots  \oplus {\mathcal {I}_{1,{s_1} - 1}} \oplus  \cdots  \oplus {\mathcal {I}_{s - 1,0}} \oplus  \cdots  \oplus {\mathcal {I}_{s - 1,{s_{s - 1}} - 1}}$, where ${\mathcal {I}_{i,j}}\leftrightarrow C_{{l_i}{q^j}}^{({q^t})}$ and ${\mathcal {I}_{i,j}}$ is the ideal of $\mathcal {R}_n^{(q^t)}$ generated by ${\hat{M}_{i,j}}(X) = {{({X^n} - 1)} \mathord{\left/
 {\vphantom {{({X^n} - 1)} {{M_{i,j}}(X)}}} \right.
 \kern-\nulldelimiterspace} {{M_{i,j}}(X)}}$. ${\mathcal {I}_{i,j}} \cong {\mathbb{F}_{{q^{t{D_i}}}}}$, where ${D_i} = | {C_{{l_i}{q^j}}^{({q^t})}} | = {{{d_i}} \mathord{\left/
 {\vphantom {{{d_i}} {{s_i}}}} \right.
 \kern-\nulldelimiterspace} {{s_i}}}$ for $0 \le i \le s - 1,\;0 \le j \le {s_i} - 1$ and ${\mathcal {I}_{i,j}}{\mathcal {I}_{i',j'}}=\{0\}$ if $(i,j)\neq (i',j')$.
\end{lemma}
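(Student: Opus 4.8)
This is the standard minimal-ideal decomposition of the semisimple group algebras $\mathcal{R}_n^{(q)}$ and $\mathcal{R}_n^{(q^t)}$, made explicit through the relation between $q$- and $q^t$-cyclotomic cosets of Lemma 1; the plan is to set up the root/coset dictionary over $\mathbb{F}_q$, refine it over $\mathbb{F}_{q^t}$ using Lemma 1, and then read off the two ring decompositions via the Chinese Remainder Theorem. For (i) and the first half of (iii), I would start from the fact that $\gcd(n,q)=1$ forces $\gcd(X^n-1,nX^{n-1})=1$, so $X^n-1$ is separable over $\mathbb{F}_q$ and over every extension, and its roots are the $n$ distinct powers ${\eta'}^k$, $0\le k\le n-1$. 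The Frobenius $x\mapsto x^q$ permutes these roots, and the orbit of ${\eta'}^l$ is precisely $\{{\eta'}^k\mid k\in C_l^{(q)}\}$ because $({\eta'}^l)^{q^j}={\eta'}^{lq^j}$; hence the minimal polynomial over $\mathbb{F}_q$ of ${\eta'}^l$ is $\prod_{k\in C_l^{(q)}}(X-{\eta'}^k)$. These minimal polynomials are exactly the distinct monic irreducible factors $m_0(X),\dots,m_{s-1}(X)$ of $X^n-1$, with $m_0(X)=-1+X\leftrightarrow C_0^{(q)}=\{0\}$ and $m_i(X)\leftrightarrow C_{l_i}^{(q)}$ in general; and since $m_i(X)$ has exactly $|C_{l_i}^{(q)}|$ roots, $\deg m_i(X)=|C_{l_i}^{(q)}|$.

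For (ii) and the rest of (iii), I would rerun the same argument over $\mathbb{F}_{q^t}$ with the Frobenius $x\mapsto x^{q^t}$: the monic irreducible factors of $X^n-1$ over $\mathbb{F}_{q^t}$ are $\prod_{k\in C}(X-{\eta'}^k)$, one for each $q^t$-cyclotomic coset $C$ modulo $n$. Fixing $i$ and applying Lemma 1 to write $C_{l_i}^{(q)}=\bigcup_{j=0}^{a-1}C_{l_iq^j}^{(q^t)}$ as a disjoint union with $a=\gcd(t,|C_{l_i}^{(q)}|)$, I obtain $m_i(X)=\prod_{j=0}^{s_i-1}M_{i,j}(X)$, where $s_i=a=\gcd(t,|C_{l_i}^{(q)}|)$, $M_{i,j}(X)=\prod_{k\in C_{l_iq^j}^{(q^t)}}(X-{\eta'}^k)$ is monic irreducible over $\mathbb{F}_{q^t}$, and $M_{i,j}(X)\leftrightarrow C_{l_iq^j}^{(q^t)}$; for $i=0$, $|C_0^{(q)}|=1$ gives $s_0=1$ and $m_0(X)=M_{0,0}(X)=-1+X$. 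Multiplying over $i$ yields $X^n-1=\prod_{i=0}^{s-1}\prod_{j=0}^{s_i-1}M_{i,j}(X)$, and the remaining degree identity $\deg M_{i,j}(X)=|C_{l_iq^j}^{(q^t)}|=|C_{l_i}^{(q)}|/\gcd(t,|C_{l_i}^{(q)}|)$ is exactly the last assertion of Lemma 1.

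For (iv) and (v), I would invoke the Chinese Remainder Theorem in the principal ideal domain $\mathbb{F}_q[X]$ (resp.\ $\mathbb{F}_{q^t}[X]$): the $m_i(X)$ (resp.\ the $M_{i,j}(X)$) are pairwise coprime with product $X^n-1$, so reduction gives ring isomorphisms $\mathcal{R}_n^{(q)}\cong\prod_{i=0}^{s-1}\mathbb{F}_q[X]/\langle m_i(X)\rangle$ and $\mathcal{R}_n^{(q^t)}\cong\prod_{i,j}\mathbb{F}_{q^t}[X]/\langle M_{i,j}(X)\rangle$, each factor a field since its modulus is irreducible: $\mathbb{F}_q[X]/\langle m_i(X)\rangle\cong\mathbb{F}_{q^{d_i}}$ with $d_i=\deg m_i(X)=|C_{l_i}^{(q)}|$, and $\mathbb{F}_{q^t}[X]/\langle M_{i,j}(X)\rangle\cong\mathbb{F}_{(q^t)^{\deg M_{i,j}}}=\mathbb{F}_{q^{tD_i}}$ with $D_i=\deg M_{i,j}(X)=d_i/s_i=|C_{l_iq^j}^{(q^t)}|$. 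To identify $\mathcal{K}_i$ (resp.\ $\mathcal{I}_{i,j}$) as the ideal generated by $\hat{m_i}(X)=(X^n-1)/m_i(X)$ (resp.\ $\hat{M}_{i,j}(X)=(X^n-1)/M_{i,j}(X)$), I note that this generator lies in $\langle m_k(X)\rangle$ for every $k\ne i$, hence maps to $0$ in those components of the product, while in the remaining component it reduces to a nonzero element of a field and so is a unit; therefore its principal ideal is exactly the pullback of that single component, which is the minimal (simple) ideal attached to $C_{l_i}^{(q)}$. Finally, for the orthogonality relations, for $i\ne i'$ every $m_k(X)$ divides $\hat{m_i}(X)\hat{m_{i'}}(X)$ — the factor $m_i$ comes from $\hat{m_{i'}}$ and each other $m_k$ from $\hat{m_i}$ — so $X^n-1\mid\hat{m_i}(X)\hat{m_{i'}}(X)$, hence $\mathcal{K}_i\mathcal{K}_{i'}=\{0\}$; the same divisibility computation gives $\mathcal{I}_{i,j}\mathcal{I}_{i',j'}=\{0\}$ for $(i,j)\ne(i',j')$.

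Since this is essentially textbook material, there is no single genuinely hard step; the point needing care is the bookkeeping across the three layers — the Frobenius-orbit/coset correspondence at each field level and, above all, the refinement step, where Lemma 1 is used to split each $m_i(X)$ over $\mathbb{F}_{q^t}$ while correctly matching $s_i$ with $\gcd(t,|C_{l_i}^{(q)}|)$ and $\deg M_{i,j}(X)$ with $D_i=d_i/s_i$.
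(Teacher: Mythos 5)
Your proof is correct and is the standard argument: separability of $X^n-1$ from $\gcd(n,q)=1$, the Frobenius-orbit/cyclotomic-coset dictionary at the $\mathbb{F}_q$ and $\mathbb{F}_{q^t}$ levels refined via Lemma 1, and the Chinese Remainder Theorem to obtain the minimal-ideal decompositions, with the divisibility $X^n-1\mid\hat{m_i}(X)\hat{m_{i'}}(X)$ giving the orthogonality relations. The paper itself supplies no proof of this lemma --- it is quoted verbatim as Theorem 1 of the cited reference of Huffman --- and your reconstruction is exactly the argument that reference uses, so there is nothing to contrast.
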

\par
The relationship between the minimal ideals $\mathcal {K}_i$ and $\mathcal {I}_{i,j}$ determines the nature of the containment $\mathcal {R}_n^{(q)} \subset\mathcal {R}_n^{(q^t)}$. Huffman \cite{Huffman1} defined the ring automorphism ${\tau _{{q^w},u}}:\;\mathcal {R}_n^{({q^r})} \to \mathcal {R}_n^{({q^r})}$ as ${\tau _{{q^w},u}}(\sum\limits_{k = 0}^{n - 1} {{a_k}{X^k}} ) = \sum\limits_{k = 0}^{n - 1} {a_k^{{q^w}}{X^{uk}}} ,$ where $w, r, u$ are integers with $0 \le w \le r,\;1 \le u \le n - 1$ and $\gcd (u,n) = 1$. When $r=t$, the map ${\tau _{{q^w},u}}:\;\mathcal {R}_n^{({q^t})} \to \mathcal {R}_n^{({q^t})}$ is a ring automorphism and permutes the minimal ideals $\mathcal {I}_{i,j}$ given in Lemma 2 (v). This permutation action can be described completely by the cyclotomic cosets.
\par
\begin{lemma}{\cite[Lemma 2]{Huffman1}}
For $0 \le i \le s - 1$ and $0 \le j \le {s_i} - 1$, we have ${\tau _{{q^w},u}}({\mathcal {I}_{i,j}}) = {\mathcal {I}_{i',j'}}$, where $i'$ and $j'$ are determined as follows. If ${\mathcal {I}_{i,j}} \leftrightarrow C_l^{({q^t})}$, then ${\mathcal {I}_{i',j'}} \leftrightarrow C_{l{u^{ - 1}}{q^w}}^{({q^t})}$.
\end{lemma}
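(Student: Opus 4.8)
The plan is to identify each minimal ideal $\mathcal{I}_{i,j}$ through the evaluation homomorphisms at the roots of $X^n-1$ and then to track how $\tau_{q^w,u}$ acts on those evaluations. For each $k\in\{0,1,\dots,n-1\}$ let $\varepsilon_k\colon\mathcal{R}_n^{(q^t)}\to\mathbb{F}_{q^t}(\eta'^{\,k})$ be the $\mathbb{F}_{q^t}$-algebra homomorphism determined by $\varepsilon_k\bigl(\sum_r a_r X^r\bigr)=\sum_r a_r(\eta'^{\,k})^r$, which is well defined since $(\eta'^{\,k})^n=1$. If $\mathcal{I}_{i,j}\leftrightarrow C_l^{(q^t)}$, then the roots of $M_{i,j}(X)$ are precisely $\{\eta'^{\,k}:k\in C_l^{(q^t)}\}$, and since $\gcd(n,q)=1$ these roots are disjoint from those of the other $M_{i'',j''}(X)$. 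Hence for $k\in C_l^{(q^t)}$ one has $\hat M_{i,j}(\eta'^{\,k})\neq0$ while $\hat M_{i'',j''}(\eta'^{\,k})=0$ for $(i'',j'')\neq(i,j)$, so $\varepsilon_k$ is an isomorphism (of fields of size $q^{tD_i}$, by Lemma~2(v)) on $\mathcal{I}_{i,j}$ and is zero on every other minimal ideal. Thus $\mathcal{I}_{i,j}$ is the unique minimal ideal of $\mathcal{R}_n^{(q^t)}$ on which $\varepsilon_k$ does not vanish, for any (equivalently every) $k\in C_l^{(q^t)}$.

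Second, I would compute $\varepsilon_k\circ\tau_{q^w,u}$. As $\gcd(q,n)=1$, $q^w$ is invertible modulo $n$; write $q^{-w}$ for its inverse, so $(\eta'^{\,uk q^{-w}})^{q^w}=\eta'^{\,uk}$. Since the $q^w$-th power map is an additive and multiplicative bijection of $\mathbb{F}_{q^t}$, for $f=\sum_r a_r X^r$ we get
\[
\varepsilon_k\bigl(\tau_{q^w,u}(f)\bigr)=\sum_r a_r^{q^w}(\eta'^{\,uk})^r=\Bigl(\sum_r a_r\bigl(\eta'^{\,uk q^{-w}}\bigr)^r\Bigr)^{q^w}=\bigl(\varepsilon_{uk q^{-w}\bmod n}(f)\bigr)^{q^w},
\]
that is, $\varepsilon_k\circ\tau_{q^w,u}=(\,\cdot\,)^{q^w}\circ\varepsilon_{uk q^{-w}}$ on $\mathcal{R}_n^{(q^t)}$.

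Third, recall that $\tau_{q^w,u}$ is a ring automorphism of $\mathcal{R}_n^{(q^t)}$ (as noted just before the statement), so it permutes the minimal ideals and $\tau_{q^w,u}(\mathcal{I}_{i,j})=\mathcal{I}_{i',j'}$ for a unique $(i',j')$. Because raising to the $q^w$-th power is a bijection, $\varepsilon_k$ is nonzero on $\mathcal{I}_{i',j'}=\tau_{q^w,u}(\mathcal{I}_{i,j})$ if and only if $\varepsilon_{uk q^{-w}}$ is nonzero on $\mathcal{I}_{i,j}$, i.e. if and only if $uk q^{-w}\in C_l^{(q^t)}$. By the characterization of the first paragraph applied to $\mathcal{I}_{i',j'}$, this forces $\mathcal{I}_{i',j'}\leftrightarrow C_{l'}^{(q^t)}$ with $C_{l'}^{(q^t)}=\{k:uk q^{-w}\in C_l^{(q^t)}\}$. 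Writing out the latter: $uk q^{-w}\equiv l q^{ta}\pmod n$ for some $a\geq0$ is equivalent to $k\equiv(l u^{-1}q^{w})q^{ta}\pmod n$, i.e. $k\in C_{l u^{-1}q^{w}}^{(q^t)}$ (here $u^{-1}$ is the inverse of $u$ modulo $n$). Hence $\mathcal{I}_{i',j'}\leftrightarrow C_{l u^{-1}q^{w}}^{(q^t)}$, which is the claim.

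The only delicate step is the identification in the first paragraph, namely that Lemma~2's correspondence $\leftrightarrow$ is precisely ``being detected by $\varepsilon_k$ for $k$ in the associated coset''; this rests only on $\gcd(n,q)=1$, and everything afterward is routine Frobenius-and-index bookkeeping. An alternative route avoids evaluation maps entirely: write the primitive idempotent generating $\mathcal{I}_{i,j}$ via its Mattson--Solomon coefficients $\tfrac1n\sum_{k\in C_l^{(q^t)}}\eta'^{-rk}$, apply $\tau_{q^w,u}$ to it, and reindex the exponent by $u^{-1}q^{w}$; the coset $C_{l u^{-1}q^{w}}^{(q^t)}$ emerges the same way. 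I would present the evaluation-map version, since it keeps the index chase cleanest.
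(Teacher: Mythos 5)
Your argument is correct. Note that the paper itself offers no proof of this statement: it is quoted verbatim as \cite[Lemma 2]{Huffman1}, so there is nothing in-paper to compare against. Your route --- identifying $\mathcal{I}_{i,j}$ as the unique minimal ideal not killed by the evaluation maps $\varepsilon_k$ for $k$ in the associated cyclotomic coset (valid because $\gcd(n,q)=1$ makes the roots of the $M_{i'',j''}$ disjoint), establishing the intertwining relation $\varepsilon_k\circ\tau_{q^w,u}=(\,\cdot\,)^{q^w}\circ\varepsilon_{ukq^{-w}}$, and then reading off the coset of the image ideal --- is the standard one and is essentially how the cited source argues, by tracking how the automorphism permutes the ``nonzeros'' of the minimal ideals. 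All the index bookkeeping checks out: $ukq^{-w}\in C_l^{(q^t)}$ is indeed equivalent to $k\in C_{lu^{-1}q^w}^{(q^t)}$, which is the asserted correspondence.
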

\par
In \cite{Huffman1}, we have ${\mathcal {K}_i} \subset {\mathcal {J}_i} = {\mathcal {I}_{i,0}} \oplus {\mathcal {I}_{i,1}} \oplus  \cdots  \oplus {\mathcal {I}_{i,{s_i} - 1}}$ for $0 \le i \le s - 1$. So we can determine the precise containment.
\par
\begin{lemma}{\cite[Theorem 2]{Huffman1}}
For $0 \le i \le s - 1$, \[\begin{array}{l}
{\mathcal {K}_i} = \{ f(X) \in {\mathcal {J}_i}|{\tau _{q,1}}(f(X)) = f(X)\} \\
\;\;\;\;\; = \{ c(X) + {\tau _{q,1}}(c(X)) +  \cdots  + {\tau _{{q^{{s_i} - 1}},1}}(c(X))|c(X) \in {\mathcal {I}_{i,0}}\;\\
\;\;\;\;\;\;\;\;\;{\rm{and}}\;{\tau _{{q^{{s_i}}},1}}(c(X)) = c(X)\} .
\end{array}\]
\end{lemma}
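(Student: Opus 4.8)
The plan is to establish the two displayed equalities separately, using only the structural facts in Lemmas 1--3 together with the elementary observation that $\mathbb{F}_q$ is exactly the subfield of $\mathbb{F}_{q^t}$ fixed by the $q$-th power map. Throughout I write $f(X) = \sum_{k=0}^{n-1} a_k X^k$ and use that $\tau_{q,1}^{\,j} = \tau_{q^{j},1}$ on $\mathcal{R}_n^{(q^t)}$.

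For the first equality, note that $\tau_{q,1}(f(X)) = f(X)$ holds precisely when $a_k^{q} = a_k$ for all $k$, i.e.\ precisely when $f(X) \in \mathcal{R}_n^{(q)}$; hence $\mathcal{R}_n^{(q)} = \{ f(X) \in \mathcal{R}_n^{(q^t)} : \tau_{q,1}(f(X)) = f(X)\}$. Since $\mathcal{K}_i \subseteq \mathcal{R}_n^{(q)}$ and $\mathcal{K}_i \subseteq \mathcal{J}_i$ (the containment recalled just before the statement), the inclusion ``$\subseteq$'' is immediate. For ``$\supseteq$'', take $f(X) \in \mathcal{J}_i$ with $\tau_{q,1}(f(X)) = f(X)$; then $f(X) \in \mathcal{R}_n^{(q)} = \mathcal{K}_0 \oplus \cdots \oplus \mathcal{K}_{s-1}$, say $f(X) = \sum_{k} g_k$ with $g_k \in \mathcal{K}_k$. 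As $\mathcal{K}_k \subseteq \mathcal{J}_k$ and $\mathcal{R}_n^{(q^t)} = \mathcal{J}_0 \oplus \cdots \oplus \mathcal{J}_{s-1}$ (from Lemma 2(v) and the definition of $\mathcal{J}_i$), this is also the decomposition of $f(X)$ along the $\mathcal{J}_k$; but $f(X) \in \mathcal{J}_i$, so $g_k = 0$ for $k \neq i$ and $f(X) = g_i \in \mathcal{K}_i$.

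For the second equality, I would first describe how $\tau_{q,1}$ acts on $\mathcal{J}_i$: applying Lemma 3 with $w = u = 1$ to $\mathcal{I}_{i,j} \leftrightarrow C_{l_i q^{j}}^{(q^t)}$ gives $\tau_{q,1}(\mathcal{I}_{i,j}) = \mathcal{I}_{i,j+1}$ for $0 \le j \le s_i - 2$, while for $j = s_i - 1$ Lemma 1 (with $l = l_i$, $a = s_i$) gives $C_{l_i q^{s_i}}^{(q^t)} = C_{l_i}^{(q^t)}$, so $\tau_{q,1}(\mathcal{I}_{i,s_i-1}) = \mathcal{I}_{i,0}$; as the cosets $C_{l_i q^{j}}^{(q^t)}$ ($0 \le j \le s_i - 1$) are pairwise distinct, $\tau_{q,1}$ restricts to a single $s_i$-cycle on $\{\mathcal{I}_{i,0}, \dots, \mathcal{I}_{i,s_i-1}\}$. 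Now write $f(X) \in \mathcal{J}_i$ as $f(X) = \sum_{j=0}^{s_i-1} f_j$ with $f_j \in \mathcal{I}_{i,j}$. Comparing components in $\mathcal{J}_i = \bigoplus_j \mathcal{I}_{i,j}$, the condition $\tau_{q,1}(f(X)) = f(X)$ is equivalent to $f_{j+1} = \tau_{q,1}(f_j)$ for all $j$ modulo $s_i$, hence to $f_j = \tau_{q^{j},1}(f_0)$ for $0 \le j \le s_i - 1$ together with the wrap-around relation $\tau_{q^{s_i},1}(f_0) = f_0$. Setting $c(X) = f_0$ exhibits $f(X) = c(X) + \tau_{q,1}(c(X)) + \cdots + \tau_{q^{s_i-1},1}(c(X))$ with $c(X) \in \mathcal{I}_{i,0}$ and $\tau_{q^{s_i},1}(c(X)) = c(X)$; conversely, any such $c(X)$ produces via this formula an element of $\mathcal{J}_i$ fixed by $\tau_{q,1}$, since $\tau_{q,1}$ shifts the terms cyclically and $\tau_{q^{s_i},1}(c(X)) = c(X)$.

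The only genuinely delicate step is the cyclotomic-coset computation showing that $\tau_{q,1}$ restricts to a cyclic permutation of order exactly $s_i$ on the minimal summands of $\mathcal{J}_i$; once that is in hand, both equalities are bookkeeping in the two direct-sum decompositions. As a consistency check (not part of the proof), the map $c(X) \mapsto \sum_{j=0}^{s_i-1}\tau_{q^{j},1}(c(X))$ is injective on $\{c(X) \in \mathcal{I}_{i,0} : \tau_{q^{s_i},1}(c(X)) = c(X)\}$, so this set has $\mathbb{F}_q$-dimension $\dim_{\mathbb{F}_q}\mathcal{K}_i = d_i$, in agreement with Lemma 2(iv)--(v).
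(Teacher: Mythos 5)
Your proof is correct. Note that the paper itself offers no argument for this statement: it is quoted verbatim as Theorem~2 of Huffman's paper, so there is nothing internal to compare against; your write-up supplies the standard argument (Galois descent: the fixed ring of $\tau_{q,1}$ on $\mathcal{R}_n^{(q^t)}$ is $\mathcal{R}_n^{(q)}$, intersected with $\mathcal{J}_i$ via the two compatible direct-sum decompositions, followed by componentwise bookkeeping in $\mathcal{J}_i=\bigoplus_j\mathcal{I}_{i,j}$), which is surely what Huffman does. The only step stated faster than it is justified is the wrap-around $C_{l_iq^{s_i}}^{(q^t)}=C_{l_i}^{(q^t)}$: Lemma~1 does not literally assert this, and one should either run the short Bezout computation ($s_i=\gcd(t,d_i)\equiv ut\pmod{d_i}$ for some $u$, so $l_iq^{s_i}\in C_{l_i}^{(q^t)}$) or observe that $\tau_{q,1}$ is a bijection on minimal ideals preserving $\mathcal{J}_i$, so after sending $\mathcal{I}_{i,j}$ to $\mathcal{I}_{i,j+1}$ for $j<s_i-1$ the only available image of $\mathcal{I}_{i,s_i-1}$ is $\mathcal{I}_{i,0}$; either patch is one line, so this is not a genuine gap.
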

\par
Now, we define the trace map $T{r_{Q,r}}:\;{\mathbb{F}_{{Q^r}}} \to {\mathbb{F}_Q}$ by $T{r_{Q,r}}(b) = \sum\nolimits_{w = 0}^{r - 1} {{b^{{Q^w}}}} $ for $b\in \mathbb{F}_{{Q^r}}$, where $Q$ is a power of $q$ and $r$ is a positive integer.
Let $t=2^am$, $A=2^{a-1}$ and $Q=q^A$, where $a\geq 1$ and $m$ is odd. Then $t=2Am$. Since $2A\left| t \right.$, ${\mathbb{F}_{{Q^2}}} = {\mathbb{F}_{{q^{2A}}}}$ is a subfield of $\mathbb{F}_{q^t}$. By \cite{Huffman1}, there exists an element $0\neq\gamma  \in {\mathbb{F}_{{Q^2}}} \subseteq {\mathbb{F}_{{q^t}}}$ such that $T{r_{Q,2}}(\gamma ) = \gamma  + {\gamma ^Q} = 0$, so $\gamma^Q=-\gamma$. When $t=2$, we have $\gamma^q=-\gamma$. Huffman \cite{Huffman1} determined the Hermitian trace inner product as follows.
\[{\left\langle {a,b} \right\rangle _\gamma } = \sum\limits_{i = 0}^{n - 1} {T{r_{q,t}}(\gamma {a_i}b_i^{{q^{{t \mathord{\left/
 {\vphantom {t 2}} \right.
 \kern-\nulldelimiterspace} 2}}}}) = } \sum\limits_{i = 0}^{n - 1} {\sum\limits_{w = 0}^{t - 1} {{{(\gamma {a_i}b_i^{{q^{{t \mathord{\left/
 {\vphantom {t 2}} \right.
 \kern-\nulldelimiterspace} 2}}}})}^{{q^w}}}} } ,\]
 where $a = ({a_0},{a_1}, \cdots ,{a_{n - 1}}),\;b = ({b_0},{b_1}, \cdots, {b_{n - 1}} )\in \mathbb{F}_{{q^t}}^n$.
 Then, the Hermitian trace bilinear form on $\mathcal {R}_n^{(q^t)}$ was determined as bellow.
 \[{(a(X),b(X))_\gamma } = \sum\limits_{w = 0}^{t - 1} {{\tau _{{q^w},1}}} (\gamma a(X){\tau _{{q^{{t \mathord{\left/
 {\vphantom {t 2}} \right.
 \kern-\nulldelimiterspace} 2}}}, - 1}}(b(X))),\]
 where $a(X)=a_0+a_1X+ \cdots +a_{n-1}X^{n-1},\;b(X)=b_0+b_1X+ \cdots +b_{n-1}X^{n-1} \in \mathcal {R}_n^{({q^t})}$.
 \par
If $a=(a_0,a_1,\ldots, a_{n-1})\in \mathbb{F}_{{q^t}}^n$, define $\sigma (a) = (a_{n-1},a_0,a_1,\ldots , a_{n-2})$ is the cyclic shift of $a$. A permutation $\mu$ of $\{0,1,2,\ldots,s-1\}$ is defined by ${\tau _{1, - 1}}({\mathcal {J}_i}) = {\tau _{{q^{{t \mathord{\left/
 {\vphantom {t 2}} \right.
 \kern-\nulldelimiterspace} 2}}}, - 1}}({\mathcal {J}_i}) = {\mathcal {J}_{\mu (i)}}$ for $0 \le i \le s - 1$.  At the end of this section, we list several basic Lemmas 5-7 needed in the following sections.
 \par
\begin{lemma}{\cite[Theorem 7]{Huffman1}}
Let $\mathcal {C} = {\mathcal {C}_0} \oplus {\mathcal {C}_1} \oplus  \cdots  \oplus {\mathcal {C}_{s - 1}}$ and ${\mathcal {C}^{{ \bot _\gamma }}} = \mathcal {C}_0^{'} \oplus \mathcal {C}_1^{'} \oplus  \cdots  \oplus \mathcal {C}_{s - 1}^{'}$, where ${\mathcal {C}^{{ \bot _\gamma }}} = \{ v \in \mathbb{F}_{{q^t}}^n|{\langle c,v\rangle_\gamma } = 0\;{\rm{for}}\;{\rm{all}}\;c \in \mathcal {C}\} $, ${\mathcal {C}_i} = \mathcal {C} \cap {\mathcal {J}_i}$ and $\mathcal {C}_i^{'} = {\mathcal {C}^{{ \bot _\gamma }}} \cap {\mathcal {J}_i}$ for all $0\leq i\leq s-1$. Then, for each $i$, we have \[\mathcal {C}_{\mu (i)}^{'} = \{ a(X) \in {\mathcal {J}_{\mu (i)}}|{(c(X),a(X))_\gamma } = 0\;{\rm{for}}\;{\rm{all}}\;c(X) \in {\mathcal {C}_i}\} .\]
Furthermore, if the $\mathcal {K}_i$-dimension of $\mathcal {C}_i$ is $k_i$, then the $\mathcal {K}_{\mu(i)}$-dimension of $\mathcal {C}_{\mu(i)}^{'}$ is $t-k_i$.
 \end{lemma}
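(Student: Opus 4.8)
The strategy is to transport the vector form $\langle\cdot,\cdot\rangle_\gamma$ on $\mathbb F_{q^t}^n$ to the polynomial form $(\cdot,\cdot)_\gamma$ on $\mathcal R_n^{(q^t)}$, and then to exploit that $(\cdot,\cdot)_\gamma$ is ``block anti-diagonal'' for the decomposition $\mathcal R_n^{(q^t)}=\bigoplus_i\mathcal J_i$, pairing $\mathcal J_i$ only with $\mathcal J_{\mu(i)}$. First I record two facts about $\tau:=\tau_{q^{t/2},-1}$. Since $(-1)^2\equiv 1\pmod n$, the composite $\tau^2=\tau_{q^t,1}$ is the identity on $\mathcal R_n^{(q^t)}$, so $\mu$ is an involution; and applying Lemma 3 to the $\mathcal I_{i,j}$'s with the cyclotomic-coset bookkeeping gives $\tau(\mathcal J_i)=\mathcal J_{\mu(i)}$ and $\tau_{q^w,1}(\mathcal J_i)=\mathcal J_i$ for every $w$. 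Hence, for $a(X)\in\mathcal J_i$ and $b(X)\in\mathcal J_{i'}$, the element $\gamma\,a(X)\,\tau(b(X))$ lies in $\mathcal J_i\mathcal J_{\mu(i')}$, which is $\{0\}$ unless $\mu(i')=i$, i.e.\ $i'=\mu(i)$, by Lemma 2(v); applying $\sum_w\tau_{q^w,1}$ then shows $(a(X),b(X))_\gamma=0$ whenever $i'\neq\mu(i)$. When $i'=\mu(i)$ the outer sum over $w$ is cyclic of length $t$, so $\tau_{q,1}\bigl((a(X),b(X))_\gamma\bigr)=(a(X),b(X))_\gamma$, and since this element also lies in $\mathcal J_i$, Lemma 4 puts it in $\mathcal K_i$. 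Thus $(\cdot,\cdot)_\gamma$ restricts to a form $B_i:\mathcal J_i\times\mathcal J_{\mu(i)}\to\mathcal K_i$ that is $\mathcal K_i$-linear in the first slot and semilinear in the second via the field isomorphism $\mathcal K_{\mu(i)}\to\mathcal K_i$ induced by $\tau$.

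\emph{Identifying $\mathcal C_{\mu(i)}'$.} A direct expansion shows that the coefficient of $X^{\,n-r}$ in $(a(X),b(X))_\gamma$ equals $\langle\sigma^r(a),b\rangle_\gamma$ for $0\le r\le n-1$: one has $(X^r a(X),b(X))_\gamma=X^r(a(X),b(X))_\gamma$ because $\tau_{q^w,1}(X^r)=X^r$, and the constant term of $(a(X),b(X))_\gamma$ equals $\mathrm{Tr}_{q,t}\bigl(\gamma\sum_j a_j b_j^{q^{t/2}}\bigr)=\langle a,b\rangle_\gamma$. Consequently $(a(X),b(X))_\gamma=0$ in $\mathcal R_n^{(q^t)}$ if and only if $\sigma^r(a)$ is $\langle\cdot,\cdot\rangle_\gamma$-orthogonal to $b$ for all $r$; since $\mathcal C$ is cyclic this yields $\mathcal C^{\perp_\gamma}=\{v(X):(a(X),v(X))_\gamma=0\text{ for all }a(X)\in\mathcal C\}$. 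Now write $v=\sum_{i'}v_{i'}$ and $c=\sum_i c_i$ with $v_{i'}\in\mathcal J_{i'}$, $c_i\in\mathcal C_i$; by the previous paragraph $(c,v)_\gamma=\sum_i(c_i,v_{\mu(i)})_\gamma$ with the $i$-th summand lying in $\mathcal J_i$, so directness of $\bigoplus_i\mathcal J_i$ together with $\mathcal C=\bigoplus_i\mathcal C_i$ gives: $v\in\mathcal C^{\perp_\gamma}$ iff $(c_i,v_{\mu(i)})_\gamma=0$ for all $i$ and all $c_i\in\mathcal C_i$. Comparing with $\mathcal C^{\perp_\gamma}=\bigoplus_i\mathcal C_i'$ and invoking uniqueness of the $\bigoplus_i\mathcal J_i$-decomposition gives exactly the asserted description of $\mathcal C_{\mu(i)}'$.

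\emph{The dimension.} It remains to prove $B_i$ is nondegenerate. If $0\neq c_i\in\mathcal J_i$ paired trivially with all of $\mathcal J_{\mu(i)}$, then by the first paragraph it pairs trivially with all of $\mathcal R_n^{(q^t)}$, so every cyclic shift of $c_i$ is $\langle\cdot,\cdot\rangle_\gamma$-orthogonal to every vector of $\mathbb F_{q^t}^n$; but $\langle\cdot,\cdot\rangle_\gamma$ is nondegenerate on $\mathbb F_{q^t}^n$, since for any coordinate with $a_j\neq0$ the map $\beta\mapsto\mathrm{Tr}_{q,t}(\gamma a_j\beta^{q^{t/2}})$ is a nonzero $\mathbb F_q$-linear functional on $\mathbb F_{q^t}$ (as $\gamma\neq0$ and the trace is surjective). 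Hence $c_i=0$. Since $\dim_{\mathcal K_i}\mathcal J_i=\dim_{\mathcal K_{\mu(i)}}\mathcal J_{\mu(i)}=t$ by Lemma 2, $B_i$ is a perfect pairing of $t$-dimensional spaces, so the right annihilator of $\mathcal C_i$ — which by the previous paragraph is precisely $\mathcal C_{\mu(i)}'$ — has dimension $t-k_i$; the second-slot semilinearity is harmless here because $[\mathcal K_i:\mathbb F_q]=[\mathcal K_{\mu(i)}:\mathbb F_q]=|C_{l_i}^{(q)}|$, so $\mathcal K_{\mu(i)}$-dimensions agree with $\mathcal K_i$-dimensions under the twist. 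The main obstacle is not conceptual but organizational: carefully verifying in the first paragraph, via Lemma 3, that $\tau_{q^{t/2},-1}$ realizes exactly the involution $\mu$ on the $\mathcal J_i$ and that $\mathcal J_i\mathcal J_{i'}$ vanishes off the anti-diagonal, and then running the perfect-pairing/annihilator count cleanly across the two a~priori different fields $\mathcal K_i$ and $\mathcal K_{\mu(i)}$.
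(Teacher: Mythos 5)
Your argument is correct, and it is essentially the standard route: the paper itself does not prove this lemma (it is imported verbatim from Huffman's Theorem 7), but the paper's treatment of the parallel $\Delta$-statement (Lemma 12, Corollary 1, Theorem 1) follows exactly your strategy of showing the form is block anti-diagonal with respect to $\mathcal{R}_n^{(q^t)}=\bigoplus_i\mathcal{J}_i$, pairing $\mathcal{J}_i$ only with $\mathcal{J}_{\mu(i)}$, and then running a nondegeneracy/annihilator dimension count over $\mathcal{K}_i$. Your handling of the two subtler points — that the pairing lands in $\mathcal{K}_i$ because $\tau_{q,1}$ fixes the cyclic sum (Lemma 4), and that the second-slot semilinearity does not disturb the dimension count since $[\mathcal{K}_i:\mathbb{F}_q]=[\mathcal{K}_{\mu(i)}:\mathbb{F}_q]$ — is sound.
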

\par
\begin{lemma}{\cite[Lemma 10]{Huffman1}}
Suppose that $\mu(i)=i$, where $1 \le i \le s - 1$. The following hold.\\
(i) $d_i$ is even unless $n$ is even and $l_i={n \mathord{\left/
 {\vphantom {n 2}} \right.
 \kern-\nulldelimiterspace} 2}$.\\
(ii) If $n$ is even, for some $1 \le i^{\#} \le s - 1$, then ${l_{{i^\# }}} = {n \mathord{\left/
 {\vphantom {n 2}} \right.
 \kern-\nulldelimiterspace} 2}$, $C_{{l_{{i^\# }}}}^{(q)} = C_{{l_{{i^\# }}}}^{({q^t})} = \{ {n \mathord{\left/
 {\vphantom {n 2}} \right.
 \kern-\nulldelimiterspace} 2}\} $, ${d_{{i^\# }}} = {s_{{i^\# }}} = 1$ and $\mu ({i^\# }) = {i^\# }$.\\
 (iii) If $i\neq i^{\#}$ and $s_i$ is odd, then ${\tau _{1, - 1}}({\mathcal {I}_{i,j}}) = {\mathcal {I}_{i,j}}$ for all $0\leq j<s_i$.\\
 (iv) If $i\neq i^{\#}$ and $s_i$ is even, then either ${\tau _{1, - 1}}({\mathcal {I}_{i,j}}) = {\mathcal {I}_{i,j}}$ for all $0\leq j<s_i$ or ${\tau _{1, - 1}}({\mathcal {I}_{i,j}}) = {\mathcal {I}_{i,j+{{{s_i}} \mathord{\left/
 {\vphantom {{{s_i}} 2}} \right.
 \kern-\nulldelimiterspace} 2}}}$ for all $0\leq j<s_i$, where the second subscript is computed modulo $s_i$.\\
 (v) If $i\neq i^{\#}$ and ${\tau _{1, - 1}}({\mathcal {I}_{i,0}}) = {\mathcal {I}_{i,0}}$, then $D_i$ is even and $ - {l_i} \equiv {l_i}{q^{{{{D_i}} \mathord{\left/
 {\vphantom {{{D_i}} 2}} \right.
 \kern-\nulldelimiterspace} 2}}}\;(\bmod \;n)$.
 \end{lemma}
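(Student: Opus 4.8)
The plan is to translate the whole statement into facts about $q$- and $q^{t}$-cyclotomic cosets modulo $n$, and then argue with multiplicative orders. By Lemma~3 with $w=0$ and $u=-1$ (so that $u^{-1}\equiv -1 \pmod n$), the automorphism $\tau_{1,-1}$ carries the minimal ideal attached to $C_{l}^{(q^{t})}$ to the one attached to $C_{-l}^{(q^{t})}$; since $\mathcal{J}_{i}\leftrightarrow C_{l_{i}}^{(q)}$ and $\bigcup_{j=0}^{s_{i}-1} C_{l_{i}q^{j}}^{(q^{t})}=C_{l_{i}}^{(q)}$ by Lemma~1, this gives $\tau_{1,-1}(\mathcal{J}_{i})\leftrightarrow C_{-l_{i}}^{(q)}$. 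Hence the standing hypothesis $\mu(i)=i$ is equivalent to $C_{-l_{i}}^{(q)}=C_{l_{i}}^{(q)}$, that is, to the existence of an integer $b\ge 0$ with $-l_{i}\equiv l_{i}q^{b}\pmod n$. Setting $n_{i}=n/\gcd(l_{i},n)$, this reads $q^{b}\equiv -1 \pmod{n_{i}}$, and one has $d_{i}=|C_{l_{i}}^{(q)}|=\mathrm{ord}_{n_{i}}(q)$ while $D_{i}=|C_{l_{i}}^{(q^{t})}|=\mathrm{ord}_{n_{i}}(q^{t})=d_{i}/s_{i}$. Throughout I would use $\tau_{1,-1}^{2}=\tau_{1,1}=\mathrm{id}$.

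I would settle (ii) first, since it isolates the exceptional coset. If $n$ is even then $\gcd(n,q)=1$ forces $q$ odd, so $(n/2)(q-1)\equiv 0\pmod n$ and $C_{n/2}^{(q)}=\{n/2\}$ is a single $q$-cyclotomic coset; let $i^{\#}$ be its index. Then $d_{i^{\#}}=1$, $s_{i^{\#}}=\gcd(t,1)=1$, and $|C_{n/2}^{(q^{t})}|=1$ by Lemma~1, so $C_{n/2}^{(q^{t})}=\{n/2\}$; finally $-n/2\equiv n/2\pmod n$ yields $\mu(i^{\#})=i^{\#}$. Part (i) is then immediate: for $i\ne i^{\#}$ we have $l_{i}\not\equiv 0$ and $l_{i}\ne n/2$, so $n_{i}\ge 3$ and $-1\not\equiv 1\pmod{n_{i}}$; from $q^{b}\equiv -1\pmod{n_{i}}$ we get $q^{2b}\equiv 1$ but $q^{b}\not\equiv 1\pmod{n_{i}}$, hence $d_{i}\mid 2b$ and $d_{i}\nmid b$, which is impossible for odd $d_{i}$ (the case $n$ even, $l_{i}=n/2$, $d_{i}=1$ being the stated exception).

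For (iii) and (iv) I would describe how $\tau_{1,-1}$ permutes $\mathcal{I}_{i,0},\dots,\mathcal{I}_{i,s_{i}-1}$ inside $\mathcal{J}_{i}$ (it does, since $\mu(i)=i$). Because $C_{l_{i}q^{j}}^{(q^{t})}=C_{l_{i}q^{j'}}^{(q^{t})}$ exactly when $j\equiv j'\pmod{s_{i}}$, the relation $-l_{i}q^{j}\equiv l_{i}q^{j+b}\pmod n$ shows that $\tau_{1,-1}$ acts on $\{0,\dots,s_{i}-1\}$ as the translation $j\mapsto j+b\pmod{s_{i}}$, and $\tau_{1,-1}^{2}=\mathrm{id}$ forces $2b\equiv 0\pmod{s_{i}}$. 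If $s_{i}$ is odd this gives $b\equiv 0\pmod{s_{i}}$, so $\tau_{1,-1}(\mathcal{I}_{i,j})=\mathcal{I}_{i,j}$ for all $j$, which is (iii); if $s_{i}$ is even then $b\equiv 0$ or $b\equiv s_{i}/2\pmod{s_{i}}$, which are the two alternatives of (iv), and the alternative is uniform in $j$ because $b$ depends only on $i$. For (v), the hypothesis $\tau_{1,-1}(\mathcal{I}_{i,0})=\mathcal{I}_{i,0}$ says $C_{-l_{i}}^{(q^{t})}=C_{l_{i}}^{(q^{t})}$, i.e. $(q^{t})^{c}\equiv -1\pmod{n_{i}}$ for some $0\le c<D_{i}$; since $i\ne i^{\#}$ gives $n_{i}\ge 3$, the order argument of (i) applied to $q^{t}$ (of order $D_{i}$ mod $n_{i}$) shows $D_{i}$ is even, with $c=D_{i}/2$; and, $d_{i}$ being even by (i), the unique order-$2$ element of $\langle q\rangle\le(\mathbb{Z}/n_{i})^{\times}$ is $q^{d_{i}/2}\equiv -1$, so $-l_{i}\equiv l_{i}q^{d_{i}/2}\pmod n$, which is the remaining congruence in (v).

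All of these steps are short once the cyclotomic-coset dictionary is set up; the one place that genuinely needs care is to exclude the exceptional index $i^{\#}$ (which exists exactly when $n$ is even) from every order argument, since such an argument collapses precisely when $n_{i}=2$, i.e. when $1\equiv -1\pmod{n_{i}}$. The remaining bookkeeping is simply keeping $d_{i}$, $s_{i}=\gcd(t,d_{i})$ and $D_{i}=d_{i}/s_{i}$ straight when passing between $q$ and $q^{t}$ in parts (i) and (v).
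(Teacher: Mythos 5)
This lemma is quoted verbatim from Huffman's paper (his Lemma 10) and the present paper supplies no proof of its own, so the only thing to compare against is the cited source; your reduction to cyclotomic cosets and multiplicative orders is exactly the standard argument there, and the steps all check out: the dictionary $\mu(i)=i\Leftrightarrow q^{b}\equiv-1\pmod{n_i}$ with $n_i=n/\gcd(l_i,n)$, the use of $i\neq 0,\,i\neq i^{\#}$ to force $n_i\ge 3$ so that $-1\not\equiv 1$, and the description of $\tau_{1,-1}$ as the translation $j\mapsto j+b\pmod{s_i}$ with $2b\equiv 0\pmod{s_i}$ are all correct and yield (i)--(iv) cleanly. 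One caveat on (v): what you actually prove is $-l_i\equiv l_iq^{d_i/2}\equiv l_iq^{tD_i/2}\pmod n$, which is \emph{not} literally the printed congruence $-l_i\equiv l_iq^{D_i/2}\pmod n$. Your version is the correct one --- the printed exponent is a misprint for $tD_i/2$ (it is the exponent Lemma 7(iii) uses, and the printed form already fails for $q=3$, $t=2$, $n=5$, $l_i=1$, where $d_i=4$, $D_i=2$ and $-1\equiv 3^{2}\not\equiv 3^{1}\pmod 5$) --- but you should state explicitly that $q^{d_i/2}$ and $q^{tD_i/2}$ coincide as the unique element of order $2$ in $\langle q\rangle\le(\mathbb{Z}/n_i\mathbb{Z})^{\times}$, rather than silently identifying your conclusion with the statement's exponent.
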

\par
\begin{lemma}{\cite[Lemma 11]{Huffman1}}
If $n$ is even, let $i^{\#}$ be chosen so that ${l_{{i^\# }}} = {n \mathord{\left/
 {\vphantom {n 2}} \right.
 \kern-\nulldelimiterspace} 2}$. The following hold.\\
 (i) ${\tau _{1, - 1}}$ is the identity map on $\mathcal {J}_0$ and ${\tau _{q,1}}(a(X)) = a{(X)^q}$ for all $a(X)\in \mathcal {J}_0$.\\
 (ii) If $n$ is even, then ${\tau _{1, - 1}}$ is the identity map on ${\mathcal {J}_{{i^\# }}} = {\mathcal {I}_{{i^\# },0}}$ and ${\tau _{q,1}}(a(X)) = a{(X)^q}$ for all $a(X)\in \mathcal {J}_{i^\#}$.\\
 (iii) Suppose $i\neq 0$ and $i\neq i^\#$. If ${\tau _{1, - 1}}({\mathcal {I}_{i,j}}) = {\mathcal {I}_{i,j}}$, then $\mu(i)=i$ and ${\tau _{1, - 1}}(a(X)) = a{(X)^{{q^{{{t{D_i}} \mathord{\left/
 {\vphantom {{t{D_i}} 2}} \right.
 \kern-\nulldelimiterspace} 2}}}}}$ for all $a(X)\in \mathcal {I}_{i,j}$.\\
 (iv) Suppose $i\neq 0$, $i\neq i^\#$ and $t$ is even. If ${\tau _{{q^{{t \mathord{\left/
 {\vphantom {t 2}} \right.
 \kern-\nulldelimiterspace} 2}}}, - 1}}({\mathcal {I}_{i,j}}) = {\mathcal {I}_{i,j}}$, then $\mu(i)=i$ and ${\tau _{{q^{{t \mathord{\left/
 {\vphantom {t 2}} \right.
 \kern-\nulldelimiterspace} 2}}}, - 1}}(a(X)) = a{(X)^{{q^{{{t{D_i}} \mathord{\left/
 {\vphantom {{t{D_i}} 2}} \right.
 \kern-\nulldelimiterspace} 2}}}}}$ for all $a(X)\in \mathcal {I}_{i,j}$.\\
 (v) If $i\neq 0$, $i\neq i^\#$ and $t$ is even, there does not exist $j$ with $0\leq j< s_i$ such that both ${\tau _{1, - 1}}({\mathcal {I}_{i,j}}) = {\mathcal {I}_{i,j}}$ and ${\tau _{{q^{{t \mathord{\left/
 {\vphantom {t 2}} \right.
 \kern-\nulldelimiterspace} 2}}}, - 1}}({\mathcal {I}_{i,j}}) = {\mathcal {I}_{i,j}}$ hold.
 \end{lemma}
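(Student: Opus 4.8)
The plan is to derive all five parts from the action of a single ring automorphism on a single Wedderburn component $\mathcal{I}_{i,j}$, using the following ingredients. By Lemma~2(v), each $\mathcal{I}_{i,j}$ is a field $\cong\mathbb{F}_{q^{tD_i}}$ with multiplicative identity its primitive idempotent $e_{i,j}$, and projecting the constants, $c\mapsto c\,e_{i,j}$, embeds $\mathbb{F}_{q^t}$ as a subfield $\mathbb{F}_{q^t}e_{i,j}\subseteq\mathcal{I}_{i,j}$. From $\tau_{q^w,u}\circ\tau_{q^{w'},u'}=\tau_{q^{w+w'},uu'}$ we get that $\tau_{1,-1}$ and $\tau_{q^{t/2},-1}$ are \emph{involutions} of $\mathcal{R}_n^{(q^t)}$, since $\tau_{1,-1}^{2}=\tau_{1,1}=\mathrm{id}$ and $\tau_{q^{t/2},-1}^{2}=\tau_{q^{t},1}=\mathrm{id}$. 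By Lemma~3, a component $\mathcal{I}_{i,j}\leftrightarrow C_l^{(q^t)}$ is fixed by $\tau_{q^w,u}$ exactly when $C_{lu^{-1}q^w}^{(q^t)}=C_l^{(q^t)}$, and then $\tau_{q^w,u}$ must fix $e_{i,j}$ (a ring automorphism preserves the identity of a component) and acts on $\mathbb{F}_{q^t}e_{i,j}$ by $c\,e_{i,j}\mapsto c^{q^w}e_{i,j}$. Finally, $n$ even and $\gcd(n,q)=1$ force $q$ odd, hence ${\eta'}^{\,n/2}=-1$ and $C_{n/2}^{(q^t)}=\{n/2\}$.

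Parts (i) and (ii) are handled together. For $i\in\{0,i^{\#}\}$ the coset $C_{l_i}^{(q^t)}=\{l_i\}$ is a singleton with $l_iq\equiv l_i$ and $-l_i\equiv l_i\pmod n$; hence $s_i=D_i=1$, $\mathcal{J}_i=\mathcal{I}_{i,0}\cong\mathbb{F}_{q^t}$, and by the third ingredient $\tau_{1,-1}$ and $\tau_{q,1}$ both fix $\mathcal{I}_{i,0}$ and therefore fix $e_{i,0}$. Every element of $\mathcal{I}_{i,0}$ equals $c\,e_{i,0}$ with $c\in\mathbb{F}_{q^t}$ (the embedding of the constants is onto, by equality of cardinalities), so $\tau_{1,-1}(c\,e_{i,0})=c\,e_{i,0}$ and $\tau_{q,1}(c\,e_{i,0})=c^{q}e_{i,0}=(c\,e_{i,0})^{q}$ using $e_{i,0}^{2}=e_{i,0}$; this is exactly (i) and (ii).

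For (iii)--(v), assume $i\ne0$, $i\ne i^{\#}$. If $\tau\in\{\tau_{1,-1},\tau_{q^{t/2},-1}\}$ fixes $\mathcal{I}_{i,j}$, then $\mathcal{I}_{i,j}\subseteq\mathcal{J}_i\cap\tau(\mathcal{J}_i)=\mathcal{J}_i\cap\mathcal{J}_{\mu(i)}$, so $\mu(i)=i$ by the directness in Lemma~2(v). Since $\tau_{1,-1}$ fixes $\mathbb{F}_{q^t}e_{i,j}$ and $\tau_{q^{t/2},-1}$ fixes $\mathbb{F}_{q^{t/2}}e_{i,j}$, the relevant restriction lies in $\mathrm{Gal}(\mathbb{F}_{q^{tD_i}}/\mathbb{F}_{q^t})\cong\mathbb{Z}/D_i\mathbb{Z}$, resp.\ $\mathrm{Gal}(\mathbb{F}_{q^{tD_i}}/\mathbb{F}_{q^{t/2}})\cong\mathbb{Z}/2D_i\mathbb{Z}$, and is an involution there; a cyclic group has at most one element of order $2$, which in either case is $x\mapsto x^{q^{tD_i/2}}$ (existence for $\tau_{1,-1}$ requiring $D_i$ even). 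It remains to exclude the identity. For $\tau_{q^{t/2},-1}$ this is clear: on $\mathbb{F}_{q^t}e_{i,j}$ it acts as the nontrivial map $c\,e_{i,j}\mapsto c^{q^{t/2}}e_{i,j}$ (nontrivial since $\mathbb{F}_{q^t}\not\subseteq\mathbb{F}_{q^{t/2}}$), which proves (iv). For $\tau_{1,-1}$, write $\mathcal{I}_{i,j}\cong\mathbb{F}_{q^t}[X]/\langle M_{i,j}(X)\rangle$ with $\overline X$ the image of $X$, a root of $M_{i,j}$, so essentially $\overline X={\eta'}^{\,l_iq^j}$; as $\tau_{1,-1}$ is $\mathbb{F}_{q^t}$-linear and sends $\overline X$ to $\overline X^{-1}$, it fixes $\mathcal{I}_{i,j}$ pointwise iff $\overline X^{2}=1$, i.e.\ iff $n\mid 2l_iq^j$. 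But $n\nmid l_iq^j$ (as $\gcd(n,q)=1$, $0<l_i<n$), and $l_iq^j\equiv n/2\pmod n$ would put $n/2\in C_{l_i}^{(q)}$, forcing $C_{l_i}^{(q)}=C_{n/2}^{(q)}$ and $i=i^{\#}$, excluded; hence $\overline X^2\ne1$, so $\tau_{1,-1}|_{\mathcal{I}_{i,j}}$ is the order-$2$ map $x\mapsto x^{q^{tD_i/2}}$ and $D_i$ is even, proving (iii). Finally, for (v) (with $t$ even): if both stability conditions held for some $j$, then by (iii)--(iv) the two involutions would agree on $\mathcal{I}_{i,j}$, so $\tau_{q^{t/2},1}=\tau_{1,-1}^{-1}\circ\tau_{q^{t/2},-1}$ would fix $\mathcal{I}_{i,j}$ pointwise, contradicting that $\tau_{q^{t/2},1}$ acts on $\mathbb{F}_{q^t}e_{i,j}$ by the nontrivial $c\,e_{i,j}\mapsto c^{q^{t/2}}e_{i,j}$.

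The only genuinely delicate step is excluding the identity in part (iii): that is exactly where both hypotheses $i\ne0$ and $i\ne i^{\#}$ are needed, to guarantee ${\eta'}^{\,2l_iq^j}\ne1$. Everything else is bookkeeping with the primitive idempotents together with the single structural fact that a finite field has at most one automorphism of order two, applied to the involutions $\tau_{1,-1}$ and $\tau_{q^{t/2},-1}$; in particular the argument reproves the parity assertions of Lemma~6 as a byproduct.
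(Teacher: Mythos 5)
This statement is Lemma~7 of the paper, which is quoted verbatim from Huffman \cite[Lemma 11]{Huffman1}; the paper supplies no proof of its own, so there is nothing internal to compare your argument against. Judged on its own terms, your proof is correct and self-contained. The two structural observations you lean on are sound: the composition rule $\tau_{q^w,u}\circ\tau_{q^{w'},u'}=\tau_{q^{w+w'},uu'}$ does make $\tau_{1,-1}$ and $\tau_{q^{t/2},-1}$ involutions of $\mathcal{R}_n^{(q^t)}$, and a ring automorphism stabilizing a Wedderburn component must fix its primitive idempotent and hence restricts to a field automorphism lying in the cyclic Galois group over the subfield it fixes (that subfield being $\mathbb{F}_{q^t}e_{i,j}$ for $\tau_{1,-1}$ and $\mathbb{F}_{q^{t/2}}e_{i,j}$ for $\tau_{q^{t/2},-1}$, exactly as you compute from the semilinearity of the coefficients). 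Reducing each part to ``an involution in a cyclic group is either trivial or the unique element $x\mapsto x^{q^{tD_i/2}}$'' is the right skeleton, and you correctly identify that the only delicate point is ruling out the identity for $\tau_{1,-1}$ in part (iii); your argument via $\overline{X}\mapsto\overline{X}^{-1}$ and the exclusion $l_iq^j\not\equiv 0,\,n/2\pmod n$ (which is precisely where $i\neq 0$ and $i\neq i^{\#}$ enter) is complete, and it does recover the parity of $D_i$ needed for the exponent $tD_i/2$ to make sense. Parts (i), (ii), (iv) and (v) follow as you say; in particular the contradiction in (v) via $\tau_{1,-1}\circ\tau_{q^{t/2},-1}=\tau_{q^{t/2},1}$ acting nontrivially on $\mathbb{F}_{q^t}e_{i,j}$ is clean. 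This matches in spirit the cyclotomic-coset bookkeeping that Huffman's original proof uses, but your packaging through the uniqueness of the order-two Galois element is arguably tighter.
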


\section{$\Delta$-Trace bilinear form on $\mathbb{F}_{{q^t}}^n$ and $\mathcal {R}_n^{(q^t)}$} \label{}
\noindent
 In this section, we define a new trace bilinear form on $\mathbb{F}_{{q^t}}^n$ and $\mathcal {R}_n^{(q^t)}$ for any even integer $t \geq 2$ satisfying $t \not\equiv 1 \;(\bmod \;p)$ and study its properties. We need the following Lemma.
\par
\begin{lemma}
  Let $t=2^am\geq 2$ be an even integer satisfying $t \not\equiv 1\; (\bmod \;p)$, where $a \geq 1$ and $m$ is odd. Then there exists an element $\gamma  \in {\mathbb{F}_{{q^{{2^a}}}}} \subseteq {\mathbb{F}_{{q^t}}}$ with $\gamma  \ne 0$ such that $\gamma  + {\gamma ^{{q^{{2^{a - 1}}}}}} = 0$. Thus the map $\psi :\;{\mathbb{F}_{{q^t}}} \to {\mathbb{F}_{{q^t}}}$ defined as $\psi (\alpha ) = {\alpha ^q} + {\alpha ^{{q^2}}} +  \cdots + {\alpha ^{{q^{t - 1}}}} = T{r_{q,t}}(\alpha ) - \alpha $ is an $\mathbb{F}_q$-linear vector space automorphism, where $\alpha  \in {\mathbb{F}_{{q^t}}}$.
\end{lemma}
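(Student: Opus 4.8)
The statement splits into two essentially independent claims: (a) the existence of a nonzero $\gamma\in\mathbb{F}_{q^{2^a}}$ with $\gamma+\gamma^{q^{2^{a-1}}}=0$, and (b) the bijectivity of $\psi=\mathrm{Tr}_{q,t}-\mathrm{id}$ on $\mathbb{F}_{q^t}$. Claim (a) is the classical fact on the vanishing of a relative trace already recalled from \cite{Huffman1} in Section~2 (with the present $Q=q^{2^{a-1}}$ written there as $Q=q^{A}$, $A=2^{a-1}$), so I would either cite it directly or give the short argument below; claim (b) is where the hypothesis $t\not\equiv 1\pmod p$ is actually consumed.

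For (a): put $Q=q^{2^{a-1}}$, which is legitimate because $a\ge 1$, and note $\mathbb{F}_{q^{2^a}}=\mathbb{F}_{Q^2}\subseteq\mathbb{F}_{q^t}$ since $2^a\mid t$. Consider the additive map $x\mapsto x+x^{Q}$ on $\mathbb{F}_{Q^2}$. I would verify the two routine points: it is $\mathbb{F}_Q$-linear because $c^{Q}=c$ for $c\in\mathbb{F}_Q$, and its image lies in $\mathbb{F}_Q$ because $(x+x^{Q})^{Q}=x^{Q}+x^{Q^2}=x^{Q}+x$. Hence it is an $\mathbb{F}_Q$-linear map from a $2$-dimensional $\mathbb{F}_Q$-space into a $1$-dimensional one, so its kernel has $\mathbb{F}_Q$-dimension at least $1$, i.e.\ at least $Q\ge q\ge 2$ elements; any nonzero kernel element is the desired $\gamma$. (Equivalently one can use surjectivity of $\mathrm{Tr}_{Q,2}$ onto $\mathbb{F}_Q$, or take $\gamma=\beta^{Q}-\beta$ for any $\beta\in\mathbb{F}_{Q^2}\setminus\mathbb{F}_Q$ via additive Hilbert~90.)

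For (b): $\psi$ is $\mathbb{F}_q$-linear, being a sum of the Frobenius powers $\alpha\mapsto\alpha^{q^{w}}$ ($1\le w\le t-1$), each of which is $\mathbb{F}_q$-linear; since $\dim_{\mathbb{F}_q}\mathbb{F}_{q^t}=t$ is finite, it suffices to show $\psi$ is injective. Suppose $\psi(\alpha)=0$, i.e.\ $\alpha=\mathrm{Tr}_{q,t}(\alpha)$. The right-hand side lies in $\mathbb{F}_q$, so $\alpha\in\mathbb{F}_q$; then $\alpha^{q^{w}}=\alpha$ for every $w$, whence $\mathrm{Tr}_{q,t}(\alpha)$ is the sum of $t$ copies of $\alpha$, i.e.\ $\alpha$ multiplied by the image of $t$ in $\mathbb{F}_p\subseteq\mathbb{F}_q$. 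Thus $\alpha=t\alpha$, so $(t-1)\alpha=0$ in $\mathbb{F}_q$, and since $t\not\equiv 1\pmod p$ the residue of $t-1$ is a nonzero element of $\mathbb{F}_p$, hence a unit of $\mathbb{F}_q$; therefore $\alpha=0$. So $\ker\psi=\{0\}$ and $\psi$ is an $\mathbb{F}_q$-linear automorphism.

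The argument has no genuine obstacle; the only point needing care is the bookkeeping in (b)—reading $\psi(\alpha)=0$ correctly as ``$\alpha$ equals its own trace'', then using that $\mathrm{Tr}_{q,t}$ restricted to $\mathbb{F}_q$ is multiplication by $t\bmod p$—together with the observation that the excluded congruence $t\equiv 1\pmod p$ is exactly (and the only thing) needed to keep a nonzero scalar out of $\ker\psi$.
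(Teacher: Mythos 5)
Your proof is correct and follows essentially the same route as the paper: both reduce to showing $\ker\psi=\{0\}$ by first deducing $\alpha\in\mathbb{F}_q$ (you via $\psi(\alpha)=\mathrm{Tr}_{q,t}(\alpha)-\alpha$ and the fact that the trace lands in $\mathbb{F}_q$; the paper via the equivalent computation $\psi(\alpha)-\psi(\alpha)^q=\alpha^q-\alpha=0$) and then concluding $(t-1)\alpha=0$, which forces $\alpha=0$ since $t\not\equiv 1\pmod p$. The only difference is that you also supply a proof of the existence of $\gamma$ (a correct rank--nullity count for $x\mapsto x+x^{Q}$ on $\mathbb{F}_{Q^2}$), whereas the paper's proof omits this part and relies on the citation of Huffman already made in Section~2.
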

\begin{proof}
The result will be proved if we can show $\psi$ is an injective $\mathbb{F}_q$-linear vector space automorphism. It can easily be verified that $\psi$ is an $\mathbb{F}_q$-linear map. It remains to show that the kernel of $\psi$ denoted by ${\rm{Ker(}}\psi {\rm{)}}$ is $\{0\}$. If $\alpha  \in {\rm{Ker(}}\psi {\rm{)}}$, then $\psi (\alpha ) = {\alpha ^q} + {\alpha ^{{q^2}}} +  \cdots  + {\alpha ^{{q^{t - 1}}}} = 0$. Furthermore, we have $\psi {(\alpha )^q} = \alpha  + {\alpha ^{{q^2}}} +  \cdots  + {\alpha ^{{q^{t - 1}}}} = 0$. Then, we obtain $\psi (\alpha ) - \psi {(\alpha )^q} = {\alpha ^q} - \alpha  = 0$, which implies ${\alpha ^{{q^u}}} = \alpha $ for each integer $u\;( 1\leq u\leq t-1)$. This implies that $(t - 1)\alpha  = \psi (\alpha ) = 0$. As $t \not\equiv 1\; (\bmod \;p)$, we have $(t - 1)\alpha  = 0$ if and only if $\alpha=0$. The proof is completed.
\end{proof}

Now, we define a trace inner product ${( \cdot , \cdot )_\Delta }:\;\mathbb{F}_{{q^t}}^n \times \mathbb{F}_{{q^t}}^n \to {\mathbb{F}_q}$ as $${(a,b)_\Delta } = \sum\limits_{j = 0}^{n - 1} {T{r_{q,t}}} (\gamma {a_j}\psi (b_j^{{q^{{t \mathord{\left/
 {\vphantom {t 2}} \right.
 \kern-\nulldelimiterspace} 2}}}}))$$ for all $a = ({a_0},{a_1}, \cdots ,{a_{n - 1}})$, $b = ({b_0},{b_1}, \cdots ,{b_{n - 1}})$ in $\mathbb{F}_{{q^t}}^n$. In the following Lemma, we prove that the trace inner product $( \cdot , \cdot )_\Delta$ is a non-degenerate trace bilinear form on $\mathbb{F}_{{q^t}}^n$ for any prime power $q$.
\par
\begin{lemma}
  For $a,b,c \in \mathbb{F}_{{q^t}}^n$ and $\alpha \in \mathbb{F}_q$, the following hold.\\
   (i) ${(a,b)_\Delta } \in \mathbb{F}_q$.\\
   (ii) ${(a,b + c)_\Delta } = {(a,b)_\Delta } + {(a,c)_\Delta }$ and ${(a + b,c)_\Delta } = {(a,c)_\Delta } + {(b,c)_\Delta }$.\\
   (iii) ${(\alpha a,b)_\Delta } = {(a,\alpha b)_\Delta } = \alpha {(a,b)_\Delta }$.\\
   (iv) ${( \cdot , \cdot )_\Delta }$ is non-degenerate.\\
\end{lemma}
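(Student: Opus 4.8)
The plan is to verify the four properties in order, treating (i)--(iii) as essentially formal consequences of the definitions and Lemma~9, and reserving the real work for (iv). For (i), observe that ${(a,b)_\Delta}$ is a sum of values of the trace map $Tr_{q,t}:\mathbb{F}_{q^t}\to\mathbb{F}_q$, and each such value lies in $\mathbb{F}_q$ by the definition of the trace; hence the whole sum lies in $\mathbb{F}_q$. For (ii), I would expand ${(a,b+c)_\Delta}$ using that $\psi$ is $\mathbb{F}_q$-linear (Lemma~9), that $x\mapsto x^{q^{t/2}}$ is additive (Frobenius), and that $Tr_{q,t}$ is additive; the left-additivity in $a$ is immediate since $\gamma a_j\psi(\cdot)$ is additive in $a_j$. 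For (iii), I would use that $\alpha\in\mathbb{F}_q$ satisfies $\alpha^{q^{t/2}}=\alpha$ and $\alpha^{q^w}=\alpha$ for all $w$, so that the scalar $\alpha$ can be pulled through the Frobenius powers inside $\psi$ and outside through the $\mathbb{F}_q$-linearity of $Tr_{q,t}$; thus it can be extracted from either argument, giving ${(\alpha a,b)_\Delta}={(a,\alpha b)_\Delta}=\alpha{(a,b)_\Delta}$.

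The substantive part is (iv): I must show that if $a\in\mathbb{F}_{q^t}^n$ satisfies ${(a,b)_\Delta}=0$ for all $b\in\mathbb{F}_{q^t}^n$, then $a=0$ (and symmetrically in the other argument, though bilinearity plus one-sided non-degeneracy over the finite-dimensional space $\mathbb{F}_{q^t}^n$ forces the other side). Since the form is a sum over coordinates and the coordinates decouple, it suffices to treat $n=1$: I want to show that the map $b\mapsto Tr_{q,t}(\gamma a\,\psi(b^{q^{t/2}}))$ is the zero functional on $\mathbb{F}_{q^t}$ only when $a=0$. The key inputs are that $\psi$ is a bijection on $\mathbb{F}_{q^t}$ (Lemma~9), that $b\mapsto b^{q^{t/2}}$ is a bijection, and that $\gamma\neq 0$; composing these, as $b$ ranges over $\mathbb{F}_{q^t}$ the element $\gamma a\,\psi(b^{q^{t/2}})$ ranges over all of $\gamma a\cdot\mathbb{F}_{q^t}=a\cdot\mathbb{F}_{q^t}$. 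If $a\neq 0$ this is all of $\mathbb{F}_{q^t}$, so vanishing of the functional for all $b$ would force $Tr_{q,t}$ to be identically zero on $\mathbb{F}_{q^t}$, contradicting the well-known surjectivity (equivalently non-triviality) of the trace form of a finite field extension. Hence $a=0$.

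The step I expect to require the most care is making the reduction to $n=1$ rigorous: from ${(a,b)_\Delta}=0$ for all $b$ one picks $b$ supported on a single coordinate to conclude each coordinate of $a$ is killed by the one-variable functional, which is the clean argument above. A secondary point worth stating explicitly is why one-sided non-degeneracy suffices; here I would invoke that ${( \cdot , \cdot )_\Delta}$ is an $\mathbb{F}_q$-bilinear form on the finite-dimensional $\mathbb{F}_q$-space $\mathbb{F}_{q^t}^n$, so the left radical and right radical have the same dimension, and both vanish once one of them does. Alternatively, one can run the same bijection argument on the other side directly, using that $\psi$ commutes with the relevant Frobenius powers. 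No deep machinery is needed beyond Lemma~9 and basic facts about the trace of finite fields; the proof is short once the right substitution $b\mapsto\psi(b^{q^{t/2}})$ is recognized as a bijection.
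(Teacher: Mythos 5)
Your proposal is correct and follows essentially the same route as the paper: parts (i)--(iii) from the $\mathbb{F}_q$-linearity of $Tr_{q,t}$ (and of $\psi$), and part (iv) by taking $b$ supported on a single coordinate and using that $d\mapsto\gamma a_j\psi(d^{q^{t/2}})$ is a bijection of $\mathbb{F}_{q^t}$ together with the non-triviality of the trace. Your explicit remarks on why the composite map is surjective (via Lemma 8) and on why one-sided non-degeneracy suffices are points the paper leaves implicit, but they do not change the argument.
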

\begin{proof}
To prove these results, we let $a = ({a_0},{a_1}, \cdots ,{a_{n - 1}}), b = ({b_0},{b_1}, \cdots ,{b_{n - 1}})$ and $c = ({c_0},{c_1}, \cdots ,{c_{n - 1}})$, where ${a_i},{b_i},{c_i} \in {\mathbb{F}_{{q^t}}}$ for all $0 \leq i \leq n-1$.\\
\indent
(i) As ${(a,b)_\Delta } = \sum\limits_{i= 0}^{n - 1} {T{r_{q,t}}} (\gamma {a_i}\psi (b_i^{{q^{{t \mathord{\left/
 {\vphantom {t 2}} \right.
 \kern-\nulldelimiterspace} 2}}}}))$ and ${T{r_{q,t}}} (\gamma {a_i}\psi (b_i^{{q^{{t \mathord{\left/
 {\vphantom {t 2}} \right.
 \kern-\nulldelimiterspace} 2}}}})) \in \mathbb{F}_q$ for $0 \leq i \leq n-1$, (i) holds.\\
\indent
Parts (ii) and (iii) follow because $T{r_{q,t}}$ is $\mathbb{F}_q$-linear.\\
\indent
(iv) To prove this, we need to show that if $( a , b )_\Delta =0 $ for all $b \in \mathbb{F}_{{q^t}}^n$, then $a=0$. Suppose, on the contrary, that there exists a non-zero $a = ({a_0},{a_1}, \cdots ,{a_{n - 1}}) \in \mathbb{F}_{{q^t}}^n$ such that $( a , b )_\Delta =0 $ for all $b \in \mathbb{F}_{{q^t}}^n$. In particular, let ${a_j} \ne 0$ for some $j$. As $T{r_{q,t}}$ is an onto map and $\gamma \neq 0$, there exists $d \in {\mathbb{F}_{{q^t}}}$  such that ${T{r_{q,t}}} (\gamma {a_j}\psi (d^{{q^{{t \mathord{\left/
 {\vphantom {t 2}} \right.
 \kern-\nulldelimiterspace} 2}}}}))\neq 0$. Let $b = ({b_0},{b_1}, \cdots ,{b_{n - 1}})\in \mathbb{F}_{{q^t}}^n$, where $b_i=0$ for all $i\neq j$ and $b_j=d$. Then,  ${T{r_{q,t}}} (\gamma {a_j}\psi (b_j^{{q^{{t \mathord{\left/
 {\vphantom {t 2}} \right.
 \kern-\nulldelimiterspace} 2}}}}))\neq 0$. This leads to a contradiction.\\
\end{proof}

Next, we define a form ${[ \cdot , \cdot ]_\Delta}:\;\mathcal {R}_n^{({q^t})} \times \mathcal {R}_n^{({q^t})} \rightarrow \mathcal {R}_n^{(q)}$ as follows. \[{[a(X),b(X)]_\Delta } = \sum\limits_{u = 0}^{t - 1} {{\tau _{{q^u},1}}} (\gamma a(X)\sum\limits_{w = 1}^{t - 1} {{\tau _{{q^{{t \mathord{\left/
 {\vphantom {t 2}} \right.
 \kern-\nulldelimiterspace} 2} + w}}, - 1}}} (b(X)))\] for all $a(X),b(X) \in \mathcal {R}_n^{({q^t})}$. We prove a result for this form analogous to Lemma 9 as follows.
\par
\begin{lemma}
   Let $a(X),b(X),c(X) \in \mathcal {R}_n^{({q^t})}$. The following hold.\\
   (i) ${[a(X),b(X)]_\Delta } = \sum\limits_{k = 0}^{n - 1} {{{(a,{\sigma ^k}(b))}_\Delta }} {X^k}$.\\
   (ii) ${[a(X),b(X)]_\Delta } \in \mathcal {R}_n^{(q)}$.\\
   (iii) ${[a(X),b(X) + c(X)]_\Delta } = {[a(X),b(X)]_\Delta } + {[a(X),c(X)]_\Delta }$ and $${[a(X) + b(X),c(X)]_\Delta } = {[a(X),c(X)]_\Delta } + {[b(X),c(X)]_\Delta }.$$\\
   (iv) For $f(X) \in \mathcal {R}_n^{(q)}$, we have ${[f(X)a(X),b(X)]_\Delta } = f(X){[a(X),b(X)]_\Delta }$ and $${[a(X),f(X)b(X)]_\Delta } = {\tau _{1, - 1}}(f(X)){[a(X),b(X)]_\Delta }.$$\\
   (v) ${[ \cdot , \cdot ]_\Delta }$ is non-degenerate.
\end{lemma}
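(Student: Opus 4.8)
The plan is to prove the five parts in the stated order, since each builds on the previous ones and mirrors the structure of Lemma~9 for the vector-space form $(\cdot,\cdot)_\Delta$. The central observation is that the $\mathcal{R}_n^{(q^t)}$-form $[\cdot,\cdot]_\Delta$ is just the ``generating function'' packaging of the trace form $(\cdot,\cdot)_\Delta$ applied to cyclic shifts of the second argument, exactly as in Huffman's treatment of $(\cdot,\cdot)_\gamma$; so most of the work is to establish (i), after which (ii)--(v) follow with little effort.

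First I would prove (i) by a direct computation. Writing $a(X)=\sum_j a_jX^j$ and $b(X)=\sum_j b_jX^j$, I would expand $\gamma a(X)\sum_{w=1}^{t-1}\tau_{q^{t/2+w},-1}(b(X))$ and then apply $\sum_{u=0}^{t-1}\tau_{q^u,1}$. The automorphism $\tau_{q^{t/2+w},-1}$ sends $b(X)$ to $\sum_j b_j^{q^{t/2+w}}X^{-j}$; multiplying by $\gamma a(X)=\sum_k \gamma a_kX^k$ and collecting the coefficient of $X^{k}$ (indices mod $n$) produces, after applying $\tau_{q^u,1}$ and summing over $u$ and $w$, the expression $\sum_{k=0}^{n-1}\big(\sum_{m}Tr_{q,t}(\gamma a_m\,\psi(b_{m-k}^{q^{t/2}}))\big)X^k$ — recognizing that $\sum_{w=1}^{t-1}(\cdot)^{q^w}$ composed with the $q^{t/2}$-power inside is precisely $\psi(\,\cdot^{q^{t/2}})$, and that $\sum_u \tau_{q^u,1}$ supplies the outer $Tr_{q,t}$. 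Matching $b_{m-k}$ with the $k$-th cyclic shift $\sigma^k(b)$ then gives $[a(X),b(X)]_\Delta=\sum_{k=0}^{n-1}(a,\sigma^k(b))_\Delta X^k$. Keeping the index bookkeeping straight (the $-j$ exponents from $\tau_{\cdot,-1}$, reductions mod $n$, and the interchange of the three summations) is the main obstacle; everything else is formal.

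With (i) in hand, (ii) is immediate: each coefficient $(a,\sigma^k(b))_\Delta$ lies in $\mathbb{F}_q$ by Lemma~9(i), so the whole sum lies in $\mathcal{R}_n^{(q)}$. Part (iii) follows from (i) together with the biadditivity of $(\cdot,\cdot)_\Delta$ established in Lemma~9(ii), applied coefficientwise (note $\sigma^k$ is additive). For (iv), for the first identity I would use that $\tau_{q^u,1}$ and multiplication are ring operations and that $f(X)\in\mathcal{R}_n^{(q)}$ is fixed by $\tau_{q,1}$ (hence by every $\tau_{q^u,1}$), so $f(X)$ pulls out of $\sum_u\tau_{q^u,1}$; for the second identity I would use $\tau_{q^u,1}\tau_{q^{t/2+w},-1}(f(X)b(X))=\tau_{q^u,1}(\tau_{1,-1}(f(X)))\cdot\tau_{q^u,1}\tau_{q^{t/2+w},-1}(b(X))$ and again that $\tau_{1,-1}(f(X))\in\mathcal{R}_n^{(q)}$ is $\tau_{q^u,1}$-fixed, so it factors out, yielding $\tau_{1,-1}(f(X))[a(X),b(X)]_\Delta$. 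Finally (v): if $[a(X),b(X)]_\Delta=0$ for all $b(X)$, then by (i) every coefficient $(a,\sigma^k(b))_\Delta=0$; taking $k=0$ and letting $b$ range over all of $\mathcal{R}_n^{(q^t)}\cong\mathbb{F}_{q^t}^n$ forces $a=0$ by the non-degeneracy of $(\cdot,\cdot)_\Delta$ proved in Lemma~9(iv). This completes the plan.
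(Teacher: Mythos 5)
Your proposal is correct and follows essentially the same route as the paper: part (i) by the same direct expansion of $\gamma a(X)\sum_{w=1}^{t-1}\tau_{q^{t/2+w},-1}(b(X))$ and collection of the coefficient of $X^k$ into $\gamma a_i\psi(b_j^{q^{t/2}})$ with $i-j\equiv k \pmod n$, and (ii)--(iv) by the same reductions to Lemma 9 and the $\tau_{q,1}$-invariance of $f(X)$. Your argument for (v) is a slight streamlining --- you invoke the non-degeneracy of $(\cdot,\cdot)_\Delta$ from Lemma 9(iv) directly on the constant coefficient, whereas the paper constructs an explicit witness $b(X)=\psi^{-1}(\theta a_j^{-1}X^{-j})$ --- but the underlying idea (reduce to the vector-space form via part (i)) is identical.
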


\begin{proof}
(i) According to the definition of ${[ \cdot , \cdot ]_\Delta }$, it follows that ${[a(X),b(X)]_\Delta } = \sum\limits_{u = 0}^{t - 1} {{\tau _{{q^u},1}}} (\gamma a(X)\sum\limits_{w = 1}^{t - 1} {{\tau _{{q^{{t \mathord{\left/
 {\vphantom {t 2}} \right.
 \kern-\nulldelimiterspace} 2} + w}}, - 1}}} (b(X)))$. For $\gamma a(X)\sum\limits_{w = 1}^{t - 1} {{\tau _{{q^{{t \mathord{\left/
 {\vphantom {t 2}} \right.
 \kern-\nulldelimiterspace} 2} + w}}, - 1}}} (b(X))$, we have the following results.\\
 \indent
(1) When $w=1$, we have \[\begin{array}{l}
\gamma a(X){\tau _{{q^{{t \mathord{\left/
 {\vphantom {t 2}} \right.
 \kern-\nulldelimiterspace} 2} + 1}}, - 1}}(b(X))\\
\;\; = \gamma ({a_0} + {a_1}X + {a_2}{X^2} + {a_3}{X^3} +  \cdots  + {a_{n - 2}}{X^{n - 2}} + {a_{n - 1}}{X^{n - 1}})\\
\;\;\;\;\;\; \cdot ({b_0}^{{q^{{t \mathord{\left/
 {\vphantom {t 2}} \right.
 \kern-\nulldelimiterspace} 2} + 1}}} + b_{n - 1}^{{q^{{t \mathord{\left/
 {\vphantom {t 2}} \right.
 \kern-\nulldelimiterspace} 2} + 1}}}X + b_{n - 2}^{{q^{{t \mathord{\left/
 {\vphantom {t 2}} \right.
 \kern-\nulldelimiterspace} 2} + 1}}}{X^2} +  \cdots  + b_2^{{q^{{t \mathord{\left/
 {\vphantom {t 2}} \right.
 \kern-\nulldelimiterspace} 2} + 1}}}{X^{n - 2}} + b_1^{{q^{{t \mathord{\left/
 {\vphantom {t 2}} \right.
 \kern-\nulldelimiterspace} 2} + 1}}}{X^{n - 1}})\\
\; \;= {c_{0,1}} + {c_{1,1}}X + {c_{2,1}}{X^2} +  \cdots  + {c_{n - 2,1}}{X^{n - 2}} + {c_{n - 1,1}}{X^{n - 1}},
\end{array}\]
where
\[{c_{0,1}} = \gamma ({a_0}b_0^{{q^{{t \mathord{\left/
 {\vphantom {t 2}} \right.
 \kern-\nulldelimiterspace} 2} + 1}}} + {a_1}b_1^{{q^{{t \mathord{\left/
 {\vphantom {t 2}} \right.
 \kern-\nulldelimiterspace} 2} + 1}}} + {a_2}b_2^{{q^{{t \mathord{\left/
 {\vphantom {t 2}} \right.
 \kern-\nulldelimiterspace} 2} + 1}}} +  \cdots  + {a_{n - 2}}b_{n - 2}^{{q^{{t \mathord{\left/
 {\vphantom {t 2}} \right.
 \kern-\nulldelimiterspace} 2} + 1}}} + {a_{n - 1}}b_{n - 1}^{{q^{{t \mathord{\left/
 {\vphantom {t 2}} \right.
 \kern-\nulldelimiterspace} 2} + 1}}});\]
 \[{c_{1,1}} = \gamma ({a_0}b_{n - 1}^{{q^{{t \mathord{\left/
 {\vphantom {t 2}} \right.
 \kern-\nulldelimiterspace} 2} + 1}}} + {a_1}b_0^{{q^{{t \mathord{\left/
 {\vphantom {t 2}} \right.
 \kern-\nulldelimiterspace} 2} + 1}}} + {a_2}b_1^{{q^{{t \mathord{\left/
 {\vphantom {t 2}} \right.
 \kern-\nulldelimiterspace} 2} + 1}}} +  \cdots  + {a_{n - 2}}b_{n - 3}^{{q^{{t \mathord{\left/
 {\vphantom {t 2}} \right.
 \kern-\nulldelimiterspace} 2} + 1}}} + {a_{n - 1}}b_{n - 2}^{{q^{{t \mathord{\left/
 {\vphantom {t 2}} \right.
 \kern-\nulldelimiterspace} 2} + 1}}});\]
 \[{c_{2,1}} = \gamma ({a_0}b_{n - 2}^{{q^{{t \mathord{\left/
 {\vphantom {t 2}} \right.
 \kern-\nulldelimiterspace} 2} + 1}}} + {a_1}b_{n - 1}^{{q^{{t \mathord{\left/
 {\vphantom {t 2}} \right.
 \kern-\nulldelimiterspace} 2} + 1}}} + {a_2}b_0^{{q^{{t \mathord{\left/
 {\vphantom {t 2}} \right.
 \kern-\nulldelimiterspace} 2} + 1}}} +  \cdots  + {a_{n - 2}}b_{n - 4}^{{q^{{t \mathord{\left/
 {\vphantom {t 2}} \right.
 \kern-\nulldelimiterspace} 2} + 1}}} + {a_{n - 1}}b_{n - 3}^{{q^{{t \mathord{\left/
 {\vphantom {t 2}} \right.
 \kern-\nulldelimiterspace} 2} + 1}}});\]
\begin{center}
$ \cdots $
\end{center}
\[{c_{n - 2,1}} = \gamma ({a_0}b_2^{{q^{{t \mathord{\left/
 {\vphantom {t 2}} \right.
 \kern-\nulldelimiterspace} 2} + 1}}} + {a_1}b_3^{{q^{{t \mathord{\left/
 {\vphantom {t 2}} \right.
 \kern-\nulldelimiterspace} 2} + 1}}} + {a_2}b_4^{{q^{{t \mathord{\left/
 {\vphantom {t 2}} \right.
 \kern-\nulldelimiterspace} 2} + 1}}} +  \cdots  + {a_{n - 2}}b_0^{{q^{{t \mathord{\left/
 {\vphantom {t 2}} \right.
 \kern-\nulldelimiterspace} 2} + 1}}} + {a_{n - 1}}b_1^{{q^{{t \mathord{\left/
 {\vphantom {t 2}} \right.
 \kern-\nulldelimiterspace} 2} + 1}}});\]
 \[{c_{n - 1,1}} = \gamma ({a_0}b_1^{{q^{{t \mathord{\left/
 {\vphantom {t 2}} \right.
 \kern-\nulldelimiterspace} 2} + 1}}} + {a_1}b_2^{{q^{{t \mathord{\left/
 {\vphantom {t 2}} \right.
 \kern-\nulldelimiterspace} 2} + 1}}} + {a_2}b_3^{{q^{{t \mathord{\left/
 {\vphantom {t 2}} \right.
 \kern-\nulldelimiterspace} 2} + 1}}} +  \cdots  + {a_{n - 2}}b_{n - 1}^{{q^{{t \mathord{\left/
 {\vphantom {t 2}} \right.
 \kern-\nulldelimiterspace} 2} + 1}}} + {a_{n - 1}}b_0^{{q^{{t \mathord{\left/
 {\vphantom {t 2}} \right.
 \kern-\nulldelimiterspace} 2} + 1}}}).\]
 \indent
 (2) When $w=2$, we have
 \[\begin{array}{l}
\gamma a(X){\tau _{{q^{{t \mathord{\left/
 {\vphantom {t 2}} \right.
 \kern-\nulldelimiterspace} 2} + 2}}, - 1}}(b(X))\\
\;\;  = \gamma ({a_0} + {a_1}X + {a_2}{X^2} + {a_3}{X^3} +  \cdots  + {a_{n - 2}}{X^{n - 2}} + {a_{n - 1}}{X^{n - 1}})\\
\;\;\;\;\;\; \cdot ({b_0}^{{q^{{t \mathord{\left/
 {\vphantom {t 2}} \right.
 \kern-\nulldelimiterspace} 2} + 2}}} + b_{n - 1}^{{q^{{t \mathord{\left/
 {\vphantom {t 2}} \right.
 \kern-\nulldelimiterspace} 2} + 2}}}X + b_{n - 2}^{{q^{{t \mathord{\left/
 {\vphantom {t 2}} \right.
 \kern-\nulldelimiterspace} 2} + 2}}}{X^2} +  \cdots  + b_2^{{q^{{t \mathord{\left/
 {\vphantom {t 2}} \right.
 \kern-\nulldelimiterspace} 2} + 2}}}{X^{n - 2}} + b_1^{{q^{{t \mathord{\left/
 {\vphantom {t 2}} \right.
 \kern-\nulldelimiterspace} 2} + 2}}}{X^{n - 1}})\\
\;\;  = {c_{0,2}} + {c_{1,2}}X + {c_{2,2}}{X^2} +  \cdots  + {c_{n - 2,2}}{X^{n - 2}} + {c_{n - 1,2}}{X^{n - 1}},
\end{array}\]
where
\[{c_{0,2}} = \gamma ({a_0}b_0^{{q^{{t \mathord{\left/
 {\vphantom {t 2}} \right.
 \kern-\nulldelimiterspace} 2} + 2}}} + {a_1}b_1^{{q^{{t \mathord{\left/
 {\vphantom {t 2}} \right.
 \kern-\nulldelimiterspace} 2} + 2}}} + {a_2}b_2^{{q^{{t \mathord{\left/
 {\vphantom {t 2}} \right.
 \kern-\nulldelimiterspace} 2} + 2}}} +  \cdots  + {a_{n - 2}}b_{n - 2}^{{q^{{t \mathord{\left/
 {\vphantom {t 2}} \right.
 \kern-\nulldelimiterspace} 2} + 2}}} + {a_{n - 1}}b_{n - 1}^{{q^{{t \mathord{\left/
 {\vphantom {t 2}} \right.
 \kern-\nulldelimiterspace} 2} + 2}}});\]
\[{c_{1,2}} = \gamma ({a_0}b_{n - 1}^{{q^{{t \mathord{\left/
 {\vphantom {t 2}} \right.
 \kern-\nulldelimiterspace} 2} + 2}}} + {a_1}b_0^{{q^{{t \mathord{\left/
 {\vphantom {t 2}} \right.
 \kern-\nulldelimiterspace} 2} + 2}}} + {a_2}b_1^{{q^{{t \mathord{\left/
 {\vphantom {t 2}} \right.
 \kern-\nulldelimiterspace} 2} + 2}}} +  \cdots  + {a_{n - 2}}b_{n - 3}^{{q^{{t \mathord{\left/
 {\vphantom {t 2}} \right.
 \kern-\nulldelimiterspace} 2} + 2}}} + {a_{n - 1}}b_{n - 2}^{{q^{{t \mathord{\left/
 {\vphantom {t 2}} \right.
 \kern-\nulldelimiterspace} 2} + 2}}});\]
\[{c_{2,2}} = \gamma ({a_0}b_{n - 2}^{{q^{{t \mathord{\left/
 {\vphantom {t 2}} \right.
 \kern-\nulldelimiterspace} 2} + 2}}} + {a_1}b_{n - 1}^{{q^{{t \mathord{\left/
 {\vphantom {t 2}} \right.
 \kern-\nulldelimiterspace} 2} + 2}}} + {a_2}b_0^{{q^{{t \mathord{\left/
 {\vphantom {t 2}} \right.
 \kern-\nulldelimiterspace} 2} + 2}}} +  \cdots  + {a_{n - 2}}b_{n - 4}^{{q^{{t \mathord{\left/
 {\vphantom {t 2}} \right.
 \kern-\nulldelimiterspace} 2} + 2}}} + {a_{n - 1}}b_{n - 3}^{{q^{{t \mathord{\left/
 {\vphantom {t 2}} \right.
 \kern-\nulldelimiterspace} 2} + 2}}});\]
 \begin{center}
$ \cdots $
\end{center}
\[{c_{n - 2,2}} = \gamma ({a_0}b_2^{{q^{{t \mathord{\left/
 {\vphantom {t 2}} \right.
 \kern-\nulldelimiterspace} 2} + 2}}} + {a_1}b_3^{{q^{{t \mathord{\left/
 {\vphantom {t 2}} \right.
 \kern-\nulldelimiterspace} 2} + 2}}} + {a_2}b_4^{{q^{{t \mathord{\left/
 {\vphantom {t 2}} \right.
 \kern-\nulldelimiterspace} 2} + 2}}} +  \cdots  + {a_{n - 2}}b_0^{{q^{{t \mathord{\left/
 {\vphantom {t 2}} \right.
 \kern-\nulldelimiterspace} 2} + 2}}} + {a_{n - 1}}b_1^{{q^{{t \mathord{\left/
 {\vphantom {t 2}} \right.
 \kern-\nulldelimiterspace} 2} + 2}}});\]
\[{c_{n - 1,2}} = \gamma ({a_0}b_1^{{q^{{t \mathord{\left/
 {\vphantom {t 2}} \right.
 \kern-\nulldelimiterspace} 2} + 2}}} + {a_1}b_2^{{q^{{t \mathord{\left/
 {\vphantom {t 2}} \right.
 \kern-\nulldelimiterspace} 2} + 2}}} + {a_2}b_3^{{q^{{t \mathord{\left/
 {\vphantom {t 2}} \right.
 \kern-\nulldelimiterspace} 2} + 2}}} +  \cdots  + {a_{n - 2}}b_{n - 1}^{{q^{{t \mathord{\left/
 {\vphantom {t 2}} \right.
 \kern-\nulldelimiterspace} 2} + 2}}} + {a_{n - 1}}b_0^{{q^{{t \mathord{\left/
 {\vphantom {t 2}} \right.
 \kern-\nulldelimiterspace} 2} + 2}}}).\]
 The rest, $3 \le w \le t - 1$, can be done in the same manner. Therefore, we have
 \[\gamma a(X)\sum\limits_{w = 1}^{t - 1} {{\tau _{{q^{{t \mathord{\left/
 {\vphantom {t 2}} \right.
 \kern-\nulldelimiterspace} 2} + w}}, - 1}}} (b(X)) = \sum\limits_{k = 0}^{n - 1} {\left( {\sum\limits_{\scriptstyle\;\;\;\;\;\;\;\;i,j = 0\hfill\atop
\scriptstyle i - j \equiv k\;(\bmod n)\hfill}^{n - 1} {\gamma {a_i}\psi (b_j^{{q^{{t \mathord{\left/
 {\vphantom {t 2}} \right.
 \kern-\nulldelimiterspace} 2}}}}){X^k}} } \right)} .\]
 Then, we have
\[\begin{array}{l}
{[a(X),b(X)]_\Delta } = \sum\limits_{u = 0}^{t - 1} {{\tau _{{q^u},1}}} \left( {\sum\limits_{k = 0}^{n - 1} {\left( {\sum\limits_{\scriptstyle\;\;\;\;\;\;\;\;i,j = 0\hfill\atop
\scriptstyle i - j \equiv k\;(\bmod n)\hfill}^{n - 1} {\gamma {a_i}\psi (b_j^{{q^{{t \mathord{\left/
 {\vphantom {t 2}} \right.
 \kern-\nulldelimiterspace} 2}}}}){X^k}} } \right)} } \right) \\
\;\;\;\;\;\;\;\;\;\;\;\;\;\;\;\;\;\;\;\;\;\;\; = {(a,b)_\Delta } + {(a,\sigma (b))_\Delta }X +  \cdots  + {(a,{\sigma ^{(n - 1)}}(b))_\Delta }{X^{n - 1}},
\end{array}\]
 where the cyclic shift $\sigma (b) = ({b_{n - 1}},{b_0},{b_1}, \cdots ,{b_{n - 2}})$ of $b= ({b_0},{b_1}, \cdots ,{b_{n - 1}}) \in \mathbb{F}_{{q^t}}^n$ is identified with $Xb(X) \in \mathcal {R}_n^{({q^t})}$, (i) holds.\\
\indent
 (ii) It follows from part (i) and Lemma 9 (i).\\
\indent
 (iii) As ${\tau _{{q^u},1}}$ and ${\tau _{{q^u},-1}}$ are ring automorphisms for any integer $u \geq 0$, part (iii) follows immediately.\\
\indent
 (iv) For $f(X) \in \mathcal {R}_n^{(q)}$, as ${\tau _{q,1}}(f(X)) = f(X)$, we have ${\tau _{{q^{{t \mathord{\left/
 {\vphantom {t 2}} \right.
 \kern-\nulldelimiterspace} 2}}}, - 1}}(f(X)) = {\tau _{1, - 1}}(f(X))$. From this, (iv) holds.\\
\indent
 (v) Part (v) will be proved by showing that if ${[a(X),b(X)]_\Delta } = 0$ for all $b(X) \in \mathcal {R}_n^{({q^t})}$, then $a(X)=0$. If not, there would exists $j\; (0\leq j \leq n-1)$ such that $a_j\neq 0$. Since $T{r_{q,t}}$ is an onto map and $\gamma \neq 0$, there exists $\theta \in {\mathbb{F}_{{q^t}}}$  such that $T{r_{q,t}}(\gamma\theta)\neq 0$. Then for $b(X) = {\psi ^{ - 1}}(\theta a_j^{ - 1}{X^{-j}})$, let $a = ({a_0},{a_1}, \cdots ,{a_{n - 1}})$ where $a_i=0$ for all $i \neq j$ and $a_j \neq 0$. By part (i), we have ${[a(X),b(X)]_\Delta } = T{r_{q,t}}(\gamma\theta)\neq 0$, which is a contradiction.
\end{proof}

Now we proceed to study the dual codes of cyclic $\mathbb{F}_q$-linear $\mathbb{F}_{q^t}$-code $\mathcal {C}$ of length $n$ with respect to this new trace bilinear form ${( \cdot , \cdot )_\Delta }$ on $\mathbb{F}_{{q^t}}^n$.

%%%%%%%%%%%%%%%%%%%%%%%%%%%%%%%%%%%%%%%%%%%%%%%%%%%%%%%%%%%%%%%%%%%%%%%%%%%%%%%%%%%%%%%%%%
\section{Dual codes of cyclic $\mathbb{F}_q$-linear $\mathbb{F}_{q^t}$-codes} \label{}
\noindent
Let $\mathcal {C}$ be a cyclic $\mathbb{F}_q$-linear $\mathbb{F}_{q^t}$-code of length $n$, where $\gcd (n,q) = 1$.
Then, the $\Delta$-dual code of $\mathcal {C}$ is defined as ${\mathcal {C}^{{ \bot _\Delta }}} = \{ v \in \mathbb{F}_{{q^t}}^n|{(c,v)_\Delta } = 0\;{\rm{for}}\;{\rm{all}}\;c \in \mathcal {C}\} $. It is easy to verify that the dual code $\mathcal {C}^{{ \bot _\Delta }}$ is also an $\mathbb{F}_q$-linear $\mathbb{F}_{q^t}$-code of length $n$. Furthermore, if $\mathcal {C}$ is cyclic, then its dual code $\mathcal {C}^{{ \bot _\Delta }}$ is also cyclic. From now onwards, throughout this paper, we will view the cyclic $\mathbb{F}_q$-linear $\mathbb{F}_{q^t}$-code $\mathcal {C}$ of length $n$ and its $\Delta$-dual code ${\mathcal {C}^{{ \bot _\Delta }}}$ as $\mathcal {R}_n^{(q)}$-submodule of $\mathcal {R}_n^{(q^t)}$, respectively. In addition, if $\mathcal {C} \subseteq\mathcal {R}_n^{(q^t)}$ is any cyclic $\mathbb{F}_q$-linear $\mathbb{F}_{q^t}$-code, then one can easily obtain its dual code  ${\mathcal {C}^{{ \bot _\Delta }}}\subseteq\mathcal {R}_n^{(q^t)}$ with respect to trace bilinear form ${[.,.]_\Delta }$.
\par
Next, we study the properties of the $\Delta$-dual codes of cyclic $\mathbb{F}_q$-linear $\mathbb{F}_{q^t}$-codes of length $n$. For the ring automorphism ${\tau _{{q^u}, - 1}}\;(0 \le u \le t - 1)$ on the ideal $\mathcal {J}_i\;(0 \le i \le s - 1)$ of $\mathcal {R}_n^{(q^t)}$, we observe that $\mathcal {C}_{ - {l_0}}^{(q)} = \mathcal {C}_{{l_0}}^{(q)}$, and further for each $i \;(1 \le i \le s - 1)$, there exists a unique integer $i' \;(1 \le i' \le s - 1)$ satisfying $\mathcal {C}_{ - {l_i}}^{(q)} = \mathcal {C}_{{l_{i'}}}^{(q)}$. This gives rise to a permutation $\mu$ of $\{0,1,2,\ldots,s-1\}$ defined by $\mathcal {C}_{ - {l_i}}^{(q)} = \mathcal {C}_{{l_{\mu(i)}}}^{(q)}$ for $0 \le i \le s - 1$. It is not difficult to show that $\mu(0)=0$ and $\mu(\mu(i))=i$ for $0 \le i \le s - 1$. That is, $\mu$ is either the identity permutation or a product of transpositions. When $n$ is even, by Lemma 6, there exists an integer ${i^\# }\;(1 \le {i^\# } \le s - 1)$ satisfying $\mathcal {C}_{ {l_{i^\# }}}^{(q)} = \mathcal {C}_{\frac{n}{2}}^{(q)}=\{\frac{n}{2}\}$, as $q$ is odd. Note that $\mathcal {C}_{ -{l_{i^\# }}}^{(q)} = \mathcal {C}_{l_{i^\# }}^{(q)}$, so $\mu ({i^\# }) = {i^\# }$. With the help of the above concepts, we give the following Lemma.
\par
\begin{lemma}
Let $u$ $(0\leq u\leq t-1)$ be a fixed integer. Then we have ${\tau _{{q^u}, - 1}}({\mathcal {J}_i}) = {\mathcal {J}_{\mu(i)}}$ for $0\leq i\leq s-1$.
\end{lemma}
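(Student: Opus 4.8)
The plan is to reduce everything to the combinatorial description of how $\tau_{q^u,-1}$ permutes the minimal ideals $\mathcal{I}_{i,j}$, which is exactly the content of Lemma 3, and then to re-package the summands of $\mathcal{J}_i$. First I would record two facts. (1) By definition $\mathcal{J}_i=\bigoplus_{j=0}^{s_i-1}\mathcal{I}_{i,j}$, and by Lemma 1 applied to $l=l_i$ (so that $a=\gcd(t,|C_{l_i}^{(q)}|)=s_i$), the $q^t$-cyclotomic cosets $C_{l_iq^j}^{(q^t)}$ with $0\le j\le s_i-1$ are precisely the $q^t$-cosets whose disjoint union is $C_{l_i}^{(q)}$. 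Since the $q$-cosets $C_{l_0}^{(q)},\dots,C_{l_{s-1}}^{(q)}$ partition $\mathbb{Z}/n\mathbb{Z}$ and each $q^t$-coset lies inside exactly one of them, $\mathcal{J}_i$ is exactly the sum of those minimal ideals $\mathcal{I}\leftrightarrow C_l^{(q^t)}$ of $\mathcal{R}_n^{(q^t)}$ for which $C_l^{(q^t)}\subseteq C_{l_i}^{(q)}$. (2) The index $\mu(i)$ is characterised by $C_{-l_i}^{(q)}=C_{l_{\mu(i)}}^{(q)}$.

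Next I would apply Lemma 3 with $w=u$ and with the dilation parameter of that lemma equal to $-1$. Here one only needs the elementary observation that $(-1)^{-1}\equiv -1\pmod n$, so that for $\mathcal{I}_{i,j}\leftrightarrow C_{l_iq^j}^{(q^t)}$ one gets $\tau_{q^u,-1}(\mathcal{I}_{i,j})=\mathcal{I}_{i',j'}$ with $\mathcal{I}_{i',j'}\leftrightarrow C_{-l_iq^{\,j+u}}^{(q^t)}$. Because multiplication by a power of $q$ fixes the ambient $q$-cyclotomic coset, $C_{-l_iq^{\,j+u}}^{(q^t)}\subseteq C_{-l_i}^{(q)}=C_{l_{\mu(i)}}^{(q)}$; in particular the dependence on $u$ evaporates once we pass from $q^t$-cosets to $q$-cosets. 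By fact (1) (now applied to $\mu(i)$), the image $\mathcal{I}_{i',j'}$ must be one of the summands $\mathcal{I}_{\mu(i),j'}$ of $\mathcal{J}_{\mu(i)}$. Hence $\tau_{q^u,-1}(\mathcal{I}_{i,j})\subseteq\mathcal{J}_{\mu(i)}$ for every $j$, and summing over $0\le j\le s_i-1$ gives $\tau_{q^u,-1}(\mathcal{J}_i)\subseteq\mathcal{J}_{\mu(i)}$.

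Finally I would upgrade this inclusion to an equality by a dimension (or summand) count. Since $\tau_{q^u,-1}$ is a ring automorphism of $\mathcal{R}_n^{(q^t)}$, it is an $\mathbb{F}_q$-linear bijection and hence preserves $\mathbb{F}_q$-dimension. By Lemma 2(v), $\dim_{\mathbb{F}_q}\mathcal{J}_i=s_i\cdot tD_i=td_i$ and likewise $\dim_{\mathbb{F}_q}\mathcal{J}_{\mu(i)}=td_{\mu(i)}$; and $d_{\mu(i)}=|C_{l_{\mu(i)}}^{(q)}|=|C_{-l_i}^{(q)}|=|C_{l_i}^{(q)}|=d_i$, because $x\mapsto -x$ is a bijection of $\mathbb{Z}/n\mathbb{Z}$ commuting with multiplication by $q$. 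Equal $\mathbb{F}_q$-dimensions together with the inclusion force $\tau_{q^u,-1}(\mathcal{J}_i)=\mathcal{J}_{\mu(i)}$. Equivalently, since $\tau_{q^u,-1}$ permutes all minimal ideals of $\mathcal{R}_n^{(q^t)}$ and $\mathcal{J}_i$, $\mathcal{J}_{\mu(i)}$ each contain exactly $s_i=\gcd(t,d_i)=\gcd(t,d_{\mu(i)})=s_{\mu(i)}$ of them, the restriction is already a bijection between the two families of summands. The special indices $i=0$ and, when $n$ is even, $i=i^{\#}$, need no separate treatment: there $\mu$ fixes the index and $-l_i$ generates the same $q$-coset as $l_i$. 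I do not anticipate any real obstacle here; the only points demanding care are the identity $(-1)^{-1}\equiv -1\pmod n$ used when invoking Lemma 3, the disappearance of the $u$-dependence upon passing to $q$-cosets, and the bookkeeping that $\mathcal{J}_i$ is precisely the span of the minimal ideals lying over $\mathcal{K}_i$.
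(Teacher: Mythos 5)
Your proof is correct, and it supplies in full detail the argument that the paper merely outsources ("working in a similar way as in Lemma 8 of Huffman, the result follows"). The route you take --- apply Lemma 3 with dilation $-1$, use $(-1)^{-1}\equiv -1\pmod n$ to see that each $\mathcal{I}_{i,j}\leftrightarrow C_{l_iq^j}^{(q^t)}$ lands on a minimal ideal associated to a $q^t$-coset inside $C_{-l_i}^{(q)}=C_{l_{\mu(i)}}^{(q)}$, then upgrade the inclusion to equality via the summand count $s_i=s_{\mu(i)}$ (or equivalently $d_i=d_{\mu(i)}$) --- is exactly the intended argument, so there is nothing to correct.
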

\begin{proof}
Working in a similar way as in Lemma 8 of Huffman \cite{Huffman1}, the result follows.
\end{proof}
\par
We are interested in the relationship between $\mathcal {C}$ and ${\mathcal {C}^{{ \bot _\Delta }}}$ viewed as $\mathcal {R}_n^{(q)}$-submodules of $\mathcal {R}_n^{(q^t)}$. The following result will prove quite useful.
\par
\begin{lemma}
Let ${a_i}(X),\;{b_i}(X) \in {\mathcal {J}_i}$ for $0\leq i\leq s-1$. If $a(X) = \sum\nolimits_{i = 0}^{s - 1} {{a_i}(X)} $ and $b(X) = \sum\nolimits_{i = 0}^{s - 1} {{b_i}(X)}$, then ${[a(X),b(X)]_\Delta } = \sum\nolimits_{i = 0}^{s - 1} {[{a_i}(X)} ,{b_{\mu (i)}}(X){]_\Delta }.$
\end{lemma}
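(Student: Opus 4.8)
The plan is to reduce the statement to the bilinearity of $[\cdot,\cdot]_\Delta$ (Lemma 11(iii)) together with two structural inputs already available in the paper: the pairwise orthogonality of the ideals $\mathcal{J}_i$ under the ring multiplication of $\mathcal{R}_n^{(q^t)}$, and the description in Lemma 12 of how each automorphism $\tau_{q^u,-1}$ permutes the $\mathcal{J}_i$ according to $\mu$.

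First I would expand using Lemma 11(iii): since $a(X)=\sum_{i}a_i(X)$ and $b(X)=\sum_{j}b_j(X)$ with $a_i(X)\in\mathcal{J}_i$ and $b_j(X)\in\mathcal{J}_j$, applying additivity in both arguments gives
\[
[a(X),b(X)]_\Delta=\sum_{i=0}^{s-1}\sum_{j=0}^{s-1}[a_i(X),b_j(X)]_\Delta .
\]
It therefore suffices to show $[a_i(X),b_j(X)]_\Delta=0$ whenever $j\neq\mu(i)$; then the only surviving terms are those with $j=\mu(i)$, which yields $\sum_{i}[a_i(X),b_{\mu(i)}(X)]_\Delta$ as claimed.

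To establish the vanishing, I would examine the inner element $\gamma a_i(X)\sum_{w=1}^{t-1}\tau_{q^{t/2+w},-1}(b_j(X))$ occurring in the definition of $[a_i(X),b_j(X)]_\Delta$. By Lemma 12, for each $w$ we have $\tau_{q^{t/2+w},-1}(\mathcal{J}_j)=\mathcal{J}_{\mu(j)}$, where the target index $\mu(j)$ does not depend on $w$; since $\mathcal{J}_{\mu(j)}$ is additively closed, the whole sum $\sum_{w=1}^{t-1}\tau_{q^{t/2+w},-1}(b_j(X))$ lies in $\mathcal{J}_{\mu(j)}$. As $\mathcal{J}_i=\bigoplus_k\mathcal{I}_{i,k}$ is an ideal of $\mathcal{R}_n^{(q^t)}$, multiplication by the constant $\gamma\in\mathbb{F}_{q^t}$ keeps $a_i(X)$ inside $\mathcal{J}_i$, so $\gamma a_i(X)\cdot\bigl(\text{element of }\mathcal{J}_{\mu(j)}\bigr)\in\mathcal{J}_i\mathcal{J}_{\mu(j)}$. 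By Lemma 2(v), $\mathcal{I}_{i,k}\mathcal{I}_{i',k'}=\{0\}$ whenever $(i,k)\neq(i',k')$, hence $\mathcal{J}_i\mathcal{J}_{i'}=\{0\}$ whenever $i\neq i'$; thus the inner element is $0$ when $i\neq\mu(j)$. Applying the ring automorphisms $\tau_{q^u,1}$ and summing over $0\le u\le t-1$ then gives $[a_i(X),b_j(X)]_\Delta=0$ in that case. Since $\mu$ is an involution ($\mu(\mu(i))=i$), the condition $i\neq\mu(j)$ is equivalent to $j\neq\mu(i)$, which finishes the proof.

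There is no genuine obstacle here; the single point worth stating explicitly is that the permutation $\mu$ attached to $\tau_{q^u,-1}$ is the \emph{same} for every admissible power $q^u$, which is precisely the content of Lemma 12, so that the image ideal $\mathcal{J}_{\mu(j)}$ in the defining sum of $[\cdot,\cdot]_\Delta$ does not vary with $w$. Once this is observed, the result follows by combining bilinearity with the orthogonality relations $\mathcal{J}_i\mathcal{J}_{i'}=\{0\}$ for $i\neq i'$.
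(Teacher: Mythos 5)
Your argument is correct and follows essentially the same route as the paper's own proof: expand by bilinearity, observe via the $\tau_{q^u,-1}(\mathcal{J}_j)=\mathcal{J}_{\mu(j)}$ result that the inner sum lands in $\mathcal{J}_{\mu(j)}$, and kill the cross terms with $\mathcal{J}_i\mathcal{J}_{\mu(j)}=\{0\}$ together with $\mu$ being an involution. The only discrepancy is that your lemma numbers are shifted by one relative to the paper (bilinearity is its Lemma 10(iii) and the permutation of the $\mathcal{J}_i$ is its Lemma 11), which does not affect the substance.
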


\begin{proof}
By Lemma 10 (iii), we have ${[a(X),b(X)]_\Delta } = \sum\nolimits_{i = 0}^{s - 1} {\sum\nolimits_{j = 0}^{s - 1} {{{[{a_i}(X),{b_j}(X)]}_\Delta }} } $. It suffices to prove the result in the case that ${[{a_i}(X),{b_j}(X)]_\Delta } = 0$ if $j \ne \mu (i)$. By Lemma 11, we have $\sum\limits_{w = 1}^{t - 1} {{\tau _{{q^{{t \mathord{\left/
 {\vphantom {t 2}} \right.
 \kern-\nulldelimiterspace} 2} + w}}, - 1}}} (b_j(X)) \in {\mathcal {J}_{\mu (j)}}$. If $j \ne \mu (i)$, then we have $\mu (j) \ne \mu (\mu (i)) = i$. By Lemma 2, we have that $\mathcal {J}_i{\mathcal {J}_{\mu (j)}}=\{0\}$ implying $\gamma a_i(X)\sum\limits_{w = 1}^{t - 1} {{\tau _{{q^{{t \mathord{\left/
 {\vphantom {t 2}} \right.
 \kern-\nulldelimiterspace} 2} + w}}, - 1}}} (b_j(X))=0$. By the definition of ${[.,.]_\Delta }$, this shows that ${[{a_i}(X),{b_j}(X)]_\Delta } = 0$ if $j \ne \mu (i)$.
\end{proof}
\par
\begin{corollary}
Let $a,b \in \mathbb{F}_{{q^t}}^n$ be identified with $a(X) = \sum\nolimits_{i = 0}^{s - 1} {{a_i}(X)} $ and $b(X) = \sum\nolimits_{i = 0}^{s - 1} {{b_i}(X)} $, respectively, where $a_i(X), b_i(X)\in \mathcal {J}_i$ for $0\leq i\leq s-1$. Then $b\in {\mathcal {C}^{{ \bot _\Delta }}}$ if and only if ${[{a_i}(X),{b_{\mu (i)}}(X)]_\Delta } = 0$ for all  $0\leq i\leq s-1$ and all $a\in \mathcal {C}.$
\end{corollary}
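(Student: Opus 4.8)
The plan is to translate the condition $b\in\mathcal{C}^{\bot_\Delta}$ into a statement about the polynomial form $[\cdot,\cdot]_\Delta$, apply the component decomposition of Lemma~12, and then separate the resulting sum using the direct sum $\mathcal{R}_n^{(q^t)}=\mathcal{J}_0\oplus\cdots\oplus\mathcal{J}_{s-1}$. \emph{Step 1 (reduction to $[\cdot,\cdot]_\Delta$).} First I would record that $b\in\mathcal{C}^{\bot_\Delta}$ if and only if $[a(X),b(X)]_\Delta=0$ for every $a(X)\in\mathcal{C}$. By Lemma~10(i), $[a(X),b(X)]_\Delta=\sum_{k=0}^{n-1}(a,\sigma^k(b))_\Delta X^k$, so this polynomial vanishes precisely when $(a,\sigma^k(b))_\Delta=0$ for all $k$; and a one-line reindexing of the defining sum of $(\cdot,\cdot)_\Delta$ shows $(a,\sigma^k(b))_\Delta=(\sigma^{-k}(a),b)_\Delta$. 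Since $\mathcal{C}$ is cyclic, $\sigma^{-k}(a)$ ranges over $\mathcal{C}$ as $a$ does, so ``$(a,\sigma^k(b))_\Delta=0$ for all $a\in\mathcal{C}$ and all $k$'' is equivalent to ``$(a,b)_\Delta=0$ for all $a\in\mathcal{C}$'', i.e.\ to $b\in\mathcal{C}^{\bot_\Delta}$. (This is essentially the remark already announced at the start of this section.)

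\emph{Step 2 (splitting along the $\mathcal{J}_i$).} Fix $a(X)=\sum_{i=0}^{s-1}a_i(X)\in\mathcal{C}$ with $a_i(X)\in\mathcal{J}_i$, and write $b(X)=\sum_{i=0}^{s-1}b_i(X)$ with $b_i(X)\in\mathcal{J}_i$ as in the statement. Lemma~12 gives $[a(X),b(X)]_\Delta=\sum_{i=0}^{s-1}[a_i(X),b_{\mu(i)}(X)]_\Delta$. I would then verify that the $i$th term lies in $\mathcal{J}_i$: by definition it equals $\sum_{u=0}^{t-1}\tau_{q^u,1}\bigl(\gamma a_i(X)\sum_{w=1}^{t-1}\tau_{q^{t/2+w},-1}(b_{\mu(i)}(X))\bigr)$; by Lemma~11 each $\tau_{q^{t/2+w},-1}(b_{\mu(i)}(X))$ belongs to $\mathcal{J}_{\mu(\mu(i))}=\mathcal{J}_i$, so (as $\mathcal{J}_i$ is an ideal) $\gamma a_i(X)\sum_{w}\tau_{q^{t/2+w},-1}(b_{\mu(i)}(X))\in\mathcal{J}_i$, and each $\tau_{q^u,1}$ carries $\mathcal{J}_i$ onto itself since it merely permutes the minimal ideals $\mathcal{I}_{i,0},\dots,\mathcal{I}_{i,s_i-1}$ among themselves (Lemma~3). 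Hence $[a_i(X),b_{\mu(i)}(X)]_\Delta\in\mathcal{J}_i$; in fact, by Lemma~10(ii), it lies in $\mathcal{K}_i=\mathcal{J}_i\cap\mathcal{R}_n^{(q)}$, though only membership in $\mathcal{J}_i$ will be used.

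\emph{Step 3 (conclusion).} Because $\mathcal{R}_n^{(q^t)}=\mathcal{J}_0\oplus\cdots\oplus\mathcal{J}_{s-1}$ by Lemma~2(v), the sum $\sum_{i=0}^{s-1}[a_i(X),b_{\mu(i)}(X)]_\Delta$ vanishes if and only if every summand vanishes. Combining this with Step~1 yields the chain $b\in\mathcal{C}^{\bot_\Delta}\iff[a(X),b(X)]_\Delta=0$ for all $a\in\mathcal{C}\iff[a_i(X),b_{\mu(i)}(X)]_\Delta=0$ for all $i$ and all $a\in\mathcal{C}$, which is exactly the Corollary. The whole argument is bookkeeping with Lemmas~2, 3, 10, 11 and 12; I expect the only points demanding care to be the shift-invariance identity $(a,\sigma^k(b))_\Delta=(\sigma^{-k}(a),b)_\Delta$ in Step~1 and the check in Step~2 that each $[a_i(X),b_{\mu(i)}(X)]_\Delta$ remains inside its own summand $\mathcal{J}_i$ — neither looks to be a genuine obstacle.
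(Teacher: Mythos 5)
Your proposal is correct and follows essentially the same route as the paper: reduce $b\in\mathcal{C}^{\bot_\Delta}$ to the vanishing of $[a(X),b(X)]_\Delta$ via Lemma 10(i) and the cyclicity of $\mathcal{C}$, decompose by Lemma 12, observe each summand $[a_i(X),b_{\mu(i)}(X)]_\Delta$ lies in $\mathcal{J}_i$, and conclude from the direct sum $\mathcal{R}_n^{(q^t)}=\mathcal{J}_0\oplus\cdots\oplus\mathcal{J}_{s-1}$. Your Steps 1 and 2 merely spell out details (the shift identity $(a,\sigma^k(b))_\Delta=(\sigma^{-k}(a),b)_\Delta$ and the membership check) that the paper leaves implicit.
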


\begin{proof}
Since $\mathcal {C}$ is cyclic, $b\in {\mathcal {C}^{{ \bot _\Delta }}}$ if and only if ${(a,b)_\Delta } = 0$ for all $a\in \mathcal {C}$ if and only if ${(a,{\sigma ^k}(b))_\Delta } = 0$ for all $a\in \mathcal {C}$ and any integer $k$ if and only if ${[a(X),b(X)]_\Delta } = 0$ for all $a(X)\in \mathcal {C}$ by Lemma 10 (i). By Lemma 12, we have ${[a(X),b(X)]_\Delta } = \sum\nolimits_{i = 0}^{s - 1} {[{a_i}(X)} ,{b_{\mu (i)}}(X){]_\Delta }$. For $0\leq i\leq s-1$, by definitions of $\mu$ and ${[.,.]_\Delta }$, we have ${[{a_i}(X),{b_{\mu (i)}}(X)]_\Delta }\in \mathcal {J}_i$. Therefore, as $\mathcal {R}_n^{({q^t})} = {\mathcal {J}_0} \oplus {\mathcal {J}_1} \oplus  \cdots  \oplus {\mathcal {J}_{s - 1}}$ is a direct sum, ${[a(X),b(X)]_\Delta } = \sum\nolimits_{i = 0}^{s - 1} {[{a_i}(X)} ,{b_{\mu (i)}}(X){]_\Delta }=0$ if and only if ${[{a_i}(X),{b_{\mu (i)}}(X)]_\Delta } = 0$ for all $0\leq i\leq s-1$. The proof is completed.
\end{proof}
\par
According to Lemma 5, we have the following Theorem.
\par
\begin{theorem}
Let $\mathcal {C}$ be a cyclic $\mathbb{F}_q$-linear $\mathbb{F}_{q^t}$-code of length $n$ and ${\mathcal {C}^{{ \bot _\Delta }}}$ the dual code of $\mathcal {C}$. Then we have $\mathcal {C} = {\mathcal {C}_0} \oplus {\mathcal {C}_1} \oplus  \cdots  \oplus {\mathcal {C}_{s - 1}}$ and ${\mathcal {C}^{{ \bot _\Delta }}} = \mathcal {C}_0^{(\Delta )} \oplus \mathcal {C}_1^{(\Delta )} \oplus  \cdots  \oplus \mathcal {C}_{s - 1}^{(\Delta )}$, where ${\mathcal {C}_i} = \mathcal {C} \cap {\mathcal {J}_i}$ and $\mathcal {C}_i^{(\Delta )} = {\mathcal {C}^{{ \bot _\Delta }}} \cap {\mathcal {J}_i}$ for all $0\leq i\leq s-1$. Thus, for each $i$, we have \[\mathcal {C}_{\mu (i)}^{(\Delta )} = \{ a(X) \in {\mathcal {J}_{\mu (i)}}|{[c(X),a(X)]_\Delta } = 0\;{for}\;{all}\;c(X) \in {\mathcal {C}_i}\} .\]
Furthermore, if the $\mathcal {K}_i$-dimension of $\mathcal {C}_i$ is $k_i$, then the $\mathcal {K}_{\mu(i)}$-dimension of $\mathcal {C}_{\mu(i)}^{(\Delta )}$ is $t-k_i$.
\end{theorem}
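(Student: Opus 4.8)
The plan is to reproduce, step for step, Huffman's proof of \cite[Theorem 7]{Huffman1}, with Corollary 1 above playing the role that the analogous statement for the $\gamma$-form plays there. First I would record the two direct-sum decompositions. Let $e_i$ be the identity element of the field $\mathcal{K}_i$; then $e_0,\dots ,e_{s-1}$ are pairwise orthogonal idempotents of $\mathcal{R}_n^{(q)}$ summing to $1$, and $e_i$ is also the identity of $\mathcal{J}_i$ with $e_i\mathcal{J}_j=\{0\}$ for $j\neq i$ (a restatement of the decompositions in Lemma 2). Since $\mathcal{C}$ is an $\mathcal{R}_n^{(q)}$-submodule of $\mathcal{R}_n^{(q^t)}$, every $c\in\mathcal{C}$ satisfies $c=\sum_i e_ic$ with $e_ic\in\mathcal{C}\cap\mathcal{J}_i=\mathcal{C}_i$, so $\mathcal{C}=\mathcal{C}_0\oplus\cdots\oplus\mathcal{C}_{s-1}$; because $\mathcal{C}^{\perp_\Delta}$ is again a cyclic $\mathbb{F}_q$-linear $\mathbb{F}_{q^t}$-code, the same argument yields $\mathcal{C}^{\perp_\Delta}=\mathcal{C}_0^{(\Delta)}\oplus\cdots\oplus\mathcal{C}_{s-1}^{(\Delta)}$.

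Next I would prove the displayed description of $\mathcal{C}_{\mu(i)}^{(\Delta)}$. Fix $i$ and take $a\in\mathcal{J}_{\mu(i)}$, so the only nonzero $\mathcal{J}$-component of $a$ sits in position $\mu(i)$. Combining Corollary 1 with the decomposition $\mathcal{C}=\bigoplus_j\mathcal{C}_j$ (which makes the components of a general element of $\mathcal{C}$ range independently over the $\mathcal{C}_j$), the element $a$ lies in $\mathcal{C}^{\perp_\Delta}$ exactly when $[a'_j(X),a_{\mu(j)}(X)]_\Delta=0$ for every $j$ and every $a'_j\in\mathcal{C}_j$; but $a_{\mu(j)}=0$ unless $j=i$ because $\mu$ is a bijection, so all these conditions are automatic except the one at $j=i$, which reads $[c(X),a(X)]_\Delta=0$ for all $c(X)\in\mathcal{C}_i$. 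Intersecting with $\mathcal{J}_{\mu(i)}$ now gives precisely $\mathcal{C}_{\mu(i)}^{(\Delta)}=\{a(X)\in\mathcal{J}_{\mu(i)}\mid [c(X),a(X)]_\Delta=0\ \text{for all}\ c(X)\in\mathcal{C}_i\}$.

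For the dimension count, the key is that $[\cdot,\cdot]_\Delta$ restricts to a pairing $\mathcal{J}_i\times\mathcal{J}_{\mu(i)}\to\mathcal{K}_i$. Indeed, for $b\in\mathcal{J}_{\mu(i)}$ each $\tau_{q^{t/2+w},-1}$ maps $\mathcal{J}_{\mu(i)}$ onto $\mathcal{J}_{\mu(\mu(i))}=\mathcal{J}_i$ by Lemma 11, hence $\gamma a(X)\sum_{w=1}^{t-1}\tau_{q^{t/2+w},-1}(b(X))\in\mathcal{J}_i$ for $a\in\mathcal{J}_i$ (as $\mathcal{J}_i$ is an ideal), and each $\tau_{q^u,1}$ preserves $\mathcal{J}_i$ by Lemma 3; together with Lemma 10(ii) and Lemma 4 this gives $[a(X),b(X)]_\Delta\in\mathcal{R}_n^{(q)}\cap\mathcal{J}_i=\mathcal{K}_i$. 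By Lemma 10(iv) this pairing is $\mathcal{K}_i$-linear in its first slot and semilinear in its second along the field isomorphism $\tau_{1,-1}\colon\mathcal{K}_{\mu(i)}\to\mathcal{K}_i$, and by Lemma 10(v) together with Lemma 12 it is non-degenerate in both slots. Since $\dim_{\mathcal{K}_i}\mathcal{J}_i=\dim_{\mathcal{K}_{\mu(i)}}\mathcal{J}_{\mu(i)}=t$ (from $\mathcal{I}_{i,j}\cong\mathbb{F}_{q^{tD_i}}$, $\mathcal{K}_i\cong\mathbb{F}_{q^{d_i}}$ and $d_i=s_iD_i$ in Lemma 2), the elementary fact that under a non-degenerate sesquilinear pairing of two $t$-dimensional spaces the annihilator of a $k_i$-dimensional subspace has dimension $t-k_i$, applied to $\mathcal{C}_i\subseteq\mathcal{J}_i$, yields $\dim_{\mathcal{K}_{\mu(i)}}\mathcal{C}_{\mu(i)}^{(\Delta)}=t-k_i$ through the description established in the previous paragraph.

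The one point that needs a little extra care is the left non-degeneracy of the restricted pairing, i.e.\ that $[a,b]_\Delta=0$ for all $a\in\mathcal{J}_i$ forces $b=0$, since Lemma 10(v) as proved only supplies the right-hand version. This is dealt with by first noting that $(\cdot,\cdot)_\Delta$ is non-degenerate in its first argument as well --- the argument is the mirror image of that of Lemma 9(iv), using the non-degeneracy of $(x,y)\mapsto Tr_{q,t}(xy)$ on $\mathbb{F}_{q^t}$ together with the injectivity of $\psi$ from Lemma 8 --- and then transporting this fact to $[\cdot,\cdot]_\Delta$ through Lemma 10(i) and down to the $\mathcal{J}_i\times\mathcal{J}_{\mu(i)}$ summand through Lemma 12. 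Once that is in place, everything else is the bookkeeping already assembled in Sections 2 and 3, so this is where I expect the real work to lie.
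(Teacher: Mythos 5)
Your proposal is correct and follows essentially the same route as the paper, which simply defers to Huffman's Theorem 7: the idempotent decomposition, Corollary 1 to identify $\mathcal{C}_{\mu(i)}^{(\Delta)}$ as the annihilator of $\mathcal{C}_i$ under the restricted pairing $\mathcal{J}_i\times\mathcal{J}_{\mu(i)}\to\mathcal{K}_i$, and a sesquilinear dimension count. The only remark worth making is that your final ``extra care'' step is not actually needed: non-degeneracy of the restricted pairing in its first argument (obtained from Lemma 10(v) together with Lemma 12) combined with $\dim_{\mathcal{K}_i}\mathcal{J}_i=\dim_{\mathcal{K}_{\mu(i)}}\mathcal{J}_{\mu(i)}=t$ already forces non-degeneracy in the second argument.
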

\begin{proof}
The proof of this result is quite similar to Theorem 7 of \cite{Huffman1} and so is omitted.
\end{proof}
\par
In addition, a cyclic $\mathbb{F}_q$-linear $\mathbb{F}_{q^t}$-code $\mathcal {C}$ of length $n$ is said to be cyclic $\Delta$-self-orthogonal if it satisfies $\mathcal {C} \subseteq {\mathcal {C}^{{ \bot _\Delta }}}$, and it is called cyclic $\Delta$-self-dual if it satisfies $\mathcal {C} = {\mathcal {C}^{{ \bot _\Delta }}}$.
Cyclic $\Delta$-self-orthogonal and cyclic $\Delta$-self-dual $\mathbb{F}_q$-linear $\mathbb{F}_{q^t}$-codes are characterized in the following Lemma.
\par
\begin{lemma}
Let $\mathcal {C}$ be a cyclic $\mathbb{F}_q$-linear $\mathbb{F}_{q^t}$-code of length $n$ and ${\mathcal {C}^{{ \bot _\Delta }}}$ the dual code of $\mathcal {C}$. Let us write $\mathcal {C} = {\mathcal {C}_0} \oplus {\mathcal {C}_1} \oplus  \cdots  \oplus {\mathcal {C}_{s - 1}}$ and ${\mathcal {C}^{{ \bot _\Delta }}} = \mathcal {C}_0^{(\Delta )} \oplus \mathcal {C}_1^{(\Delta )} \oplus  \cdots  \oplus \mathcal {C}_{s - 1}^{(\Delta )}$, where ${\mathcal {C}_i} = \mathcal {C} \cap {\mathcal {J}_i}$ and $\mathcal {C}_i^{(\Delta )} = {\mathcal {C}^{{ \bot _\Delta }}} \cap {\mathcal {J}_i}$ for all $0\leq i\leq s-1$. Then,\\
(i) $\mathcal {C}$ is cyclic $\Delta$-self-orthogonal if and only if $\mathcal {C}_i \subseteq \mathcal {C}_i^{(\Delta )}$ for all $0\leq i\leq s-1$.\\
(ii) $\mathcal {C}$ is cyclic $\Delta$-self-dual if and only if $\mathcal {C}_i =\mathcal {C}_i^{(\Delta )}$ for all $0\leq i\leq s-1$.
\end{lemma}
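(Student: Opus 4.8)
The plan is to deduce both statements directly from the direct-sum decompositions $\mathcal{C} = \mathcal{C}_0 \oplus \mathcal{C}_1 \oplus \cdots \oplus \mathcal{C}_{s-1}$ and $\mathcal{C}^{\perp_\Delta} = \mathcal{C}_0^{(\Delta)} \oplus \mathcal{C}_1^{(\Delta)} \oplus \cdots \oplus \mathcal{C}_{s-1}^{(\Delta)}$ supplied by Theorem 1, together with the ambient decomposition $\mathcal{R}_n^{(q^t)} = \mathcal{J}_0 \oplus \mathcal{J}_1 \oplus \cdots \oplus \mathcal{J}_{s-1}$. The only structural input needed is that the $i$-th component of each of $\mathcal{C}$ and $\mathcal{C}^{\perp_\Delta}$ is recovered by intersecting with $\mathcal{J}_i$, which is precisely how $\mathcal{C}_i = \mathcal{C}\cap\mathcal{J}_i$ and $\mathcal{C}_i^{(\Delta)} = \mathcal{C}^{\perp_\Delta}\cap\mathcal{J}_i$ are defined. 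Everything then reduces to an elementary comparison of $\mathbb{F}_q$-subspaces that split along the same family of ideals.

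For part (i): in the forward direction, assuming $\mathcal{C}\subseteq\mathcal{C}^{\perp_\Delta}$, I would take an arbitrary $x\in\mathcal{C}_i = \mathcal{C}\cap\mathcal{J}_i$; then $x\in\mathcal{C}\subseteq\mathcal{C}^{\perp_\Delta}$ and simultaneously $x\in\mathcal{J}_i$, hence $x\in\mathcal{C}^{\perp_\Delta}\cap\mathcal{J}_i = \mathcal{C}_i^{(\Delta)}$, so $\mathcal{C}_i\subseteq\mathcal{C}_i^{(\Delta)}$ for every $0\le i\le s-1$. For the converse, assuming $\mathcal{C}_i\subseteq\mathcal{C}_i^{(\Delta)}$ for all $i$, I would simply sum the inclusions: $\mathcal{C} = \bigoplus_{i=0}^{s-1}\mathcal{C}_i \subseteq \bigoplus_{i=0}^{s-1}\mathcal{C}_i^{(\Delta)} = \mathcal{C}^{\perp_\Delta}$. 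Part (ii) is handled identically: if $\mathcal{C}=\mathcal{C}^{\perp_\Delta}$, then intersecting both sides with $\mathcal{J}_i$ yields $\mathcal{C}_i = \mathcal{C}_i^{(\Delta)}$; conversely, equality of all components forces equality of the two direct sums. (Alternatively, (ii) can be obtained from (i) by upgrading the inclusions $\mathcal{C}_i\subseteq\mathcal{C}_i^{(\Delta)}$ to equalities using the $\mathcal{K}_i$-dimension count in Theorem 1, but the direct-sum argument is cleaner and bypasses the permutation $\mu$.)

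I expect no genuine obstacle here; the substance is already contained in Theorem 1, which establishes that both $\mathcal{C}$ and $\mathcal{C}^{\perp_\Delta}$ decompose compatibly with the $\mathcal{J}_i$. The one point requiring care is to use the decomposition $\mathcal{C} = \bigoplus_i(\mathcal{C}\cap\mathcal{J}_i)$ on \emph{both} sides with matching indexing, i.e.\ to remember that $\mathcal{C}_i^{(\Delta)}$ is $\mathcal{C}^{\perp_\Delta}\cap\mathcal{J}_i$ with the same index $i$, not a $\mu$-twisted one: the permutation $\mu$ intervenes only in the pairing description $\mathcal{C}_{\mu(i)}^{(\Delta)} = \{a(X)\in\mathcal{J}_{\mu(i)} \mid [c(X),a(X)]_\Delta = 0 \text{ for all } c(X)\in\mathcal{C}_i\}$ and has no bearing on the present set-theoretic comparison.
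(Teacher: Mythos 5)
Your argument is correct and is exactly the elementary direct-sum comparison the paper has in mind; the paper itself dismisses this lemma with ``The proof is trivial,'' and your write-up simply spells out that triviality, including the right observation that $\mu$ plays no role here. Nothing further is needed.
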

\begin{proof}
The proof is trivial.
\end{proof}
\par
Now we consider the case $t=2$ and study cyclic $\Delta$-self-orthogonal and cyclic $\Delta$-self-dual $\mathbb{F}_q$-linear $\mathbb{F}_{q^2}$-codes. It is not difficult to find that, when $t=2$, the minimal ideal ${\mathcal {I}_{i,j}}$ of $\mathcal {R}_n^{({q^2})}$ is the finite field of order ${q^{2{D_i}}}$ for $0\leq i\leq s-1$ and $0\leq j\leq s_i-1$, where ${D_i} = \frac{{{d_i}}}{{{s_i}}}$ and ${s_i} = \gcd (2,{d_i})$. For $0\leq i\leq s-1$, we choose primitive elements ${\rho _{i,0}}(X),{\rho _{i,1}}(X), \cdots ,{\rho _{i,{s_i} - 1}}(X)$ of ${\mathcal {I}_{i,0}},{\mathcal {I}_{i,1}}, \cdots ,{\mathcal {I}_{i,{s_i} - 1}}$, respectively, satisfying
${\tau _{{q^j},1}}({\rho _{i,0}}(X)) = {\rho _{i,j}}(X)$ and $e_{i,j}(X)$ the identity of ${\mathcal {I}_{i,j}}$ for all $0\leq j\leq s_i-1$.

%%%%%%%%%%%%%%%%%%%%%%%%%%%%%%%%%%%%%%%%%%%%%%%%%%%%%%%%%%%%%%%%%%%%%%%%%%%%%%%%%%%%%%%%%%
\subsection{Cyclic $\Delta$-self-orthogonal $\mathbb{F}_q$-linear $\mathbb{F}_{q^2}$-codes} \label{}
\noindent
The bases of all cyclic $\Delta$-self-orthogonal $\mathbb{F}_q$-linear $\mathbb{F}_{q^2}$-codes of length $n$ are determined in the following Theorem, where $\gcd (n,q) = 1$.
\par
\begin{theorem}
Let $t=2$, $q$ be a power of the prime $p$ and $\gcd (n,q) = 1$. Let $\mathcal {C}$ be a cyclic $\mathbb{F}_q$-linear $\mathbb{F}_{q^2}$-code of length $n$ and ${\mathcal {C}^{{ \bot _\Delta }}}$ the dual code of $\mathcal {C}$. We write $\mathcal {C} = {\mathcal {C}_0} \oplus {\mathcal {C}_1} \oplus  \cdots  \oplus {\mathcal {C}_{s - 1}}$ and ${\mathcal {C}^{{ \bot _\Delta }}} = \mathcal {C}_0^{(\Delta )} \oplus \mathcal {C}_1^{(\Delta )} \oplus  \cdots  \oplus \mathcal {C}_{s - 1}^{(\Delta )}$, where ${\mathcal {C}_i} = \mathcal {C} \cap {\mathcal {J}_i}$ and $\mathcal {C}_i^{(\Delta )} = {\mathcal {C}^{{ \bot _\Delta }}} \cap {\mathcal {J}_i}$ for all $0\leq i\leq s-1$. Then $\mathcal {C}$ is cyclic $\Delta$-self-orthogonal if and only if for each $i$ $(0\leq i\leq s-1)$, the following hold.\\
(i) When $i=0$ or, if $n$ is even, $i=i^\#$, then \\
\indent
(a) $\mathcal {C}_i=\{0\}$, or\\
\indent
(b) $\mathcal {C}_i$ is a 1-dimensional $\mathcal {K}_i$-subspace of $\mathcal {J}_i$ with basis $\{ {\rho _{i,0}}{(X)^k}\} $, where $k=0$ when $q$ is even and $k=\frac{{q + 1}}{2}$ when $q$ is odd.\\
(ii) When $i\neq 0$, $i\neq i^\#$, $\mu(i)=i$ and ${\tau _{1, - 1}}({\mathcal {I}_{i,0}}) = {\mathcal {I}_{i,0}}$, then\\
\indent
(a) $\mathcal {C}_i=\{0\}$, or\\
\indent
(b) $\mathcal {C}_i$ is 1-dimensional over $\mathcal {K}_i$ with basis $\{ {e_{i,0}}(X) + {\rho _{i,1}}{(X)^k}\} $, where $k = ({q^{{{{d_i}} \mathord{\left/
 {\vphantom {{{d_i}} 2}} \right.
 \kern-\nulldelimiterspace} 2}}} - 1)m$ for $0\leq m\leq {q^{{{{d_i}} \mathord{\left/
 {\vphantom {{{d_i}} 2}} \right.
 \kern-\nulldelimiterspace} 2}}}$.\\
 (iii) When $i\neq 0$, $i\neq i^\#$, $\mu(i)=i$ and ${\tau _{1, - 1}}({\mathcal {I}_{i,0}}) = {\mathcal {I}_{i,1}}$, then\\
\indent
 (a) $\mathcal {C}_i=\{0\}$, or\\
\indent
(b) $\mathcal {C}_i$ is 1-dimensional over $\mathcal {K}_i$ with bases $\{ {e_{i,0}}(X) \}$, $\{ {e_{i,1}}(X) \}$ and $\{ {e_{i,0}}(X) + {\rho _{i,1}}{(X)^k}\} $, where $k = ({q^{{{{d_i}} \mathord{\left/
 {\vphantom {{{d_i}} 2}} \right.
 \kern-\nulldelimiterspace} 2}}} + 1)m$ for $0 \le m \le {q^{{{{d_i}} \mathord{\left/
 {\vphantom {{{d_i}} 2}} \right.
 \kern-\nulldelimiterspace} 2}}} - 2$.\\
 (iv) When $\mu(i)\neq i$ and $d_i$ is odd,\\
 \indent
 (a) if $\mathcal {C}_i=\{0\}$, then ${\mathcal {C}_{\mu (i)}} \subseteq {\mathcal {J}_{\mu (i)}}$;\\
\indent
 (b) if $\mathcal {C}_i=\mathcal {J}_i$, then ${\mathcal {C}_{\mu (i)}}=\{0\} $;\\
\indent
 (c)  if $\mathcal {C}_i$ is 1-dimensional over $\mathcal {K}_i$ with basis $\{{\rho _{i,0}}{(X)^k}\} $, where $0 \le k \le {q^{{d_i}}}$, then ${\mathcal {C}_{\mu (i)}} \subseteq \mathcal {C}_{\mu (i)}^{(\Delta )}$ with $\{ {\rho _{\mu (i),0}}{(X)^{k'}}\} $ the basis of $\mathcal {C}_{\mu (i)}^{(\Delta )}$ over $\mathcal {K}_{\mu(i)}$, where $k'=k=0$ or $k'=q^{d_i}+1-k$ for $1 \le k \le {q^{{d_i}}}$.\\
 (v)  When $\mu(i)\neq i$ and $d_i$ is even,\\
 \indent
 (a) if $\mathcal {C}_i=\{0\}$, then ${\mathcal {C}_{\mu (i)}} \subseteq {\mathcal {J}_{\mu (i)}}$;\\
\indent
 (b) if $\mathcal {C}_i=\mathcal {J}_i$, then ${\mathcal {C}_{\mu (i)}}=\{0\} $;\\
\indent
 (c)  if $\mathcal {C}_i$ is 1-dimensional over $\mathcal {K}_i$ with basis $\{ a(X)\} $, then ${\mathcal {C}_{\mu (i)}} \subseteq \mathcal {C}_{\mu (i)}^{(\Delta )}$ with $\{ a'(X)\} $ the basis of $\mathcal {C}_{\mu (i)}^{(\Delta )}$ over $\mathcal {K}_{\mu(i)}$, where $a'(X) = {e_{\mu (i),0}}(X) + {\rho _{\mu (i),1}}{(X)^{k'}}$ when $a(X) = {e_{i,0}}(X) + {\rho _{i,1}}{(X)^k}$ with
  $k'=k=0$ or $k'=q^{d_i}-1-k$ for $1 \le k \le {q^{{d_i}}-2}$.
\end{theorem}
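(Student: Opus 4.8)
\emph{Strategy.} The whole argument runs through the component decomposition. By Lemma 13, $\mathcal{C}$ is cyclic $\Delta$-self-orthogonal iff $\mathcal{C}_i\subseteq\mathcal{C}_i^{(\Delta)}$ for every $i$, and by Theorem 1 each $\mathcal{C}_{\mu(i)}^{(\Delta)}$ is the $[\cdot,\cdot]_\Delta$-orthogonal of $\mathcal{C}_i$ inside $\mathcal{J}_{\mu(i)}$, with $\dim_{\mathcal{K}_{\mu(i)}}\mathcal{C}_{\mu(i)}^{(\Delta)}=2-\dim_{\mathcal{K}_i}\mathcal{C}_i$. The plan is to run through the $\mu$-orbits on $\{0,1,\dots,s-1\}$. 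I would first record two simplifications valid because $t=2$: \textbf{(a)} since $\psi(\beta^{q})=\beta^{q^{2}}=\beta$, the form reads $(x,y)_\Delta=\sum_j\mathrm{Tr}_{q,2}(\gamma x_jy_j)$, which is symmetric; combined with $(x,\sigma^k(y))_\Delta=(y,\sigma^{-k}(x))_\Delta$ and Lemma 10(i) this yields $[y(X),x(X)]_\Delta=\tau_{1,-1}([x(X),y(X)]_\Delta)$, so $[x,y]_\Delta=0\Leftrightarrow[y,x]_\Delta=0$; \textbf{(b)} on $\mathcal{R}_n^{(q^2)}$ one has $\tau_{q^{t/2+1},-1}=\tau_{q^2,-1}=\tau_{1,-1}$ and $\tau_{q,1}^2=\tau_{q^2,1}=\mathrm{id}$, whence $[x(X),y(X)]_\Delta=\gamma x(X)\tau_{1,-1}(y(X))+\tau_{q,1}(\gamma x(X)\tau_{1,-1}(y(X)))$.

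\emph{$\mu$-fixed $i$ (cases (i)--(iii)).} Here $\mathcal{C}_i\subseteq\mathcal{C}_i^{(\Delta)}$ together with $\dim_{\mathcal{K}_i}\mathcal{C}_i^{(\Delta)}=2-\dim_{\mathcal{K}_i}\mathcal{C}_i$ forces $\dim_{\mathcal{K}_i}\mathcal{C}_i\le1$; the zero case is (a), and for $\mathcal{C}_i=\mathcal{K}_ia(X)$ the condition is, by Lemma 10(iv), exactly $[a(X),a(X)]_\Delta=0$. I would evaluate this on $\mathcal{J}_i$ using Lemmas 6 and 7 to pin down the actions of $\tau_{1,-1}$ and $\tau_{q,1}$: in case (i), $\mathcal{J}_i=\mathcal{I}_{i,0}\cong\mathbb{F}_{q^2}$ with $\tau_{1,-1}=\mathrm{id}$ and $\tau_{q,1}$ the $q$-power map, so $[a,a]_\Delta=\mathrm{Tr}_{q,2}(\gamma a^2)$, which vanishes iff $\gamma a^2\in\ker\mathrm{Tr}_{q,2}=\gamma\mathbb{F}_q$ (using $\gamma^{q}=-\gamma$), i.e.\ $a^2\in\mathbb{F}_q$, singling out the stated $k$ modulo $q+1$. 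In cases (ii)--(iii) $d_i$ is even, $\mathcal{J}_i=\mathcal{I}_{i,0}\oplus\mathcal{I}_{i,1}$ with $\mathcal{I}_{i,j}\cong\mathbb{F}_{q^{d_i}}$, $\tau_{q,1}$ interchanges the two summands, and $\tau_{1,-1}$ is the $q^{d_i/2}$-power map which fixes each summand (case (ii)) or interchanges them (case (iii)). Writing $a=a_0+a_1$ with $a_j\in\mathcal{I}_{i,j}$ and using $\mathcal{I}_{i,0}\mathcal{I}_{i,1}=\{0\}$, $\gamma^q=-\gamma$ and $\tau_{q,1}^2=\mathrm{id}$, the system $[a,a]_\Delta=0$ collapses to one equation which, after normalizing $a_0=e_{i,0}(X)$, reads $N_{\mathbb{F}_{q^{d_i}}/\mathbb{F}_{q^{d_i/2}}}(a_1)=1$ (case (ii)) resp.\ $a_1\in\mathbb{F}_{q^{d_i/2}}$ (case (iii)); with $a_1=\rho_{i,1}(X)^k$ these become $(q^{d_i/2}-1)\mid k$ resp.\ $(q^{d_i/2}+1)\mid k$, giving the asserted ranges of $m$, and in case (iii) the lines $\mathcal{C}_i=\mathcal{I}_{i,0}$ (basis $\{e_{i,0}(X)\}$) and $\mathcal{C}_i=\mathcal{I}_{i,1}$ (basis $\{e_{i,1}(X)\}$) are isotropic too because there $\gamma a\tau_{1,-1}(a')\in\mathcal{I}_{i,0}\mathcal{I}_{i,1}=\{0\}$.

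\emph{Transposition pairs $\{i,\mu(i)\}$, $i\neq\mu(i)$ (cases (iv)--(v)).} By simplification (a) the two requirements $\mathcal{C}_i\subseteq\mathcal{C}_i^{(\Delta)}$ and $\mathcal{C}_{\mu(i)}\subseteq\mathcal{C}_{\mu(i)}^{(\Delta)}$ are equivalent, so it suffices to describe $\mathcal{C}_{\mu(i)}^{(\Delta)}$ in terms of $\mathcal{C}_i$. If $\mathcal{C}_i=\{0\}$ then $\mathcal{C}_{\mu(i)}^{(\Delta)}=\mathcal{J}_{\mu(i)}$ (and $\mathcal{C}_i^{(\Delta)}=\mathcal{J}_i$), giving (a); if $\mathcal{C}_i=\mathcal{J}_i$ then $\mathcal{C}_{\mu(i)}^{(\Delta)}=\{0\}$, giving (b). For $\mathcal{C}_i=\mathcal{K}_i\rho_{i,0}(X)^k$ one-dimensional I would compute $\mathcal{C}_{\mu(i)}^{(\Delta)}=\{b\in\mathcal{J}_{\mu(i)}:[\rho_{i,0}(X)^k,b]_\Delta=0\}$ via (b): Lemma 3 and Lemma 6 show that $\tau_{1,-1}$ and $\tau_{q,1}$ carry the summand(s) of $\mathcal{J}_{\mu(i)}$ isomorphically onto the matching summand(s) of $\mathcal{J}_i$, so the condition reduces to the vanishing of a single relative trace of $\gamma\rho_{i,0}(X)^k\tau_{1,-1}(b)$, whose kernel is $\gamma$ times the relevant base field (again by $\gamma^q=-\gamma$); hence the solution set is precisely the one $\mathcal{K}_{\mu(i)}$-line $\tau_{1,-1}^{-1}(\rho_{i,0}(X)^{-k}\mathcal{K}_i)$, and reading off its generator gives $k'=k=0$ or $k'=q^{d_i}+1-k$ when $d_i$ is odd (where $\mathcal{J}_i,\mathcal{J}_{\mu(i)}\cong\mathbb{F}_{q^{2d_i}}$) and $k'=k=0$ or $k'=q^{d_i}-1-k$ with (normalized) generator $e_{\mu(i),0}(X)+\rho_{\mu(i),1}(X)^{k'}$ when $d_i$ is even. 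Assembling the per-orbit descriptions gives one implication; the converse is immediate since each listed configuration is exactly one for which the component condition $\mathcal{C}_i\subseteq\mathcal{C}_i^{(\Delta)}$ holds at every $i$.

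\emph{Main obstacle.} The delicate part is the bookkeeping in cases (ii)--(iii) and in the one-dimensional part of (iv)--(v): one must extract from Lemmas 6 and 7 whether $\tau_{q,1}$ and $\tau_{1,-1}$ fix or swap $\mathcal{I}_{i,0}$ and $\mathcal{I}_{i,1}$ and which Frobenius power each realizes, then translate the ensuing norm/trace vanishing conditions into the explicit congruences ($k$ a multiple of $q^{d_i/2}\mp1$, or $k'=q^{d_i}\pm1-k$) while correctly including the degenerate generators $e_{i,0}(X),e_{i,1}(X)$. The one device used repeatedly is $\gamma^{q}=-\gamma$, which identifies the kernel of $\mathrm{Tr}_{q,2}$ (and of the relevant relative traces) with $\gamma$ times the appropriate base field.
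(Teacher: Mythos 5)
Your proposal is correct and follows essentially the same route as the paper's proof: reduce to the components $\mathcal{C}_i$ via Theorem 1 and Lemma 13, specialize $[\cdot,\cdot]_\Delta$ to $\gamma a(X)\tau_{1,-1}(b(X))-\gamma\tau_{q,1}(a(X)\tau_{1,-1}(b(X)))$ for $t=2$, enumerate the $1$-dimensional $\mathcal{K}_i$-subspaces by $\rho_{i,0}(X)^k$ (resp.\ $e_{i,0}(X)+\rho_{i,1}(X)^k$), and read off the congruences on $k$ from the actions of $\tau_{1,-1}$ and $\tau_{q,1}$ given in Lemmas 6 and 7. Your explicit observation that the $t=2$ form is symmetric, so that for a transposition pair only one of the two containments $\mathcal{C}_i\subseteq\mathcal{C}_i^{(\Delta)}$, $\mathcal{C}_{\mu(i)}\subseteq\mathcal{C}_{\mu(i)}^{(\Delta)}$ need be imposed, is a small but worthwhile sharpening of what the paper leaves implicit.
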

\begin{proof}
(i) Suppose that $\mu(i)=i$. Since ${\mathcal {C}_i} \subseteq \mathcal {C}_i^{(\Delta )} = \mathcal {C}_{\mu (i)}^{(\Delta )}$ and the sum of the $\mathcal {K}_i$-dimensions of $\mathcal {C}_i$ and $\mathcal {C}_{\mu (i)}^{(\Delta )}$ is at most 2, by Lemma 5, either $\mathcal {C}_i$ is $\{0\}$ or $\mathcal {C}_i$ is 1-dimensional over $\mathcal {K}_i$ and ${\mathcal {C}_i} = \mathcal {C}_i^{(\Delta )}$.\\
\indent
Assume that $i=0$ or, if $n$ is even, $i=i^\#$. In this case, by Lemma 2 (iv), (v) and Lemma 6 (ii), we have $d_i=s_i=D_i=1$, ${\mathcal {J}_i} = {\mathcal {I}_{i,0}} \cong {\mathbb{F}_{{q^2}}}$ and $\mathcal {K}_i\cong \mathbb{F}_q$. By the proof of Theorem 8 (i) of Huffman \cite{Huffman1}, we have $\{ {\rho _{i,0}}{(X)^k}\} $ are bases of the $q+1$ distinct 1-dimensional $\mathcal {K}_i$-subspaces of $\mathcal {J}_i$, where $0\leq k\leq q$. It is simple to show that $\{ {\rho _{i,0}}{(X)^k}\} $ are bases of $\mathcal {C}_i$, where $0\leq k\leq q$. As ${\mathcal {C}_i} = \mathcal {C}_i^{(\Delta )}$, we have ${[{\rho _{i,0}}{(X)^k},{\rho _{i,0}}{(X)^k}]_\Delta } = 0$. By Lemma 7 (i) and (ii), we have ${\tau _{1, - 1}}(c(X)) = c(X)$
and ${\tau _{q, 1}}(c(X)) = c(X)^q$ for $c(X)\in \mathcal {J}_i$. Then, we have ${\tau _{q, -1}}(c(X)) = c(X)^q$ for $c(X)\in \mathcal {J}_i$. We will use the observation that ${\tau _{q,1}}(\gamma ) = {\gamma ^q} =  - \gamma $ throughout the proof.\\
\indent
It is straightforward to show that (a) holds. For (b), we have
$$\begin{array}{l}
{[a(X),b(X)]_\Delta } = {\tau _{1,1}}(\gamma a(X){\tau _{q^2, - 1}}(b(X))) + {\tau _{q,1}}(\gamma a(X){\tau _{q^2, - 1}}(b(X)))\\
\;\;\;\;\;\;\;\;\;\;\;\;\;\;\;\;\;\;\;\;\;\;\; = \gamma a(X){\tau _{q^2, - 1}}(b(X)) - \gamma {\tau _{q,1}}(a(X){\tau _{q^2, - 1}}(b(X))).
\end{array}$$
So for $0\leq k\leq q$, \[{[{\rho _{i,0}}{(X)^k},{\rho _{i,0}}{(X)^k}]_\Delta } = \gamma {\rho _{i,0}}{(X)^k}{\rho _{i,0}}{(X)^{kq^2}} - \gamma {\rho _{i,0}}{(X)^{kq}}{\rho _{i,0}}{(X)^{k{q^3}}}.\]
As ${\rho _{i,0}}{(X)^{{q^2}}} = {\rho _{i,0}}(X) \in {\mathcal {I}_{i,0}} \cong {\mathbb{F}_{{q^2}}}$, we have $${[{\rho _{i,0}}{(X)^k},{\rho _{i,0}}{(X)^k}]_\Delta } = \gamma {\rho _{i,0}}{(X)^k}{\rho _{i,0}}{(X)^{k}} - \gamma {\rho _{i,0}}{(X)^{kq}}{\rho _{i,0}}{(X)^{kq}} = 0$$ if and only if ${\rho _{i,0}}{(X)^{2k(q-1)}}=e_{i,0}(X)$ if and only if $2k(q-1) \equiv 0\;(\bmod \;{q^2} - 1)$ if and only if $2k \equiv 0\;(\bmod\; q +1)$. When $q$ is even, we have $k \equiv 0\;(\bmod\; q + 1)$. Thus, we have $k=0$. When $q$ is odd, we have $k \equiv 0\;(\bmod \;\frac{{q + 1}}{2})$. Since $0\leq k\leq q$, we have $k=\frac{{q + 1}}{2}$.\\
\indent
(ii) Now consider the case when $\mu(i)=i$ but $i \notin \{ 0,{i^\# }\} $. By Lemma 6 (i) and Lemma 2, we have $d_i$ is even implying ${s_i} = \gcd (2,{d_i}) = 2$ and $t{D_i} = 2{D_i} = {d_i}$ as ${D_i} = {{{d_i}} \mathord{\left/
 {\vphantom {{{d_i}} {{s_i}}}} \right.
 \kern-\nulldelimiterspace} {{s_i}}}$. Hence, by Lemma 3, ${\mathcal {J}_i} = {\mathcal {I}_{i,0}} \oplus {\mathcal {I}_{i,1}}$ and ${\tau _{q,1}}({\mathcal {I}_{i,0}}) = {\mathcal {I}_{i,1}}$. Using Lemma 2 (iv) and (v), we have ${\mathcal {I}_{i,0}} \cong {\mathcal {I}_{i,1}} \cong {\mathcal {K}_i} \cong {\mathbb{F}_{{q^{{d_i}}}}}$. As in part (i), we need the bases of the different 1-dimensional $\mathcal {K}_i$-subspaces of $\mathcal {J}_i$. Since ${\tau _{{q^2},1}}$ is the identity on $\mathcal {R}_n^{({q^2})}$, by Lemma 4, we have ${\mathcal {K}_i} = \{ c(X) + {\tau _{q,1}}(c(X))|c(X) \in {\mathcal {I}_{i,0}}\}  \cong {\mathbb{F}_{{q^{{d_i}}}}}$. Therefore, the bases of the $q^{d_i}+1$ 1-dimensional $\mathcal {K}_i$-subspaces of $\mathcal {J}_i$ are $\{ {e_{i,0}}(X)\} $, $\{ {e_{i,1}}(X)\} $ and $\{ {e_{i,0}}(X) + {\rho _{i,1}}{(X)^k}\} $, where $0\leq k \leq q^{d_i}-2$.\\
 \indent
 According to part (i), if $\mathcal {C}_i$ is 1-dimensional over $\mathcal {K}_i$, then ${\mathcal {C}_i} = \mathcal {C}_i^{(\Delta )}$. Assume that ${\tau _{1, - 1}}({\mathcal {I}_{i,j}}) = {\mathcal {I}_{i,j}}$ for $0\leq j \leq 1$. By Lemma 7 (iii), we have ${\tau _{1, - 1}}(c(X)) = c{(X)^{{q^{{{t{D_i}} \mathord{\left/
 {\vphantom {{t{D_i}} 2}} \right.
 \kern-\nulldelimiterspace} 2}}}}} = c{(X)^{{q^{{{{d_i}} \mathord{\left/
 {\vphantom {{{d_i}} 2}} \right.
 \kern-\nulldelimiterspace} 2}}}}}$ for $c(X)$ in either ${\mathcal {I}_{i,0 }}$ or ${\mathcal {I}_{i,1}}$. Recall that
 $$\begin{array}{l}
{[a(X),b(X)]_\Delta } = \gamma a(X){\tau _{q^2, - 1}}(b(X)) - \gamma {\tau _{q,1}}(a(X){\tau _{q^2, - 1}}(b(X))).
\end{array}$$
Firstly, since \[\begin{array}{l}
{[{e_{i,0}}(X),{e_{i,0}}(X)]_\Delta } = \gamma {e_{i,0}}(X){\tau _{{q^2}, - 1}}({e_{i,0}}(X)) - \gamma {\tau _{q,1}}({e_{i,0}}(X){\tau _{{q^2}, - 1}}({e_{i,0}}(X)))\\
\;\;\;\;\;\;\;\;\;\;\;\;\;\;\;\;\;\;\;\;\;\;\;\;\;\;\;\;\;\; = \gamma {e_{i,0}}(X){e_{i,0}}{(X)^{{q^{{{{d_i}} \mathord{\left/
 {\vphantom {{{d_i}} 2}} \right.
 \kern-\nulldelimiterspace} 2}}}}} - \gamma {e_{i,1}}(X){e_{i,1}}{(X)^{{q^{{{{d_i}} \mathord{\left/
 {\vphantom {{{d_i}} 2}} \right.
 \kern-\nulldelimiterspace} 2}}}}}\\
\;\;\;\;\;\;\;\;\;\;\;\;\;\;\;\;\;\;\;\;\;\;\;\;\;\;\;\;\;\; = \gamma {e_{i,0}}(X) - \gamma {e_{i,1}}(X) \ne 0.
\end{array}\]
Similarly, we have ${[{e_{i,1}}(X),{e_{i,1}}(X)]_\Delta } \ne 0$. So neither  $\{ {e_{i,0}}(X) \}$ nor $\{ {e_{i,1}}(X) \}$ is a basis of $\mathcal {C}_i$. Secondly, we have \[\begin{array}{l}
{[{e_{i,0}}(X) + {\rho _{i,1}}{(X)^k},{e_{i,0}}(X) + {\rho _{i,1}}{(X)^k}]_\Delta }\\
\;\; = \gamma ({e_{i,0}}(X) + {\rho _{i,1}}{(X)^k})({e_{i,0}}{(X)^{{q^{{{{d_i}} \mathord{\left/
 {\vphantom {{{d_i}} 2}} \right.
 \kern-\nulldelimiterspace} 2}}}}} + {\rho _{i,1}}{(X)^{k{q^{{{{d_i}} \mathord{\left/
 {\vphantom {{{d_i}} 2}} \right.
 \kern-\nulldelimiterspace} 2}}}}})\\
\;\;\;\;\;\; - \gamma ({e_{i,1}}(X) + {\rho _{i,0}}{(X)^k})({e_{i,1}}{(X)^{{q^{{{{d_i}} \mathord{\left/
 {\vphantom {{{d_i}} 2}} \right.
 \kern-\nulldelimiterspace} 2}}}}} + {\rho _{i,0}}{(X)^{k{q^{{{{d_i}} \mathord{\left/
 {\vphantom {{{d_i}} 2}} \right.
 \kern-\nulldelimiterspace} 2}}}}})\\
\;\; = \gamma ({e_{i,0}}(X) + {\rho _{i,1}}{(X)^{k({q^{{{{d_i}} \mathord{\left/
 {\vphantom {{{d_i}} 2}} \right.
 \kern-\nulldelimiterspace} 2}}} + 1)}}) - \gamma ({e_{i,1}}(X) + {\rho _{i,0}}{(X)^{k({q^{{{{d_i}} \mathord{\left/
 {\vphantom {{{d_i}} 2}} \right.
 \kern-\nulldelimiterspace} 2}}} + 1)}})\\
\;\; = \gamma ({e_{i,0}}(X) - {\rho _{i,0}}{(X)^{k({q^{{{{d_i}} \mathord{\left/
 {\vphantom {{{d_i}} 2}} \right.
 \kern-\nulldelimiterspace} 2}}} + 1)}}) - \gamma ({e_{i,1}}(X) - {\rho _{i,1}}{(X)^{k({q^{{{{d_i}} \mathord{\left/
 {\vphantom {{{d_i}} 2}} \right.
 \kern-\nulldelimiterspace} 2}}} + 1)}}).
\end{array}\]
Then, we have ${[{e_{i,0}}(X) + {\rho _{i,1}}{(X)^k},{e_{i,0}}(X) + {\rho _{i,1}}{(X)^k}]_\Delta } = 0$ if and only if ${\rho _{i,0}}{(X)^{k({q^{{{{d_i}} \mathord{\left/
 {\vphantom {{{d_i}} 2}} \right.
 \kern-\nulldelimiterspace} 2}}} + 1)}}={e_{i,0}}(X)$ and ${\rho _{i,1}}{(X)^{k({q^{{{{d_i}} \mathord{\left/
 {\vphantom {{{d_i}} 2}} \right.
 \kern-\nulldelimiterspace} 2}}} + 1)}}={e_{i,1}}(X)$ if and only if
 $k({q^{{{{d_i}} \mathord{\left/
 {\vphantom {{{d_i}} 2}} \right.
 \kern-\nulldelimiterspace} 2}}} + 1) \equiv 0\;(\bmod \;{q^{{d_i}}} - 1)$ if and only if $k \equiv 0\;(\bmod \;{q^{{{{d_i}} \mathord{\left/
 {\vphantom {{{d_i}} 2}} \right.
 \kern-\nulldelimiterspace} 2}}} - 1)$ if and only if $k = ({q^{{{{d_i}} \mathord{\left/
 {\vphantom {{{d_i}} 2}} \right.
 \kern-\nulldelimiterspace} 2}}} - 1)m$ for $0 \le m \le {q^{{{{d_i}} \mathord{\left/
 {\vphantom {{{d_i}} 2}} \right.
 \kern-\nulldelimiterspace} 2}}}$,
since $d_i$ is even and $0\leq k \leq q^{d_i}-2$.\\
\indent
(iii) By part (ii), assume that ${\tau _{1, - 1}}({\mathcal {I}_{i,0}}) = {\mathcal {I}_{i,1}}$. As ${\tau _{q,1}}({\mathcal {I}_{i,0}}) = {\mathcal {I}_{i,1}}$, ${\tau _{q,-1}}({\mathcal {I}_{i,j}}) = {\mathcal {I}_{i,j}}$.  Recall that, ${\tau _{{q^2},1}}$ is the identity on $\mathcal {R}_n^{({q^2})}$,
\[\begin{array}{l}
\begin{array}{*{20}{l}}
{{{[a(X),b(X)]}_\Delta } = \gamma a(X){\tau _{{q^2}, - 1}}(b(X)) - \gamma {\tau _{q,1}}(a(X){\tau _{{q^2}, - 1}}(b(X)))}
\end{array}\\
\;\;\;\;\;\;\;\;\;\;\;\;\;\;\;\;\;\;\;\;\;\;\; = \begin{array}{*{20}{l}}
{\gamma a(X){\tau _{1, - 1}}(b(X)) - \gamma {\tau _{q,1}}(a(X){\tau _{1, - 1}}(b(X)))}
\end{array}.
\end{array}\]
It is not difficult to verify that, by Lemma 2 (v), ${[{e_{i,j}}(X),{e_{i,j}}(X)]_\Delta } = 0$ for $0\leq j\leq 1$.
So $\mathcal {C}_i$ is 1-dimensional over $\mathcal {K}_i$ with bases $\{ {e_{i,0}}(X) \}$ and $\{ {e_{i,1}}(X) \}$.
Since \[\begin{array}{l}
{[{e_{i,0}}(X) + {\rho _{i,1}}{(X)^k},{e_{i,0}}(X) + {\rho _{i,1}}{(X)^k}]_\Delta }\\
\;\;  = \gamma ({e_{i,0}}(X) + {\rho _{i,1}}{(X)^k})({e_{i,1}}{(X)^{{q^{{{{d_i}} \mathord{\left/
 {\vphantom {{{d_i}} 2}} \right.
 \kern-\nulldelimiterspace} 2}}}}} + {\rho _{i,0}}{(X)^{k{q^{{{{d_i}} \mathord{\left/
 {\vphantom {{{d_i}} 2}} \right.
 \kern-\nulldelimiterspace} 2}}}}})\\
\;\;\;\;\;\;  - \gamma {\tau _{q,1}}(({e_{i,0}}(X) + {\rho _{i,1}}{(X)^k})({e_{i,1}}{(X)^{{q^{{{{d_i}} \mathord{\left/
 {\vphantom {{{d_i}} 2}} \right.
 \kern-\nulldelimiterspace} 2}}}}} + {\rho _{i,0}}{(X)^{k{q^{{{{d_i}} \mathord{\left/
 {\vphantom {{{d_i}} 2}} \right.
 \kern-\nulldelimiterspace} 2}}}}}))\\
\;\;  = \gamma ({\rho _{i,0}}{(X)^{k{q^{{{{d_i}} \mathord{\left/
 {\vphantom {{{d_i}} 2}} \right.
 \kern-\nulldelimiterspace} 2}}}}} + {\rho _{i,1}}{(X)^k}) - \gamma ({\rho _{i,1}}{(X)^{k{q^{{{{d_i}} \mathord{\left/
 {\vphantom {{{d_i}} 2}} \right.
 \kern-\nulldelimiterspace} 2}}}}} + {\rho _{i,0}}{(X)^k})\\
\;\;  = \gamma ({\rho _{i,0}}{(X)^{k{q^{{{{d_i}} \mathord{\left/
 {\vphantom {{{d_i}} 2}} \right.
 \kern-\nulldelimiterspace} 2}}}}} - {\rho _{i,0}}{(X)^k}) - \gamma ({\rho _{i,1}}{(X)^{k{q^{{{{d_i}} \mathord{\left/
 {\vphantom {{{d_i}} 2}} \right.
 \kern-\nulldelimiterspace} 2}}}}} - {\rho _{i,1}}{(X)^k}),
\end{array}\]
${[{e_{i,0}}(X) + {\rho _{i,1}}{(X)^k},{e_{i,0}}(X) + {\rho _{i,1}}{(X)^k}]_\Delta } = 0$ if and only if ${\rho _{i,0}}{(X)^{k({q^{{{{d_i}} \mathord{\left/
 {\vphantom {{{d_i}} 2}} \right.
 \kern-\nulldelimiterspace} 2}}} - 1)}} = {e_{i,0}}(X)$ and ${\rho _{i,1}}{(X)^{k({q^{{{{d_i}} \mathord{\left/
 {\vphantom {{{d_i}} 2}} \right.
 \kern-\nulldelimiterspace} 2}}} - 1)}} = {e_{i,1}}(X)$ if and only if $k({q^{{{{d_i}} \mathord{\left/
 {\vphantom {{{d_i}} 2}} \right.
 \kern-\nulldelimiterspace} 2}}} - 1) \equiv 0\;(\bmod \;{q^{{d_i}}} - 1)$ if and only if
 $k \equiv 0\;(\bmod \;{q^{{{{d_i}} \mathord{\left/
 {\vphantom {{{d_i}} 2}} \right.
 \kern-\nulldelimiterspace} 2}}} + 1)$ if and only if $k = ({q^{{{{d_i}} \mathord{\left/
 {\vphantom {{{d_i}} 2}} \right.
 \kern-\nulldelimiterspace} 2}}} + 1)m$, where $0 \le m \le {q^{{{{d_i}} \mathord{\left/
 {\vphantom {{{d_i}} 2}} \right.
 \kern-\nulldelimiterspace} 2}}} - 2$ as $0 \le k \le {q^{d_i}-2}$.\\
 \indent
(iv) Now, suppose that $\mu(i)\neq i$ and $d_i$ is odd. By Lemmas 2 and 4, we have ${s_i} = \gcd (2,{d_i}) = 1$,
 ${\mathcal {J}_i} = {\mathcal {I}_{i,0}} \cong {\mathbb{F}_{{q^{2{D_i}}}}} = {\mathbb{F}_{{q^{2{d_i}}}}}$ and ${\mathcal {K}_i} = \{ c(X) \in {\mathcal {I}_{i,0}}|{\tau _{q,1}}(c(X)) = c(X)\}  \cong {\mathbb{F}_{{q^{{d_i}}}}}$.
 As ${\tau _{1, - 1}}$ is an isomorphism of $\mathcal {J}_i$ onto $\mathcal {J}_{\mu(i)}$, we have $d_{\mu(i)}=d_i$, ${s_{\mu(i)}} = \gcd (2,{d_{\mu(i)}}) = 1$, ${\mathcal {J}_{\mu(i)}} = {\mathcal {I}_{\mu(i),0}} \cong {\mathbb{F}_{{q^{2{d_i}}}}}$ and ${\mathcal {K}_{\mu(i)}} = \{ c(X) \in {\mathcal {I}_{\mu(i),0}}|{\tau _{q,1}}(c(X)) = c(X)\}  \cong {\mathbb{F}_{{q^{{d_i}}}}}$.
 ${\tau _{{q^2},1}}$ is the identity on $\mathcal {R}_n^{({q^2})}$ and hence ${\tau _{q,1}}:{\mathcal {J}_i} \to {\mathcal {J}_i}$ is the identity on $\mathcal {K}_i$ but not $\mathcal {J}_i$ by Lemma 4. Thus, ${\tau _{q,1}}$ is an automorphism of $\mathcal {J}_i$ of order 2 implying
 ${\tau _{q,1}}(c(X)) = c{(X)^{{q^{{{2{d_i}} \mathord{\left/
 {\vphantom {{2{d_i}} 2}} \right.
 \kern-\nulldelimiterspace} 2}}}}} = c{(X)^{{q^{{d_i}}}}}$ for all $c(X)\in \mathcal {J}_i$. Similarly, ${\tau _{q,1}}(c(X)) = c{(X)^{{q^{{d_i}}}}}$ for all $c(X)\in\mathcal {J}_{\mu(i)}$.\\
 \indent
 By Theorem 1, we have $$\mathcal {C}_{\mu (i)}^{(\Delta )} = \{ a(X) \in {\mathcal {J}_{\mu (i)}}|{[c(X),a(X)]_\Delta } = 0\;{\rm{for}}\;{\rm{all}}\;c(X) \in {\mathcal {C}_i}\} $$and ${\dim _{{\mathcal {K}_i}}}{\mathcal {C}_i} + {\dim _{{\mathcal {K}_{\mu (i)}}}}\mathcal {C}_{\mu (i)}^{(\Delta )} = t = 2.$ As ${\mathcal {C}_{\mu (i)}} \subseteq \mathcal {C}_{\mu (i)}^{(\Delta )}$ is required for cyclic $\Delta$-self-orthogonality, for each possible $\mathcal {C}_i$, we must find $\mathcal {C}_{\mu (i)}^{(\Delta )}$. It is easy to show that, as the dimensions add to 2, $\mathcal {C}_i=\{0\}$ implies $\mathcal {C}_{\mu (i)}^{(\Delta )}=\mathcal {J}_{\mu(i)}$ and $\mathcal {C}_i=\mathcal {J}_i$ implies $\mathcal {C}_{\mu (i)}^{(\Delta )}=\{0\}$. If $\mathcal {C}_i$ is 1-dimensional over $\mathcal {K}_i$, then $\mathcal {C}_{\mu (i)}^{(\Delta )}$ is 1-dimensional over $\mathcal {K}_{\mu(i)}$. Furthermore, if $\mathcal {C}_i$ is 1-dimensional over $\mathcal {K}_i$, then $\mathcal {C}_i$ has basis $\{ {\rho _{i,0}}{(X)^k}\} $ for some $k$. As $\mathcal {K}_i$ has primitive element ${\rho _{i,0}}{(X)^{{q^{{d_i}}} + 1}}$, $\mathcal {C}_i$ has ${q^{{d_i}}} + 1$ different bases $\{ {\rho _{i,0}}{(X)^k}\} $ for $0\leq k\leq q^{d_i}$. Now we find $k'$, where $0\leq k'\leq q^{d_i}$, such that ${[{\rho _{i,0}}{(X)^k},{\rho _{\mu (i),0}}{(X)^{k'}}]_\Delta } = 0$. Using ${\tau _{1, - 1}}({\rho _{i,0}}(X)) = {\rho _{\mu (i),0}}(X)$ and ${\tau _{q,1}}(c(X)) = c{(X)^{{q^{{d_i}}}}}$ for $c(X)\in \mathcal {J}_i$ or $\mathcal {J}_{\mu(i)}$, we have
 \[\begin{array}{l}
{[{\rho _{i,0}}{(X)^k},{\rho _{\mu (i),0}}{(X)^{k'}}]_\Delta }\\
\;\;  = \gamma {\rho _{i,0}}{(X)^k}{\tau _{1, - 1}}({\rho _{\mu (i),0}}{(X)^{k'}}) - \gamma {\tau _{q,1}}({\rho _{i,0}}{(X)^k}{\tau _{1, - 1}}({\rho _{\mu (i),0}}{(X)^{k'}}))\\
\;\;  = \gamma {\rho _{i,0}}{(X)^{k + k'}} - \gamma {\rho _{i,0}}{(X)^{(k + k'){q^{{d_i}}}}}\\
\;\;  = 0
\end{array}\]
if and only if ${\rho _{i,0}}{(X)^{(k + k')({q^{{d_i}}} - 1)}} = {e_{i,0}}(X)$ if and only if $(k + k')({q^{{d_i}}} - 1) \equiv 0\;(\bmod \;{q^{2{d_i}}} - 1)$ if and only if $k + k' \equiv 0\;(\bmod \;{q^{{d_i}}} + 1)$ if and only if $k'=k=0$ or $k'=q^{d_i}+1-k$ for $1\leq k\leq q^{d_i}$.\\
\indent
(v) Finally, assume that $\mu(i)\neq i$ and $d_i$ is even. From the proof of parts (ii)-(iv), we have $s_i=2$, ${\mathcal {J}_i} = {\mathcal {I}_{i,0}} \oplus {\mathcal {I}_{i,1}}$, ${\mathcal {I}_{i,0}} \cong {\mathcal {I}_{i,1}} \cong {\mathbb{F}_{{q^{{d_i}}}}}$, ${\mathcal {K}_i} = \{ c(X) + {\tau _{q,1}}(c(X))|c(X) \in {\mathcal {I}_{i,0}}\}  \cong {\mathbb{F}_{{q^{{d_i}}}}}$ and the $q^{d_i}+1$ 1-dimensional $\mathcal {K}_i$-subspaces of $\mathcal {J}_i$ have bases $\{ {e_{i,0}}(X)\} $, $\{ {e_{i,1}}(X)\} $ and $\{ {e_{i,0}}(X) + {\rho _{i,1}}{(X)^k}\} $, where $0\leq k \leq q^{d_i}-2$. In this case, ${\tau _{1, - 1}}({\rho _{i,j}}(X)) = {\rho _{\mu (i),j}}(X)$ for $0\leq j\leq 1$, ${\tau _{q,1}}({\rho _{i,0}}(X)) = {\rho _{i,1}}(X)$ and ${\tau _{q,1}}({\rho _{\mu (i),0}}(X)) = {\rho _{\mu (i),1}}(X)$.\\
\indent
If $\mathcal {C}_i=\{0\}$ or $\mathcal {C}_i=\mathcal {J}_i$, (a) and (b) follow as in the proof of part (iv). Now we consider the basis of $\mathcal {C}_{\mu (i)}^{(\Delta )}$ when $\mathcal {C}_i$ is 1-dimensional over $\mathcal {K}_i$. Then $\mathcal {C}_{\mu (i)}^{(\Delta )}$ will be 1-dimensional over $\mathcal {K}_{\mu(i)}$. As
\[{{{[a(X),b(X)]}_\Delta } = \begin{array}{*{20}{l}}
{\gamma a(X){\tau _{1, - 1}}(b(X)) - \gamma {\tau _{q,1}}(a(X){\tau _{1, - 1}}(b(X)))},
\end{array}}\]
we have \[\begin{array}{*{20}{l}}
\begin{array}{l}
{[{e_{i,0}}(X),{e_{\mu (i),0}}(X)]_\Delta }\\
\;\; = \gamma {e_{i,0}}(X){\tau _{1, - 1}}({e_{\mu (i),0}}(X)) - \gamma {\tau _{q,1}}({e_{i,0}}(X){\tau _{1, - 1}}({e_{\mu (i),0}}(X)))
\end{array}\\
{\;\;\; = \gamma {e_{i,0}}(X) - \gamma {e_{i,1}}(X) \ne 0.}
\end{array}\]
Similarly, ${[{e_{i,1}}(X),{e_{\mu (i),1}}(X)]_\Delta } \ne 0$. So neither $\{ {e_{i,0}}(X)\} $ nor $\{ {e_{i,1}}(X)\} $ is a basis of $\mathcal {C}_i$. Now we find $k'$, where $0\leq k'\leq q^{d_i}-2$, such that $${[{e_{i,0}}(X) + {\rho _{i,1}}{(X)^k},{e_{\mu (i),0}}(X) + {\rho _{\mu (i),1}}{(X)^{k'}}]_\Delta }= 0.$$ By ${\mathcal {I}_{i,0}}{\mathcal {I}_{i,1}} = \{ 0\} $, we have
\[\begin{array}{l}
{[{e_{i,0}}(X) + {\rho _{i,1}}{(X)^k},{e_{\mu (i),0}}(X) + {\rho _{\mu (i),1}}{(X)^{k'}}]_\Delta }\\
\;\; = \gamma ({e_{i,0}}(X) + {\rho _{i,1}}{(X)^k}){\tau _{1, - 1}}({e_{\mu (i),0}}(X) + {\rho _{\mu (i),1}}{(X)^{k'}})\\
\;\;\;\;\;\; - \gamma {\tau _{q,1}}(({e_{i,0}}(X) + {\rho _{i,1}}{(X)^k}){\tau _{1, - 1}}({e_{\mu (i),0}}(X) + {\rho _{\mu (i),1}}{(X)^{k'}}))\\
\;\; = \gamma ({e_{i,0}}(X) + {\rho _{i,1}}{(X)^k})({e_{i,0}}(X) + {\rho _{i,1}}{(X)^{k'}})\\
\;\;\;\;\;\; - \gamma ({e_{i,1}}(X) + {\rho _{i,0}}{(X)^k})({e_{i,1}}(X) + {\rho _{i,0}}{(X)^{k'}})\\
\;\; = \gamma ({e_{i,0}}(X) + {\rho _{i,1}}{(X)^{k + k'}}) - \gamma ({e_{i,1}}(X) + {\rho _{i,0}}{(X)^{k + k'}})\\
\;\; = \gamma ({e_{i,0}}(X) - {\rho _{i,0}}{(X)^{k + k'}}) - \gamma ({e_{i,1}}(X) - {\rho _{i,1}}{(X)^{k + k'}})\\
\;\; = 0
\end{array}\]
if and only if ${\rho _{i,0}}{(X)^{k + k'}} = {e_{i,0}}(X)$ and ${\rho _{i,1}}{(X)^{k + k'}} = {e_{i,1}}(X)$ if and only if $k + k' \equiv 0\;(\bmod \;{q^{{d_i}}} - 1)$ if and only if $k'=k=0$ or $k'=q^{d_i}-1-k$ for $1\leq k\leq q^{d_i}-2$.
\end{proof}
\par
When $t=2$, we can count the number of cyclic $\Delta$-self-orthogonal $\mathbb{F}_q$-linear $\mathbb{F}_{q^2}$-codes. Let $\mathfrak{J}$ be the fixed points of $\mu$ excluding $0$ and $i^\#$ (when $n$ is even). Let $\mathfrak{M}$ consist of one element from each of the transpositions in $\mu$. We give the number of cyclic $\Delta$-self-orthogonal $\mathbb{F}_q$-linear $\mathbb{F}_{q^2}$-codes in the following Theorem.
\par
\begin{theorem}
When $t=2$,  the number of cyclic $\Delta$-self-orthogonal $\mathbb{F}_q$-linear $\mathbb{F}_{q^2}$-codes of length $n$ is as follows, where $\gcd (n,q) = 1$.
\[a'\prod\nolimits_{i \in \mathfrak{J}} {({q^{{{{d_i}} \mathord{\left/
 {\vphantom {{{d_i}} 2}} \right.
 \kern-\nulldelimiterspace} 2}}} + 2)\prod\nolimits_{j \in \mathfrak{M}} {(3{q^{{d_j}}} + b'} )}, \]
 where $a'=2$ with $n$ odd, $a'=4$ with $n$ even, $b'=6$ if $d_i$ is odd and $b'=2$ if $d_i$ is even.
\end{theorem}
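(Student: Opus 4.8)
The plan is to turn the enumeration into a product over the orbits of the permutation $\mu$ on $\{0,1,\dots,s-1\}$. Recall that a cyclic $\mathbb{F}_q$-linear $\mathbb{F}_{q^2}$-code of length $n$ is precisely a tuple $(\mathcal{C}_0,\dots,\mathcal{C}_{s-1})$ where $\mathcal{C}_i$ is a $\mathcal{K}_i$-subspace of $\mathcal{J}_i$ (because $\mathcal{R}_n^{(q^2)}=\mathcal{J}_0\oplus\cdots\oplus\mathcal{J}_{s-1}$ and $\mathcal{R}_n^{(q)}$ acts on $\mathcal{J}_i$ through its component $\mathcal{K}_i$), and that $\mathcal{J}_i$ has $\mathcal{K}_i$-dimension $t=2$. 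By Lemma 13, $\mathcal{C}$ is cyclic $\Delta$-self-orthogonal if and only if $\mathcal{C}_i\subseteq\mathcal{C}_i^{(\Delta)}$ for every $i$, and by Theorem 1 the component $\mathcal{C}_{\mu(i)}^{(\Delta)}$ depends only on $\mathcal{C}_i$. Consequently the constraints attached to one $\mu$-orbit never involve a component outside that orbit, so the number of admissible tuples equals the product, over the $\mu$-orbits, of the number of admissible local configurations; the whole proof then reduces to computing those local numbers from Theorem 2 and collecting them.

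First I would treat the orbit $\{0\}$ and, when $n$ is even, the orbit $\{i^{\#}\}$ (a fixed point of $\mu$ by Lemma 6). By Theorem 2(i), at such an index $\mathcal{C}_i$ is either $\{0\}$ or the single $1$-dimensional $\mathcal{K}_i$-subspace $\mathcal{K}_i\rho_{i,0}(X)^k$ it exhibits (with $k=0$ for $q$ even and $k=(q+1)/2$ for $q$ odd) --- exactly $2$ choices. This produces the leading constant $a'$: it is $2$ when $n$ is odd (only the orbit $\{0\}$ occurs) and $2\cdot 2=4$ when $n$ is even. Next I would take a fixed point $i\in\mathfrak{J}$, so $\mu(i)=i$, $i\notin\{0,i^{\#}\}$, and hence $d_i$ even and $s_i=2$ by Lemma 6(i). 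Theorem 2(ii) and (iii) describe, in the two possibilities $\tau_{1,-1}(\mathcal{I}_{i,0})=\mathcal{I}_{i,0}$ and $\tau_{1,-1}(\mathcal{I}_{i,0})=\mathcal{I}_{i,1}$, the admissible $\mathcal{C}_i$; apart from $\{0\}$ these form a list of $1$-dimensional subspaces of $\mathcal{J}_i$, and the key observation is that in both cases the list has the same length $q^{d_i/2}+1$ (in case (ii) the subspaces with basis $\{e_{i,0}(X)+\rho_{i,1}(X)^{(q^{d_i/2}-1)m}\}$ for $0\le m\le q^{d_i/2}$; in case (iii) the two ideals $\mathcal{I}_{i,0},\mathcal{I}_{i,1}$ together with the subspaces with basis $\{e_{i,0}(X)+\rho_{i,1}(X)^{(q^{d_i/2}+1)m}\}$ for $0\le m\le q^{d_i/2}-2$). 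Each $i\in\mathfrak{J}$ therefore contributes $q^{d_i/2}+2$, giving the factor $\prod_{i\in\mathfrak{J}}(q^{d_i/2}+2)$.

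The real obstacle is the transposition case. Fix a representative $j\in\mathfrak{M}$ of a $2$-element orbit $\{j,\mu(j)\}$, and note $d_{\mu(j)}=d_j$. I would count the admissible pairs $(\mathcal{C}_j,\mathcal{C}_{\mu(j)})$ by the value of $\dim_{\mathcal{K}_j}\mathcal{C}_j\in\{0,1,2\}$, using Theorem 2(iv) when $d_j$ is odd and Theorem 2(v) when $d_j$ is even, together with the balance $\dim_{\mathcal{K}_j}\mathcal{C}_j+\dim_{\mathcal{K}_{\mu(j)}}\mathcal{C}_{\mu(j)}^{(\Delta)}=2$ of Theorem 1. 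If $\mathcal{C}_j=\{0\}$, then $\mathcal{C}_{\mu(j)}$ may be any of the $q^{d_j}+3$ $\mathcal{K}_{\mu(j)}$-subspaces of the $2$-dimensional space $\mathcal{J}_{\mu(j)}$; if $\mathcal{C}_j=\mathcal{J}_j$, then $\mathcal{C}_{\mu(j)}=\{0\}$; if $\mathcal{C}_j$ is $1$-dimensional, one runs through the $1$-dimensional subspaces of $\mathcal{J}_j$, reads off from Theorem 2(iv)(c)/(v)(c) the unique $1$-dimensional $\mathcal{C}_{\mu(j)}^{(\Delta)}$ compatible with it (so $\mathcal{C}_{\mu(j)}$ is then $\{0\}$ or that subspace), and must also impose the mirror constraint obtained by applying Theorem 2 at the index $\mu(j)$. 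Adding the three contributions and simplifying must reproduce the stated factor $3q^{d_j}+b'$, with $b'=6$ when $d_j$ is odd and $b'=2$ when $d_j$ is even; pinning down exactly which $1$-dimensional lines of $\mathcal{J}_j$ survive --- the ``mixed'' lines $\mathcal{K}_j(e_{j,0}(X)+\rho_{j,1}(X)^k)$ as against the two ideals $\mathcal{I}_{j,0},\mathcal{I}_{j,1}$ --- is the step that separates the two parities, and is where the argument has to be done carefully. Finally, multiplying $a'$, the factors over $\mathfrak{J}$, and the factors $3q^{d_j}+b'$ over $\mathfrak{M}$ yields the claimed count.
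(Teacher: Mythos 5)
Your overall strategy is exactly the paper's: decompose a code as $\mathcal{C}=\bigoplus_i\mathcal{C}_i$, observe via Lemma 13 and Theorem 1 that the self-orthogonality constraints decouple along the orbits of $\mu$, and multiply the local counts $M_i$ ($i\in\mathfrak{J}$) and $M_j$ ($j\in\mathfrak{M}$) together with the factor $a'$ coming from the orbits $\{0\}$ and $\{i^{\#}\}$. Your derivation of $a'$ and of the fixed-point factor $q^{d_i/2}+2$ (checking that Theorem 2(ii) and 2(iii) each list $q^{d_i/2}+1$ one-dimensional subspaces plus $\{0\}$) is correct and matches what the paper asserts.

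The gap is in the transposition count, which is the only place the claimed formula depends on the parity of $d_j$, and which you explicitly leave as ``adding the three contributions must reproduce the stated factor.'' If one carries out the recipe you describe literally --- $\mathcal{C}_j=\{0\}$ contributes $q^{d_j}+3$ choices of $\mathcal{C}_{\mu(j)}$, $\mathcal{C}_j=\mathcal{J}_j$ contributes $1$, and each of the $q^{d_j}+1$ one-dimensional $\mathcal{C}_j$ contributes $2$ (namely $\mathcal{C}_{\mu(j)}=\{0\}$ or $\mathcal{C}_{\mu(j)}=\mathcal{C}_{\mu(j)}^{(\Delta)}$, the latter being one-dimensional by the dimension formula of Theorem 1) --- one obtains $(q^{d_j}+3)+1+2(q^{d_j}+1)=3q^{d_j}+6$ \emph{for both parities}, which agrees with the theorem only when $d_j$ is odd. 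To land on $3q^{d_j}+2$ in the even case you must additionally invoke the assertion made inside the paper's proof of Theorem 2(v) that neither $\{e_{j,0}(X)\}$ nor $\{e_{j,1}(X)\}$ can be a basis of $\mathcal{C}_j$, so that only the $q^{d_j}-1$ ``mixed'' lines $\mathcal{K}_j(e_{j,0}(X)+\rho_{j,1}(X)^k)$ contribute $2$ each; this is the missing ingredient, and it is not something your outline supplies or can recover from the dimension count alone (note that the justification given there, $[e_{j,0}(X),e_{\mu(j),0}(X)]_\Delta\neq 0$, only rules out the pair $(\mathcal{I}_{j,0},\mathcal{I}_{\mu(j),0})$, not the pair $(\mathcal{I}_{j,0},\{0\})$, so even with that input the bookkeeping needed to arrive at exactly $3q^{d_j}+2$ rather than $3q^{d_j}$ or $3q^{d_j}+6$ is delicate and must be made explicit). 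You correctly sensed that the treatment of $\mathcal{I}_{j,0},\mathcal{I}_{j,1}$ versus the mixed lines is ``where the argument has to be done carefully,'' but identifying the crux is not the same as resolving it: as written, your proposal does not establish the value of $b'$.
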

\begin{proof}
The trick of the proof is to count the number of subcodes $\mathcal {C}_i$ when $\mu(i)=i$ and subcodes pairs $(\mathcal {C}_i,\mathcal {C}_{\mu(i)})$ when $\mu(i)\neq i$ in Theorem 2. Let $M_i$ be the number of distinct $\mathcal {K}_i$-subspaces $\mathcal {C}_i$ of $\mathcal {J}_i$ satisfying ${\mathcal {C}_i} \subseteq \mathcal {C}_i^{(\Delta )}$, where $i \in \mathfrak{J}$. Let $M_j$ be the number of distinct pairs $(\mathcal {C}_j,\mathcal {C}_{\mu(j)})$, where $j\in \mathfrak{M}$, $\mathcal {C}_j$ is a $\mathcal {K}_j$-subspace of $\mathcal {J}_j$ and $\mathcal {C}_{\mu(j)}$ is a $\mathcal {K}_{\mu(j)}$-subspace of $\mathcal {J}_{\mu(j)}$, satisfying $\mathcal {C}_j\subseteq \mathcal {C}_j^{(\Delta )}$ and $\mathcal {C}_{\mu(j)}\subseteq \mathcal {C}_{\mu(j)}^{(\Delta )}$. Then, by Theorem 2, the number of distinct cyclic $\Delta$-self-orthogonal  $\mathbb{F}_q$-linear $\mathbb{F}_{q^2}$-codes of length $n$ is
\[\left\{ \begin{array}{l}
4\prod\nolimits_{i \in \mathfrak{J}} {{M_i}\prod\nolimits_{j \in \mathfrak{M}} {{M_j},\;\;\;{\rm{if}}\;n\;{\rm{is}}\;{\rm{even}},} } \\
2\prod\nolimits_{i \in \mathfrak{J}} {{M_i}\prod\nolimits_{j \in \mathfrak{M}} {{M_j},\;\;\;{\rm{if}}\;n\;{\rm{is}}\;{\rm{odd}}.} }
\end{array} \right.\]
It is straightforward to show that $M_i={{q^{{{{d_i}} \mathord{\left/
 {\vphantom {{{d_i}} 2}} \right.
 \kern-\nulldelimiterspace} 2}}} + 2}$ for $i \in \mathfrak{J}$, $M_j={3{q^{{d_j}}} + 6}$ for $d_i$ is odd and $j\in \mathfrak{M}$ and $M_j={3{q^{{d_j}}} + 2}$ for $d_i$ is even and $j\in \mathfrak{M}$.
 The proof is completed.
\end{proof}

%%%%%%%%%%%%%%%%%%%%%%%%%%%%%%%%%%%%%%%%%%%%%%%%%%%%%%%%%%%%%%%%%%%%%%%%%%%%%%%%%%%%%%%%%%
\subsection{Cyclic $\Delta$-self-dual $\mathbb{F}_q$-linear $\mathbb{F}_{q^2}$-codes} \label{}
\noindent
Using Theorem 2, we determine the bases of all cyclic $\Delta$-self-dual $\mathbb{F}_q$-linear $\mathbb{F}_{q^2}$-codes of length $n$ in the following Theorem. By Theorem 1, ${\dim _{{\mathcal {K}_i}}}{\mathcal {C}_i} + {\dim _{{\mathcal {K}_{\mu (i)}}}}{\mathcal {C}_{\mu (i)}} = t = 2$ for $0\leq i\leq s-1$. In particular, if $\mu(i)=i$, then ${\dim _{{\mathcal {K}_i}}}{\mathcal {C}_i} = 1$ and if $\mu(i)\neq i$, by Theorem 2, we have ${\mathcal {C}_{\mu (i)}} = \mathcal {C}_{\mu (i)}^{(\Delta )}$.
\begin{theorem}
Let $\mathcal {C}$ be a cyclic $\mathbb{F}_q$-linear $\mathbb{F}_{q^2}$-code of length $n$. If $\mathcal {C} = {\mathcal {C}_0} \oplus {\mathcal {C}_1} \oplus  \cdots  \oplus {\mathcal {C}_{s - 1}}$, where ${\mathcal {C}_i} = \mathcal {C} \cap {\mathcal {J}_i}$ for all $0\leq i\leq s-1$, then $\mathcal {C}$ is cyclic $\Delta$-self-dual when the following hold.\\
(i) When $i=0$ or, if $n$ is even, $i=i^\#$, then $\mathcal {C}_i$ is a 1-dimensional $\mathcal {K}_i$-subspace of $\mathcal {J}_i$ with basis $\{ {\rho _{i,0}}{(X)^k}\} $, where $k=0$ when $q$ is even and $k=\frac{{q + 1}}{2}$ when $q$ is odd.\\
(ii) When $i\neq 0$, $i\neq i^\#$, $\mu(i)=i$ and ${\tau _{1, - 1}}({\mathcal {I}_{i,0}}) = {\mathcal {I}_{i,0}}$, then $\mathcal {C}_i$ is 1-dimensional over $\mathcal {K}_i$ with basis $\{ {e_{i,0}}(X) + {\rho _{i,1}}{(X)^k}\} $, where $k = ({q^{{{{d_i}} \mathord{\left/
 {\vphantom {{{d_i}} 2}} \right.
 \kern-\nulldelimiterspace} 2}}} - 1)m$ for $0\leq m\leq {q^{{{{d_i}} \mathord{\left/
 {\vphantom {{{d_i}} 2}} \right.
 \kern-\nulldelimiterspace} 2}}}$.\\
(iii) When $i\neq 0$, $i\neq i^\#$, $\mu(i)=i$ and ${\tau _{1, - 1}}({\mathcal {I}_{i,0}}) = {\mathcal {I}_{i,1}}$, then $\mathcal {C}_i$ is 1-dimensional over $\mathcal {K}_i$ with bases $\{ {e_{i,0}}(X) \}$, $\{ {e_{i,1}}(X) \}$ and $\{ {e_{i,0}}(X) + {\rho _{i,1}}{(X)^k}\} $, where $k = ({q^{{{{d_i}} \mathord{\left/
 {\vphantom {{{d_i}} 2}} \right.
 \kern-\nulldelimiterspace} 2}}} + 1)m$ for $0 \le m \le {q^{{{{d_i}} \mathord{\left/
 {\vphantom {{{d_i}} 2}} \right.
 \kern-\nulldelimiterspace} 2}}} - 2$.\\
 (iv) When $\mu(i)\neq i$ and $d_i$ is odd,\\
 \indent
 (a) if $\mathcal {C}_i=\{0\}$, then ${\mathcal {C}_{\mu (i)}} = {\mathcal {J}_{\mu (i)}}$;\\
\indent
 (b) if $\mathcal {C}_i=\mathcal {J}_i$, then ${\mathcal {C}_{\mu (i)}}=\{0\} $;\\
\indent
 (c) if $\mathcal {C}_i$ is 1-dimensional over $\mathcal {K}_i$ with basis $\{{\rho _{i,0}}{(X)^k}\} $, where $0 \le k \le {q^{{d_i}}}$, then ${\mathcal {C}_{\mu (i)}}$ is 1-dimensional over $\mathcal {K}_{\mu(i)}$ with basis $\{ {\rho _{\mu (i),0}}{(X)^{k'}}\} $, where $k'=k=0$ or $k'=q^{d_i}+1-k$ for $1 \le k \le {q^{{d_i}}}$.\\
 (v)  When $\mu(i)\neq i$ and $d_i$ is even,\\
 \indent
 (a) if $\mathcal {C}_i=\{0\}$, then ${\mathcal {C}_{\mu (i)}} = {\mathcal {J}_{\mu (i)}}$;\\
\indent
 (b) if $\mathcal {C}_i=\mathcal {J}_i$, then ${\mathcal {C}_{\mu (i)}}=\{0\} $;\\
\indent
 (c) if $\mathcal {C}_i$ is 1-dimensional over $\mathcal {K}_i$ with basis $\{ a(X)\} $, then ${\mathcal {C}_{\mu (i)}}$ is 1-dimensional over $\mathcal {K}_{\mu(i)}$ with basis $\{ a'(X)\} $, where $a'(X) = {e_{\mu (i),0}}(X) + {\rho _{\mu (i),1}}{(X)^{k'}}$ when $a(X) = {e_{i,0}}(X) + {\rho _{i,1}}{(X)^k}$ with
  $k'=k=0$ or $k'=q^{d_i}-1-k$ for $1 \le k \le {q^{{d_i}}-2}$.
\end{theorem}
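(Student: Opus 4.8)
The plan is to deduce Theorem~4 from Theorem~2, using that a cyclic $\Delta$-self-dual code is in particular cyclic $\Delta$-self-orthogonal, and that by Lemma~13(ii) together with Theorem~1 self-duality is equivalent to the componentwise equalities $\mathcal{C}_i=\mathcal{C}_i^{(\Delta)}$ ($0\le i\le s-1$) constrained by the dimension identity $\dim_{\mathcal{K}_i}\mathcal{C}_i+\dim_{\mathcal{K}_{\mu(i)}}\mathcal{C}_{\mu(i)}^{(\Delta)}=t=2$. So essentially no new bilinear-form computation is needed: each item of Theorem~4 should be read off from the matching case of the proof of Theorem~2 by replacing the containment ``$\subseteq$'' there with equality ``$=$'' and using the dimension count to discard whatever that equality excludes. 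The deductions are reversible, so one gets both necessity and sufficiency, i.e. an ``if and only if''.

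First I would dispose of the fixed points $\mu(i)=i$, giving items (i)--(iii). Here $\mathcal{C}_{\mu(i)}^{(\Delta)}=\mathcal{C}_i^{(\Delta)}$, so the dimension identity becomes $\dim_{\mathcal{K}_i}\mathcal{C}_i+\dim_{\mathcal{K}_i}\mathcal{C}_i^{(\Delta)}=2$; imposing $\mathcal{C}_i=\mathcal{C}_i^{(\Delta)}$ forces $\dim_{\mathcal{K}_i}\mathcal{C}_i=1$, so the option $\mathcal{C}_i=\{0\}$ of Theorem~2(i)(a),(ii)(a),(iii)(a) disappears. What is left is exactly the one-dimensional self-orthogonal $\mathcal{K}_i$-subspaces of Theorem~2(i)(b),(ii)(b),(iii)(b), and the proof of Theorem~2 already establishes that for $\mu(i)=i$ such a subcode satisfies $\mathcal{C}_i=\mathcal{C}_i^{(\Delta)}$. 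Transcribing the three subcases (the field case $i\in\{0,i^\#\}$ with $\mathcal{J}_i\cong\mathbb{F}_{q^2}$ and $\mathcal{K}_i\cong\mathbb{F}_q$; the case $i\notin\{0,i^\#\}$, $\tau_{1,-1}(\mathcal{I}_{i,0})=\mathcal{I}_{i,0}$; and the case $\tau_{1,-1}(\mathcal{I}_{i,0})=\mathcal{I}_{i,1}$), together with the congruence conditions on the exponent $k$ found there, gives (i), (ii) and (iii) without change.

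Next I would handle the transpositions $\mu(i)\neq i$, giving items (iv) and (v). Self-duality now forces both $\mathcal{C}_i=\mathcal{C}_i^{(\Delta)}$ and $\mathcal{C}_{\mu(i)}=\mathcal{C}_{\mu(i)}^{(\Delta)}$, while Theorem~1 gives $\dim_{\mathcal{K}_i}\mathcal{C}_i+\dim_{\mathcal{K}_{\mu(i)}}\mathcal{C}_{\mu(i)}=2$. The key point is that by Theorem~1 the subcode $\mathcal{C}_i$ determines $\mathcal{C}_{\mu(i)}^{(\Delta)}=\{a(X)\in\mathcal{J}_{\mu(i)}\mid [c(X),a(X)]_\Delta=0\ \text{for all}\ c(X)\in\mathcal{C}_i\}$, and that $[\cdot,\cdot]_\Delta$ restricted to $\mathcal{J}_i\times\mathcal{J}_{\mu(i)}$ is a perfect pairing, by the non-degeneracy of Lemma~10(v) combined with $\tau_{q^u,-1}(\mathcal{J}_i)=\mathcal{J}_{\mu(i)}$ (Lemma~11); hence $(\mathcal{C}_{\mu(i)}^{(\Delta)})^{(\Delta)}=\mathcal{C}_i$ automatically. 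So $\mathcal{C}_i$ may be an arbitrary $\mathcal{K}_i$-subspace of $\mathcal{J}_i$ of dimension $0$, $1$ or $2$, and $\mathcal{C}_{\mu(i)}$ is then its forced companion $\mathcal{C}_{\mu(i)}^{(\Delta)}$: dimension $0$ yields $\mathcal{C}_{\mu(i)}=\mathcal{J}_{\mu(i)}$, dimension $2$ yields $\mathcal{C}_{\mu(i)}=\{0\}$, and in the one-dimensional case $\mathcal{C}_{\mu(i)}$ is the unique one-dimensional companion whose basis was computed in Theorem~2(iv)(c),(v)(c). Splitting by the parity of $d_i$ ($d_i$ odd: $\mathcal{J}_i=\mathcal{I}_{i,0}\cong\mathbb{F}_{q^{2d_i}}$; $d_i$ even: $\mathcal{J}_i=\mathcal{I}_{i,0}\oplus\mathcal{I}_{i,1}$) and taking over the exponent relations $k+k'\equiv 0\;(\bmod\; q^{d_i}+1)$, respectively $k+k'\equiv 0\;(\bmod\; q^{d_i}-1)$, from the corresponding computations in the proof of Theorem~2 produces (iv) and (v).

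The hard part will not be any single calculation but the bookkeeping: one must verify in every fixed-point case that the dimension identity really does rule out $\mathcal{C}_i=\{0\}$ (and, when $i\in\{0,i^\#\}$, that among the $q+1$ one-dimensional $\mathcal{K}_i$-subspaces only the one with $k=0$ for $q$ even and $k=\frac{q+1}{2}$ for $q$ odd is self-dual, a fact already isolated inside the proof of Theorem~2(i)(b)), and in every transposition case that the companion component inherits precisely the basis displayed in Theorem~2, with the ``$\subseteq$'' there now forced to ``$=$'' by the dimension count. I would therefore present the proof compactly as: invoke Lemma~13(ii) and the dimension identity of Theorem~1, then run through the cases of the proof of Theorem~2 in order, adding in each case only the brief dimension argument that upgrades containment to the required equality.
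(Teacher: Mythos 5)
Your proposal is correct and follows essentially the same route as the paper: the paper's proof of this theorem is literally ``the result follows immediately by Theorem~2,'' with the two key observations you elaborate --- that the dimension identity of Theorem~1 forces $\dim_{\mathcal{K}_i}\mathcal{C}_i=1$ when $\mu(i)=i$, and that $\mathcal{C}_{\mu(i)}=\mathcal{C}_{\mu(i)}^{(\Delta)}$ when $\mu(i)\neq i$ --- stated in the paragraph immediately preceding the theorem. Your write-up simply makes explicit the case-by-case upgrade of containment to equality that the paper leaves implicit.
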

\begin{proof}
The result follows immediately by Theorem 2.
\end{proof}
\par
When $t=2$, similarly, we give the number of cyclic $\Delta$-self-dual $\mathbb{F}_q$-linear $\mathbb{F}_{q^2}$-codes in the following Theorem.
\begin{theorem}
When $t=2$, the number of cyclic $\Delta$-self-dual $\mathbb{F}_q$-linear $\mathbb{F}_{q^2}$-codes of length $n$ is as follows, where $\gcd (n,q) = 1$.
\[\prod\nolimits_{i \in \mathfrak{J}} {({q^{{{{d_i}} \mathord{\left/
 {\vphantom {{{d_i}} 2}} \right.
 \kern-\nulldelimiterspace} 2}}} + 1)\prod\nolimits_{j \in \mathfrak{M}} {({q^{{d_j}}} + b')} }, \]
 where $ b'=3$ if $d_i$ is odd and $ b'=1$ if $d_i$ is even.
\end{theorem}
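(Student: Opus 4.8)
The plan is to mirror the bookkeeping used in the proof of Theorem~3, replacing ``$\Delta$-self-orthogonal'' by ``$\Delta$-self-dual'' throughout. By Lemma~13(ii) a cyclic $\mathbb{F}_q$-linear $\mathbb{F}_{q^2}$-code $\mathcal{C}=\mathcal{C}_0\oplus\cdots\oplus\mathcal{C}_{s-1}$, with $\mathcal{C}_i=\mathcal{C}\cap\mathcal{J}_i$, is cyclic $\Delta$-self-dual if and only if $\mathcal{C}_i=\mathcal{C}_i^{(\Delta)}$ for all $i$. Since $\mathcal{R}_n^{(q^2)}=\mathcal{J}_0\oplus\cdots\oplus\mathcal{J}_{s-1}$ and, by Theorem~1, $\mathcal{C}_i^{(\Delta)}$ depends only on $\mathcal{C}_{\mu(i)}$, the number of such $\mathcal{C}$ factors as a product over the orbits of $\mu$ on $\{0,1,\dots,s-1\}$ of the number of admissible local data. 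These orbits are the singletons $\{0\}$ and, when $n$ is even, $\{i^{\#}\}$; the remaining fixed points $\{i\}$ with $i\in\mathfrak{J}$; and the transpositions $\{j,\mu(j)\}$ with $j\in\mathfrak{M}$. By the dimension identity of Theorem~1, $\dim_{\mathcal{K}_i}\mathcal{C}_i+\dim_{\mathcal{K}_{\mu(i)}}\mathcal{C}_{\mu(i)}=t=2$, so at a fixed point $\mu(i)=i$ we are forced to have $\dim_{\mathcal{K}_i}\mathcal{C}_i=1$.

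First I would handle the singleton orbits. For $i=0$, and for $i=i^{\#}$ when $n$ is even, Lemma~2(iv)--(v) and Lemma~6(ii) give $d_i=1$, $\mathcal{J}_i=\mathcal{I}_{i,0}\cong\mathbb{F}_{q^{2}}$ and $\mathcal{K}_i\cong\mathbb{F}_q$, and Theorem~4(i) pins down a unique $1$-dimensional $\mathcal{K}_i$-subspace $\langle\rho_{i,0}(X)^{k}\rangle$ (with $k=0$ for $q$ even and $k=\tfrac{q+1}{2}$ for $q$ odd) satisfying $\mathcal{C}_i=\mathcal{C}_i^{(\Delta)}$, so each such orbit contributes the factor $1$ and does not appear in the formula; this is why the prefactor $a'\in\{2,4\}$ of Theorem~3 is now absent. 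For a fixed point $i\in\mathfrak{J}$ one has $d_i$ even, $\mathcal{J}_i=\mathcal{I}_{i,0}\oplus\mathcal{I}_{i,1}$ is $2$-dimensional over $\mathcal{K}_i\cong\mathbb{F}_{q^{d_i}}$, and Theorem~4(ii)--(iii) lists the admissible $1$-dimensional $\mathcal{C}_i$. If $\tau_{1,-1}(\mathcal{I}_{i,0})=\mathcal{I}_{i,0}$ these are the lines spanned by $e_{i,0}(X)+\rho_{i,1}(X)^{k}$ with $k=(q^{d_i/2}-1)m$, $0\le m\le q^{d_i/2}$; here I would check pairwise distinctness by observing that $\lambda\in\mathcal{K}_i^{\times}$ with $\lambda e_{i,0}(X)=e_{i,0}(X)$ forces $\lambda=1$, so equal lines give equal exponents, and that $(q^{d_i/2}-1)m$ runs over $q^{d_i/2}+1$ distinct residues modulo $q^{d_i}-1$ on that range. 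If instead $\tau_{1,-1}(\mathcal{I}_{i,0})=\mathcal{I}_{i,1}$, the admissible lines are $\mathcal{I}_{i,0}$, $\mathcal{I}_{i,1}$ and the $q^{d_i/2}-1$ ``diagonal'' lines of Theorem~4(iii). Either way there are $q^{d_i/2}+1$ admissible $\mathcal{C}_i$, so $i\in\mathfrak{J}$ contributes the factor $q^{d_i/2}+1$.

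The transposition orbits $\{j,\mu(j)\}$ carry the real content. Self-duality forces $\mathcal{C}_{\mu(j)}=\mathcal{C}_{\mu(j)}^{(\Delta)}$, which by Theorem~1 is entirely determined by $\mathcal{C}_j$, so an admissible local datum is simply a $\mathcal{K}_j$-subspace $\mathcal{C}_j\subseteq\mathcal{J}_j$ for which the induced pair also satisfies $\mathcal{C}_j=\mathcal{C}_j^{(\Delta)}$, and Theorem~4(iv)--(v) describes precisely these. When $d_j$ is odd, $\mathcal{J}_j\cong\mathbb{F}_{q^{2d_j}}$ is $2$-dimensional over $\mathcal{K}_j\cong\mathbb{F}_{q^{d_j}}$ and every $\mathcal{K}_j$-subspace occurs: $\{0\}$ (paired with $\mathcal{J}_{\mu(j)}$), $\mathcal{J}_j$ (paired with $\{0\}$), and all $q^{d_j}+1$ lines, a total of $q^{d_j}+3$, i.e. $b'=3$. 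When $d_j$ is even, $\mathcal{J}_j=\mathcal{I}_{j,0}\oplus\mathcal{I}_{j,1}$ still has $q^{d_j}+3$ $\mathcal{K}_j$-subspaces, but now the two distinguished lines $\mathcal{I}_{j,0}=\langle e_{j,0}(X)\rangle$ and $\mathcal{I}_{j,1}=\langle e_{j,1}(X)\rangle$ cannot be completed to a self-dual configuration: pairing either one with $\{0\}$ would force $\mathcal{C}_j=\mathcal{J}_j$ by the dimension identity, while pairing it with any $1$-dimensional $\mathcal{C}_{\mu(j)}$ gives $\mathcal{C}_j\neq\mathcal{C}_j^{(\Delta)}$, since the computations of $[\,\cdot\,,\,\cdot\,]_\Delta$ in the proof of Theorem~2(v) (for instance $[e_{j,0}(X),e_{\mu(j),0}(X)]_\Delta=\gamma e_{j,0}(X)-\gamma e_{j,1}(X)\neq 0$) show that the $\Delta$-dual of a diagonal line is again a diagonal line, never $\mathcal{I}_{j,0}$ or $\mathcal{I}_{j,1}$. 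Hence only $\{0\}$, $\mathcal{J}_j$ and the $q^{d_j}-1$ diagonal lines of Theorem~4(v)(c) survive, a total of $q^{d_j}+1$, i.e. $b'=1$. Multiplying the factor $1$ from $\{0\}$ and $\{i^{\#}\}$, the factors $q^{d_i/2}+1$ from $i\in\mathfrak{J}$, and the factors $q^{d_j}+b'$ from $j\in\mathfrak{M}$ then yields the stated count. The one genuinely delicate point, rather than a mere transcription of the self-orthogonal argument, is the $d_j$-even subcase of the transposition orbits, where one must verify that it is exactly the two lines $\mathcal{I}_{j,0},\mathcal{I}_{j,1}$ that drop out, which is what turns $b'=3$ into $b'=1$.
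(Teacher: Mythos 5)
Your overall architecture --- factoring the count over the orbits of $\mu$, forcing $\dim_{\mathcal{K}_i}\mathcal{C}_i=1$ at fixed points via Theorem~1, and reading off the admissible local data from Theorem~4 --- is exactly the route the paper intends (its own proof of Theorem~5 is omitted as ``similar to Theorem~3''), and your treatment of the singleton orbits and of $i\in\mathfrak{J}$ is sound. The gap is in the step you yourself single out as delicate: the claim that on a transposition orbit with $d_j$ even the lines $\mathcal{I}_{j,0}$ and $\mathcal{I}_{j,1}$ admit no self-dual completion. Your justification is that the $\Delta$-dual of a diagonal line is diagonal and that $[e_{j,0}(X),e_{\mu(j),0}(X)]_\Delta\neq 0$; but this only rules out $\mathcal{C}_{\mu(j)}$ being a diagonal line or being $\mathcal{I}_{\mu(j),0}$, and says nothing about $\mathcal{C}_{\mu(j)}=\mathcal{I}_{\mu(j),1}$. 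With the normalization $\tau_{1,-1}(\mathcal{I}_{\mu(j),l})=\mathcal{I}_{j,l}$ used in the proof of Theorem~2(v), for any $c(X)\in\mathcal{I}_{j,0}$ and $a(X)\in\mathcal{I}_{\mu(j),1}$ one has $c(X)\,\tau_{1,-1}(a(X))\in\mathcal{I}_{j,0}\mathcal{I}_{j,1}=\{0\}$, hence $[c(X),a(X)]_\Delta=0$; by the dimension identity of Theorem~1 this forces $\mathcal{C}_{\mu(j)}^{(\Delta)}=\mathcal{I}_{\mu(j),1}$ when $\mathcal{C}_j=\mathcal{I}_{j,0}$, and the symmetric computation shows the annihilator of $\mathcal{I}_{\mu(j),1}$ is $\mathcal{I}_{j,0}$. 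So $(\mathcal{I}_{j,0},\mathcal{I}_{\mu(j),1})$ and $(\mathcal{I}_{j,1},\mathcal{I}_{\mu(j),0})$ \emph{are} mutually annihilating pairs and do produce cyclic $\Delta$-self-dual codes that your count omits.

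Consequently the local count on an even-$d_j$ transposition orbit comes out to $2+(q^{d_j}-1)+2=q^{d_j}+3$ rather than $q^{d_j}+1$. This defect is inherited from the paper itself: the proof of Theorem~2(v) infers ``neither $\{e_{i,0}(X)\}$ nor $\{e_{i,1}(X)\}$ is a basis of $\mathcal{C}_i$'' from the single computation $[e_{i,0}(X),e_{\mu(i),0}(X)]_\Delta\neq 0$, which does not exclude the partner $\mathcal{I}_{\mu(i),1}$. (Note the analogous exclusion \emph{is} valid at a fixed point $i\in\mathfrak{J}$ with $\tau_{1,-1}(\mathcal{I}_{i,0})=\mathcal{I}_{i,0}$, because there $[e_{i,0}(X),e_{i,0}(X)]_\Delta\neq 0$ directly contradicts $e_{i,0}(X)\in\mathcal{C}_i=\mathcal{C}_i^{(\Delta)}$; that self-pairing obstruction has no counterpart when $\mu(j)\neq j$.) Either the two extra configurations must be excluded by an argument that neither you nor the paper supplies, or the factor for $j\in\mathfrak{M}$ with $d_j$ even should be $q^{d_j}+3$ as well. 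As written, the decisive step of your proof does not close.
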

\begin{proof}
The proof of this result is quite similar to Theorem 3, and so is omitted.
\end{proof}

%%%%%%%%%%%%%%%%%%%%%%%%%%%%%%%%%%%%%%%%%%%%%%%%%%%%%%%%%%%%%%%%%%%%%%%%%%%%%%%%%%%%%%%%%%%%%%
\section{An example} \label{}
\noindent
In this case, we have $q=p=3$, $t=2$ and $n=7$. Then $t=2^am$ and $Q=2^{a-1}=1$, where $a=1, m=1$. Let $\mathbb{F}_{3^2}={{{\mathbb{F}_3}[X]} \mathord{\left/
 {\vphantom {{{F_3}[X]} {\left\langle {{X^2} + 1} \right\rangle }}} \right.
 \kern-\nulldelimiterspace} {\left\langle X^2+2X+2 \right\rangle }}$, where $X^2+2X+2$ is a primitive polynomial over $\mathbb{F}_3[X]$. Let $\omega$ be a root of $X^2+2X+2$, it is trivial to see that $\omega$ is a primitive element of $\mathbb{F}_{3^2}$ with ${\rm{ord}}(\omega ) = {3^2} - 1 = 8$. Thus $\mathcal {R}_{7}^{(3)} = {{{\mathbb{F}_3}[X]} \mathord{\left/
 {\vphantom {{{\mathbb{F}_3}[X]} {\left\langle {{X^{7}} - 1} \right\rangle }}} \right.
 \kern-\nulldelimiterspace} {\left\langle {{X^{7}} - 1} \right\rangle }}$ and $\mathcal {R}_{7}^{(3^2)} = {{{\mathbb{F}_{3^2}}[X]} \mathord{\left/
 {\vphantom {{{\mathbb{F}_3}[X]} {\left\langle {{X^{7}} - 1} \right\rangle }}} \right.
 \kern-\nulldelimiterspace} {\left\langle {{X^{7}} - 1} \right\rangle }}$. All distinct cyclic $\Delta$-self-orthogonal and cyclic $\Delta$-self-dual $\mathbb{F}_3$-linear $\mathbb{F}_{3^2}$-codes of length $7$ and their enumeration can be given in the following Steps.
\par
\textbf{Step 1} Factorize ${X^{7}} - 1 = \prod\nolimits_{i = 0}^1 {{m_i}} (X)$, where $m_0(X)=X-1=X+2$ and $m_1(X)=X^6+X^5+X^4+X^3+X^2+X+1$ are monic irreducible polynomials over $\mathbb{F}_3[X]$. Let $\mathbb{F}_{3^6}={{{\mathbb{F}_3}[X]} \mathord{\left/
 {\vphantom {{{F_3}[X]} {\left\langle {X^6 + 2X^4 + X^2 + 2X + 2} \right\rangle }}} \right.
 \kern-\nulldelimiterspace} {\left\langle X^6 + 2X^4 + X^2 + 2X + 2 \right\rangle }}$, where $X^6 + 2X^4 + X^2 + 2X + 2$ is a primitive polynomial  over $\mathbb{F}_3[X]$, and $\eta$ be a root of $X^6 + 2X^4 + X^2 + 2X + 2 $, then we have ${\rm{ord}}(\eta) = {3^6} - 1 = 728$.
Thus $\eta'=\eta^{104}$ is a primitive $7$th root of unity over $\mathbb{F}_{3^6}$.
\par
According to Lemma 2 (i), as $C_0^{(3)} = \{ 0\} $ and $C_1^{(3)} = \{ 1,2,3,4,5,6\} $, it is not difficult to find that ${l_0} = 0,\;{l_1} = 1$, ${d_0} = \deg {m_0}(X) =|C_{{l_0}}^{(3)}| = 1$ and ${d_1} = \deg {m_1}(X) = |C_{{l_1}}^{(3)}| = 6$. Furthermore, we have ${m_0}(X) \leftrightarrow C_{{l_0}}^{(3)} = C_0^{(3)}$ and ${m_1}(X) \leftrightarrow C_{{l_1}}^{(3)} = C_1^{(3)}$. That is,
\[{m_0}(X) = X - 1 = X - {({\eta ^{104}})^0} \]
and
\[\begin{array}{*{20}{l}}
{{m_1}(X) = {X^6} + {X^5} + {X^4} + {X^3} + {X^2} + X + 1}\\
\begin{array}{l}
\;\;\;\;\;\;\;\;\;\; \;= (X - {({\eta ^{104}})^1})(X - {({\eta ^{104}})^2})(X - {({\eta ^{104}})^3})(X - {({\eta ^{104}})^4})\\
\;\;\;\;\;\;\;\;\;\;\;\;\;\;\; \cdot (X - {({\eta ^{104}})^5})(X - {({\eta ^{104}})^6})
\end{array}\\
{\;\;\;\;\;\;\;\;\;\; \;\;= (X - {\eta ^{104}})(X - {\eta ^{208}})(X - {\eta ^{312}})(X - {\eta ^{416}})(X - {\eta ^{520}})(X - {\eta ^{624}}).}
\end{array}\]
\par
\textbf{Step 2} By computer systems Maple and Magma (http://magma.maths.us\\
yd.edu.au/calc/), it is easy to verify that ${m_0}(X) = X - 1 = {M_{0,0}}(X)$ and ${m_1}(X) ={X^6} + {X^5} + {X^4} + {X^3} + {X^2} + X + 1= {M_{1,0}}(X){M_{1,1}}(X)$, where ${M_{1,0}}(X) = X^3 + \omega X^2 + \omega^7X + 2$, ${M_{1,1}}(X) = X^3 + \omega^3X^2 + \omega^5X + 2$  and ${M_{0,0}}(X)$, ${M_{1,0}}(X)$ ,${M_{1,1}}(X)$ are monic irreducible polynomials over $\mathbb{F}_{3^2}[X]$. So, by Lemma 2 (ii) and (iii), we have $s_0=1$, $s_1=2$ and ${X^{7}} - 1 = \prod\nolimits_{i = 0}^1 {\prod\nolimits_{j = 0}^{{s_i} - 1} {{M_{i,j}}} (X)} .$
\par
Since ${M_{i,j}}(X) \leftrightarrow C_{{l_i}{3^j}}^{({3^2})}$, ${M_{0,0}}(X) \leftrightarrow C_{{l_0}{3^0}}^{({3^2})} = C_0^{(9)} = \{ 0\} ,$ ${M_{1,0}}(X) \leftrightarrow C_{{l_1}{3^0}}^{({3^2})} = C_1^{(9)} = \{ 1,2,4\} $ and ${M_{1,1}}(X) \leftrightarrow C_{{l_1}{3^1}}^{({3^2})} = C_3^{(9)} = \{ 3,5,6\} $. That is,
\[{M_{0,0}}(X) = X - 1 =  X - {({\eta ^{104}})^0},\]
\[\begin{array}{*{20}{l}}
\begin{array}{l}
{M_{1,0}}(X)\\
\;\; = {X^3} + \omega {X^2} + {\omega ^7}X + 2 = (X - {({\eta ^{104}})^1})(X - {({\eta ^{104}})^2})(X - {({\eta ^{104}})^4})
\end{array}\\
{\;\;\; = (X - {\eta ^{104}})(X - {\eta ^{208}})(X - {\eta ^{416}})}
\end{array}\]
and
\[\begin{array}{*{20}{l}}
\begin{array}{l}
{M_{1,1}}(X)\\
\;\; = {X^3} + {\omega ^3}{X^2} + {\omega ^5}X + 2 = (X - {({\eta ^{104}})^3})(X - {({\eta ^{104}})^5})(X - {({\eta ^{104}})^6})
\end{array}\\
{\;\;\; = (X - {\eta ^{312}})(X - {\eta ^{520}})(X - {\eta ^{624}}).}
\end{array}\]
\par
\textbf{Step 3 } By Lemma 2 (iv) and (v), we have $\mathcal {R}_{7}^{(3)} =\mathcal {K}_0\oplus \mathcal {K}_1$, where $\mathcal {K}_0$ is the ideal of $\mathcal {R}_{7}^{(3)}$ generated by ${{\hat m}_0}(X) = {{({X^{7}} - 1)} \mathord{\left/
 {\vphantom {{({X^{7}} - 1)} {{m_0}(X)}}} \right.
 \kern-\nulldelimiterspace} {{m_0}(X)}} =X^6+ X^5+X^4+ X^3+X^2+X+1$ denoted by $\mathcal {K}_0=\langle X^6+ X^5+X^4+ X^3+X^2+X+1 \rangle \cong \mathbb{F}_{3^{d_0}}=\mathbb{F}_3$. Similarly, we have $\mathcal {K}_1=\langle X+2 \rangle \cong \mathbb{F}_{3^{d_1}}=\mathbb{F}_{3^6}$.
 \par
 As ${D_i} = {{{d_i}} \mathord{\left/
 {\vphantom {{{d_i}} {{s_i}}}} \right.
 \kern-\nulldelimiterspace} {{s_i}}}$ for $0\leq i\leq 1$, $D_0=1$ and $D_1=3$.
 Then, similarly, we have $\mathcal {R}_{7}^{(3^2)} =\mathcal {I}_{0,0}\oplus \mathcal {I}_{1,0}\oplus \mathcal {I}_{1,1}$, where $\mathcal {I}_{0,0}=\langle{{({X^7} - 1)} \mathord{\left/
 {\vphantom {{({X^7} - 1)} {{M_{0,0}}(X)}}} \right.
 \kern-\nulldelimiterspace} {{M_{0,0}}(X)}}\rangle=\langle X^6+ X^5+X^4+ X^3+X^2+X+1 \rangle \cong \mathbb{F}_{3^{2D_0}}=\mathbb{F}_{3^2}$, $\mathcal {I}_{1,0}=\langle X^4 + \omega^5 X^3 + 2 X^2 + \omega^7 X + 1 \rangle \cong \mathbb{F}_{3^{2D_1}}=\mathbb{F}_{3^6}$ and $\mathcal {I}_{1,1}=\langle X^4 + \omega^7 X^3 + 2 X^2 + \omega^5 X + 1 \rangle \cong \mathbb{F}_{3^{2D_1}}=\mathbb{F}_{3^6}$. In addition, we have $\mathcal {J}_0=\mathcal {I}_{0,0}$ and $\mathcal {J}_1=\mathcal {I}_{1,0}\oplus \mathcal {I}_{1,1}$.
 \par
\textbf{ Step 4} By Wan \cite{Wan} and computer systems Maple and Magma, we can obtain non-zero idempotents $e_{i,j}(X)$ of $\mathcal {I}_{i,j}$, where $0\leq i\leq 1$ and $0\leq j\leq s_i-1$.
That is, $$e_{0,0}(X)=X^6 + X^5 + X^4 + X^3 + X^2 + X + 1,$$
$$e_{1,0}(X)=\omega^5 X^6 + \omega^5 X^5 + \omega^7 X^4 + \omega^5 X^3 + \omega^7 X^2 + \omega^7 X$$
and
$$e_{1,1}(X)=\omega^7 X^6 + \omega^7 X^5 + \omega^5 X^4 + \omega^7 X^3 + \omega^5 X^2 + \omega^5 X.$$
It is easy to show that $e_{i,j}(X)$ is the identity of $\mathcal {I}_{i,j}$ for $0\leq i\leq 1$ and $0\leq j\leq s_i-1$.
\par
It is easy to verify that
 $\eta$ and $\eta^{243}$ are primitive elements of finite field $\mathbb{F}_{3^6}$.
As $\mathcal {I}_{0,0}\cong \mathbb{F}_{3^2}$, $\mathcal {I}_{1,0}\cong \mathbb{F}_{3^6}$ and $\mathcal {I}_{1,1}\cong \mathbb{F}_{3^6}$, we choose
$${\rho _{0,0}}(X) = \omega \cdot {e_{0,0}}(X) = \omega X^6 + \omega X^5 +\omega X^4 +\omega X^3 +\omega X^2 +\omega X + \omega,$$
\[\begin{array}{*{20}{l}}
\begin{array}{l}
{\rho _{1,0}}(X)\\
\;\; = {\eta ^{243}}\cdot{e_{1,0}}(X)
\end{array}\\
{\;\;\; = {\eta ^{243}}{\omega ^5}{X^6} + {\eta ^{243}}{\omega ^5}{X^5} + {\eta ^{243}}{\omega ^7}{X^4} + {\eta ^{243}}{\omega ^5}{X^3} + {\eta ^{243}}{\omega ^7}{X^2} + {\eta ^{243}}{\omega ^7}X}
\end{array}\]
and
$${\rho _{1,1}}(X) =\eta \cdot{e_{1,1}}(X)= \eta \omega^7 X^6 +\eta  \omega^7 X^5 + \eta \omega^5 X^4 +\eta  \omega^7 X^3 +\eta  \omega^5 X^2 +\eta  \omega^5 X$$
are primitive elements of $\mathcal {I}_{i,j}$ satisfying ${\tau _{{3^j},1}}({\rho _{i,0}}(X)) = {\rho _{i,j}}(X)$ for $0\leq i\leq 1$ and $0\leq j\leq s_i-1$.
\par
\textbf{Step 5} It is easy to show that $C_{ - {l_1}}^{(3)} = C_{ - 1}^{(3)} = \{ 1,2,3,4,5,6\}  = C_1^{(3)} = C_{{l_1}}^{(3)} = C_{{l_{\mu (1)}}}^{(3)}$, then $\mu(1)=1$. By Maple and Magma, we verify that ${\tau _{1, - 1}}({\mathcal {I}_{1,0}}) = {\mathcal {I}_{1,1}}$.
\par
Let $\mathcal {C}$ be a cyclic $\mathbb{F}_3$-linear $\mathbb{F}_{3^2}$-code of length $7$ and ${\mathcal {C}^{{ \bot _\Delta }}}$ the dual code of $\mathcal {C}$. We write $\mathcal {C} = {\mathcal {C}_0} \oplus {\mathcal {C}_1}$ and ${\mathcal {C}^{{ \bot _\Delta }}} = \mathcal {C}_0^{(\Delta )} \oplus \mathcal {C}_1^{(\Delta )} $, where ${\mathcal {C}_i} = \mathcal {C} \cap {\mathcal {J}_i}$ and $\mathcal {C}_i^{(\Delta )} = {\mathcal {C}^{{ \bot _\Delta }}} \cap {\mathcal {J}_i}$ for all $0\leq i\leq 1$. According to Theorem 2, $\mathcal {C}$ is cyclic $\Delta$-self-orthogonal if and only if for each $i$ $(0\leq i\leq 1)$, the following assertions hold.\\
\indent
(i) When $i=0$, then\\
\indent
(a) $\mathcal {C}_ 0=\{0\}$, or\\
\indent
(b) $\mathcal {C}_0$ is 1-dimensional $\mathcal {K}_0$-subspace of $\mathcal {J}_0$ with basis $\{ {\rho _{0,0}}{(X)^2}\}= \{\omega^2 X^6 + \omega^2 X^5 +\omega^2 X^4 +\omega^2 X^3 +\omega^2 X^2 +\omega^2 X + \omega^2\}$.\\
\indent
(ii) When $i\neq 0$, we have $\mu(1)=1$ and ${\tau _{1, - 1}}({\mathcal {I}_{1,0}}) = {\mathcal {I}_{1,1}}$. Then\\
\indent
(a) $\mathcal {C}_1=\{0\}$, or\\
\indent
(b) $\mathcal {C}_1$ is 1-dimensional over $\mathcal {K}_1$ with bases $\{ {e_{1,0}}(X)\}  = \{ \omega^5 X^6 + \omega^5 X^5 + \omega^7 X^4 + \omega^5 X^3 + \omega^7 X^2 + \omega^7 X\}$, $\{ {e_{1,1}}(X)\}  = \{ \omega^7 X^6 + \omega^7 X^5 + \omega^5 X^4 + \omega^7 X^3 + \omega^5 X^2 + \omega^5 X\}$ and
 $\{e_{1,0}(X)+{{\rho _{1,1}}{(X)}^k}\},$ where $k=28m$ for $0\leq m\leq 25$. \\
 \indent
According to Theorem 3, the number of cyclic $\Delta$-self-orthogonal $\mathbb{F}_3$-linear $\mathbb{F}_{3^2}$-codes of length $7$ is
$2 \cdot ({3^{{{{d_1}} \mathord{\left/
 {\vphantom {{{d_1}} 2}} \right.
 \kern-\nulldelimiterspace} 2}}} + 2) = 58$.
 \par
\textbf{ Step 6} By Theorem 4, $\mathcal {C}$ is cyclic $\Delta$-self-dual if and only if for each $i$ $(0\leq i\leq 1)$, the following assertions hold.\\
\indent
(i) When $i=0$, then $\mathcal {C}_0$ is 1-dimensional $\mathcal {K}_0$-subspace of $\mathcal {J}_0$ with basis $\{ {\rho _{0,0}}{(X)^2}\}= \{\omega^2 X^6 + \omega^2 X^5 +\omega^2 X^4 +\omega^2 X^3 +\omega^2 X^2 +\omega^2 X + \omega^2\}$.\\
\indent
(ii)  When $i\neq 0$, we have $\mu(1)=1$ and ${\tau _{1, - 1}}({\mathcal {I}_{1,0}}) = {\mathcal {I}_{1,1}}$. Then $\mathcal {C}_1$ is 1-dimensional over $\mathcal {K}_1$ with bases $\{ {e_{1,0}}(X)\}  = \{ \omega^5 X^6 + \omega^5 X^5 + \omega^7 X^4 + \omega^5 X^3 + \omega^7 X^2 + \omega^7 X\}$, $\{ {e_{1,1}}(X)\}  = \{ \omega^7 X^6 + \omega^7 X^5 + \omega^5 X^4 + \omega^7 X^3 + \omega^5 X^2 + \omega^5 X\}$ and
 $\{e_{1,0}(X)+{{\rho _{1,1}}{(X)}^k}\},$ where $k=28m$ for $0\leq m\leq 25$.\\
 \indent
According to Theorem 5, the number of cyclic $\Delta$-self-dual $\mathbb{F}_3$-linear $\mathbb{F}_{3^2}$-codes of length $7$ is
${3^{{{{d_1}} \mathord{\left/
 {\vphantom {{{d_1}} 2}} \right.
 \kern-\nulldelimiterspace} 2}}} + 1 = 28$.

\par
\textbf{ A good code} By Step 5, we have that $\mathcal {C}=\langle {e_{1,0}}(X) \rangle=\langle  \omega^5 X^6 + \omega^5 X^5 + \omega^7 X^4 + \omega^5 X^3 + \omega^7 X^2 + \omega^7 X \rangle$ is a cyclic $\Delta$-self-orthogonal $\mathbb{F}_3$-linear $\mathbb{F}_{3^2}$-code of length $7$. Using Magma, it is not difficult to show that $\mathcal {C}$ is a $(7,(3^2)^3,5)$ $\mathbb{F}_3$-linear $\mathbb{F}_{3^2}$-code with generator matrix \[\left( {\begin{array}{*{20}{c}}
0&w^7& w^7&w^5& w^7&w^5& w^5\\
w^5&0&w^7& w^7&w^5& w^7&w^5 \\
w^5&w^5&0&w^7& w^7&w^5& w^7 \\
w^7&w^5&w^5&0&w^7& w^7&w^5  \\
w^5&w^7&w^5&w^5&0&w^7& w^7  \\
 w^7&w^5&w^7&w^5&w^5&0&w^7
\end{array}} \right).\]
 Furthermore, $\mathcal {C}$ is a good code which has the same parameters with the best known linear code $[7,3,5]$ over $\mathbb{F}_{3^2}$ according to the online database \cite{Markus}.
\par

At the end of this example, with the similar manner above, we construct several good cyclic $\Delta$-self-orthogonal $\mathbb{F}_q$-linear $\mathbb{F}_{q^2}$-codes of length $n$ in Table 1. These codes have the same parameters with the best known linear codes over $\mathbb{F}_{q^2}$ given in online database \cite{Markus} or MDS codes over $\mathbb{F}_{q^2}$ . In Table 1, $n$ is the length of $\mathcal {C}$, $(q^2)^k$ is the cardinality of $\mathcal {C}$, $d$ is the minimum Hamming diatance of $\mathcal {C}$ and $\alpha_i$ are given in Appendix A, where $1\leq i\leq 15$.
\par
\begin{table}[h]
\small
% table caption is above the table
\caption{Some good  $\mathbb{F}_q$-linear $\mathbb{F}_{q^2}$-codes}
\label{tab:1}       % Give a unique label
% For LaTeX tables use
\begin{tabular}{lcl}
\hline\noalign{\smallskip}
$\{q,n\}$&The basis of $\mathcal {C}$ & $(n,(q^2)^k,d)$ \\
\noalign{\smallskip}\hline\noalign{\smallskip}
$\{2,11\}$&$\alpha_1$ & $(11,(2^2)^5,6)$ \\
$\{2,19\}$&$\alpha_2$ & $(19,(2^2)^9,8)$ \\
$\{3,7\}$&$\alpha_3$ & $(7,(3^2)^3,5)$ \\
$\{3,19\}$&$\alpha_4$ & $(19,(3^2)^9,10)$ \\
$\{5,7\}$&$\alpha_5$ & $(7,(5^2)^3,5)$ \\
$\{5,23\}$&$\alpha_6$ & $(23,(5^2)^{11},12)$ \\
$\{7,11\}$&$\alpha_7$ & $(11,(7^2)^{5},7)$ \\
$\{7,23\}$&$\alpha_8$ & $(23,(7^2)^{11},12)$ \\
$\{11,23\}$&$\alpha_9$ & $(23,(11^2)^{11},12)$ \\
$\{13,11\}$&$\alpha_{10}$ & $(11,(13^2)^{5},7)$ \\
$\{13,19\}$&$\alpha_{11}$ & $(19,(13^2)^{9},11)$ \\
$\{17,7\}$&$\alpha_{12}$ & $(7,(17^2)^{3},5)$ \\
$\{17,11\}$&$\alpha_{13}$ & $(11,(17^2)^{5},7)$ \\
$\{19,7\}$&$\alpha_{14}$ & $(7,(19^2)^{3},5)$ \\
$\{19,11\}$&$\alpha_{15}$ & $(11,(19^2)^{5},7)$ \\
\noalign{\smallskip}\hline
\end{tabular}
\end{table}

%%%%%%%%%%%%%%%%%%%%%%%%%%%%%%%%%%%%%%%%%%%%%%%%%%%%%%%%%%%%%%%%%%%%%%%%%%%%%%%%%%%%%%%%%%%%%%
\section{Conclusions} \label{}
\noindent A new trace bilinear form on $\mathbb{F}_{{q^t}}^n$ which is called $\Delta$-bilinear form is given, where $n$ is a positive integer coprime to $q$. Then according to this new trace bilinear form, bases and enumeration of cyclic $\Delta$-self-orthogonal and cyclic $\Delta$-self-dual $\mathbb{F}_q$-linear $\mathbb{F}_{q^t}$-codes are investigated when $t=2$. Finally, we describe a program to construct cyclic $\Delta$-self-orthogonal and cyclic $\Delta$-self-dual $\mathbb{F}_3$-linear $\mathbb{F}_{3^2}$-codes of length $7$ and obtain some good $\mathbb{F}_q$-linear $\mathbb{F}_{q^2}$-codes.

\vskip 3mm \noindent {\bf Acknowledgments} This research is supported by the National Key Basic Research Program of China (Grant No. 2013CB834204), and the National Natural Science Foundation of China (Nos.61171082,61571243).

\vskip 3mm \noindent{\bf Appendix A}

\vskip 3mm
In Table 1, \\
\indent
$\alpha_1={\omega_1}X^{10} + {\omega_1}^2 X^9 + {\omega_1} X^8 + {\omega_1} X^7 + {\omega_1} X^6 + {\omega_1}^2 X^5 + {\omega_1}^2 X^4 + {\omega_1}^2 X^3 + {\omega_1} X^2 +{\omega_1}^2 X + 1,$\\
\par
$\alpha_2={\omega_1} X^{18} + {\omega_1}^2 X^{17} + {\omega_1}^2 X^{16} + {\omega_1} X^{15} + {\omega_1} X^{14} + {\omega_1} X^{13} + {\omega_1} X^{12} +{\omega_1}^2 X^{11} +{\omega_1} X^{10} + {\omega_1}^2 X^9 +{\omega_1} X^8 + {\omega_1}^2 X^7 + {\omega_1}^2 X^6 +{\omega_1}^2 X^5 + {\omega_1}^2 X^4 + {\omega_1} X^3 + {\omega_1} X^2 + {\omega_1}^2 X + 1,$\\
\par
$\alpha_3=\omega^5 X^6 + \omega^5 X^5 + \omega^7 X^4 + \omega^5 X^3 + \omega^7 X^2 + \omega^7 X,$\\
\par
$\alpha_4=\omega^5 X^{18} + \omega^7 X^{17} + \omega^7 X^{16} + \omega^5 X^{15} + \omega^5 X^{14} + \omega^5 X^{13} + \omega^5 X^{12} +
    \omega^7 X^{11} + \omega^5 X^{10} + \omega^7 X^9 + \omega^5 X^8 + \omega^7 X^7 + \omega^7 X^6 + \omega^7 X^5 +
    \omega^7 X^4 + \omega^5 X^3 + \omega^5 X^2 + \omega^7 X,$\\
    \par
$\alpha_5={\omega_2}^7 X^6 + {\omega_2}^7 X^5 + {\omega_2}^{11} X^4 + {\omega_2}^7 X^3 + {\omega_2}^{11} X^2 + {\omega_2}^{11} X + 4,$\\
\par
$\alpha_6={\omega_2}^{22} X^{22} + {\omega_2}^{22} X^{21} + {\omega_2}^{22} X^{20} + {\omega_2}^{22} X^{19} + {\omega_2}^{14} X^{18} + {\omega_2}^{22} X^{17} + {\omega_2}^{14} X^{16} + {\omega_2}^{22} X^{15} + {\omega_2}^{22} X^{14} + {\omega_2}^{14} X^{13} + {\omega_2}^{14} X^{12} + {\omega_2}^{22} X^{11} +{\omega_2}^{22} X^{10} + {\omega_2}^{14} X^9 + {\omega_2}^{14} X^8 + {\omega_2}^{22} X^7 + {\omega_2}^{14} X^6 + {\omega_2}^{22} X^5 + {\omega_2}^{14} X^4
    + {\omega_2}^{14} X^3 +{\omega_2}^{14} X^2 + {\omega_2}^{14} X + 2,$\\
    \par
$\alpha_7={\omega_3}^{41} X^{10} + {\omega_3}^{47} X^9 + {\omega_3}^{41} X^8 + {\omega_3}^{41} X^7 + {\omega_3}^{41} X^6 + {\omega_3}^{47} X^5 + {\omega_3}^{47} X^4 +{\omega_3}^{47} X^3 + {\omega_3}^{41} X^2 + {\omega_3}^{47} X + 3,$\\
\par
$\alpha_8={\omega_3}^{35} X^{22} + {\omega_3}^{35} X^{21} + {\omega_3}^{35} X^{20} + {\omega_3}^{35} X^{19} + {\omega_3}^5 X^{18} + {\omega_3}^{35} X^{17} + {\omega_3}^5 X^{16} + {\omega_3}^{35} X^{15} + {\omega_3}^{35} X^{14} + {\omega_3}^5 X^{13} + {\omega_3}^5 X^{12} + {\omega_3}^{35} X^{11} + {\omega_3}^{35} X^{10} +
    {\omega_3}^5 X^9 + {\omega_3}^5 X^8 + {\omega_3}^{35} X^7 + {\omega_3}^5 X^6 + {\omega_3}^{35} X^5 + {\omega_3}^5 X^4 + {\omega_3}^5 X^3 +
    {\omega_3}^5 X^2 + {\omega_3}^5 X + 2,$\\
\par
$\alpha_9={\omega_4}^{99} X^{22} + {\omega_4}^{99} X^{21} + {\omega_4}^{99} X^{20} + {\omega_4}^{99} X^{19} + {\omega_4}^9 X^{18} + {\omega_4}^{99} X^{17} + {\omega_4}^9 X^{16}+ {\omega_4}^{99} X^{15} + {\omega_4}^{99} X^{14} + {\omega_4}^9 X^{13} + {\omega_4}^9 X^{12} + {\omega_4}^{99} X^{11} + {\omega_4}^{99} X^{10} +  {\omega_4}^9 X^9 + {\omega_4}^9 X^8 + {\omega_4}^{99} X^7 + {\omega_4}^9 X^6 + {\omega_4}^{99} X^5 +{\omega_4}^9 X^4 + {\omega_4}^9 X^3 +{\omega_4}^9 X^2 + {\omega_4}^9 X,$\\
\par
$\alpha_{10}={\omega_5}^{158} X^{10} + {\omega_5}^{38} X^9 + {\omega_5}^{158} X^8 + {\omega_5}^{158} X^7 + {\omega_5}^{158} X^6 + {\omega_5}^{38} X^5 + {\omega_5}^{38} X^4+ {\omega_5}^{38} X^3 + {\omega_5}^{158} X^2 + {\omega_5}^{38} X + 4,$\\
\par
$\alpha_{11}={\omega_5}^{143} X^{18} + {\omega_5}^{11} X^{17} + {\omega_5}^{11} X^{16} + {\omega_5}^{143} X^{15} + {\omega_5}^{143} X^{14} + {\omega_5}^{143} X^{13} +{\omega_5}^{143} X^{12} + {\omega_5}^{11} X^{11} + {\omega_5}^{143} X^{10} + {\omega_5}^{11} X^9 + {\omega_5}^{143} X^8 + {\omega_5}^{11} X^7 +{\omega_5}^{11} X^6 + {\omega_5}^{11} X^5 + {\omega_5}^{11} X^4 + {\omega_5}^{143} X^3 + {\omega_5}^{143} X^2 + {\omega_5}^{11} X + 8,$\\
\par
$\alpha_{12}={\omega_6}^{40} X^6 + {\omega_6}^{40} X^5 + {\omega_6}^{104} X^4 + {\omega_6}^{40} X^3 + {\omega_6}^{104} X^2 + {\omega_6}^{104} X+ 15,$\\
\par
$\alpha_{13}={\omega_6}^{19} X^{10} + {\omega_6}^{35} X^9 + {\omega_6}^{19} X^8 + {\omega_6}^{19} X^7 + {\omega_6}^{19} X^6 + {\omega_6}^{35} X^5 + {\omega_6}^{35} X^4 + {\omega_6}^{35} X^3 + {\omega_6}^{19} X^2 + {\omega_6}^{35} X + 2,$\\
\par
$\alpha_{14}={\omega_7}^{61} Y^6 + {\omega_7}^{61} Y^5 + {\omega_7}^{79} Y^4 + {\omega_7}^{61} Y^3 + {\omega_7}^{79} Y^2 + {\omega_7}^{79} Y + 14,$\\
\par
$\alpha_{15}={\omega_7}^{331} X^{10} + {\omega_7}^{169} X^9 + {\omega_7}^{331} X^8 + {\omega_7}^{331} X^7 + {\omega_7}^{331} X^6 + {\omega_7}^{169} X^5 +{\omega_7}^{169} X^4 + {\omega_7}^{169} X^3 + {\omega_7}^{331} X^2 + {\omega_7}^{169} X + 16,$\\
\par
\noindent
$\omega_1$ is a primitive element of $\mathbb{F}_{2^2}={{{\mathbb{F}_2}[X]} \mathord{\left/
 {\vphantom {{{\mathbb{F}_2}[X]} {\left\langle {{X^2} + X + 1} \right\rangle }}} \right.
 \kern-\nulldelimiterspace} {\left\langle {{X^2} + X + 1} \right\rangle }}$ with ${\rm{ord}}(\omega_1 ) = {2^2} - 1 = 3$,\\
  $\omega$ is a primitive element of $\mathbb{F}_{3^2}={{{\mathbb{F}_3}[X]} \mathord{\left/
 {\vphantom {{{F_3}[X]} {\left\langle {X^2+2X+2} \right\rangle }}} \right.
 \kern-\nulldelimiterspace} {\left\langle X^2+2X+2 \right\rangle }}$ with ${\rm{ord}}(\omega ) = {3^2} - 1 = 8$,\\
 $\omega_2$ is a primitive element of $\mathbb{F}_{5^2}={{{\mathbb{F}_5}[X]} \mathord{\left/
 {\vphantom {{{F_5}[X]} {\left\langle {X^2+4X+2} \right\rangle }}} \right.
 \kern-\nulldelimiterspace} {\left\langle X^2+4X+2 \right\rangle }}$ with ${\rm{ord}}(\omega_2 ) = {5^2} - 1 = 24$,\\
 $\omega_3$ is a primitive element of $\mathbb{F}_{7^2}={{{\mathbb{F}_7}[X]} \mathord{\left/
 {\vphantom {{{F_7}[X]} {\left\langle {X^2+6X+3} \right\rangle }}} \right.
 \kern-\nulldelimiterspace} {\left\langle X^2+6X+3 \right\rangle }}$ with ${\rm{ord}}(\omega_3 ) = {7^2} - 1 = 48$,\\
 $\omega_4$ is a primitive element of $\mathbb{F}_{11^2}={{{\mathbb{F}_{11}}[X]} \mathord{\left/
 {\vphantom {{{F_{11}}[X]} {\left\langle {X^2 + 7X + 2} \right\rangle }}} \right.
 \kern-\nulldelimiterspace} {\left\langle X^2 + 7X + 2 \right\rangle }}$ with ${\rm{ord}}(\omega_4 ) = {11^2} - 1 = 120$,\\
 $\omega_5$ is a primitive element of $\mathbb{F}_{13^2}={{{\mathbb{F}_{13}}[X]} \mathord{\left/
 {\vphantom {{{F_{13}}[X]} {\left\langle {X^2 + 12X + 2} \right\rangle }}} \right.
 \kern-\nulldelimiterspace} {\left\langle X^2 + 12X + 2 \right\rangle }}$ with ${\rm{ord}}(\omega_5 ) = {13^2} - 1 = 168$,\\
 $\omega_6$ is a primitive element of $\mathbb{F}_{17^2}={{{\mathbb{F}_{17}}[X]} \mathord{\left/
 {\vphantom {{{F_{17}}[X]} {\left\langle {X^2 + 16X + 3} \right\rangle }}} \right.
 \kern-\nulldelimiterspace} {\left\langle X^2 + 16X + 3 \right\rangle }}$ with ${\rm{ord}}(\omega_6 ) = {17^2} - 1 = 288$\\
  and
  $\omega_7$ is a primitive element of $\mathbb{F}_{19^2}={{{\mathbb{F}_{19}}[X]} \mathord{\left/
 {\vphantom {{{F_{19}}[X]} {\left\langle {X^2 + 18X + 2} \right\rangle }}} \right.
 \kern-\nulldelimiterspace} {\left\langle X^2 + 18X + 2 \right\rangle }}$ with ${\rm{ord}}(\omega_7 ) = {19^2} - 1 = 360$.

%\begin{acknowledgements}
%If you'd like to thank anyone, place your comments here
%and remove the percent signs.
%\end{acknowledgements}

% BibTeX users please use one of
%\bibliographystyle{spbasic}      % basic style, author-year citations
%\bibliographystyle{spmpsci}      % mathematics and physical sciences
%\bibliographystyle{spphys}       % APS-like style for physics
%\bibliography{}   % name your BibTeX data base

\begin{thebibliography}{s15}
%
% and use \bibitem to create references. Consult the Instructions
% for authors for reference list style.
%
\bibitem{Bierbrauer}
  Bierbrauer J., Cyclic additive and quantum stabilizer codes, in: C. Carlet, B. Sunar (Eds.), Arithmetic of Finite Fields: First International Workshop, Madrid, 276--283 (2007)

 \bibitem{Calderbank}
Calderbank A., Rains E., Shor P., Sloane N., Quantum error correction via codes over $GF(4)$, IEEE Trans. Inf. Theory, {\bf 44} 1369--1387 (1998)

\bibitem{Cao1}
 Cao Y., Gao J., Fu F.-W., Cao Y., Enumeration and construction of additive cyclic codes over Galois rings, Discrete Math., {\bf 338} (6)  922--937 (2015)

\bibitem{Cao2}
 Cao Y., Chang X.X., Cao Y., Constacyclic $\mathbb{F}_q$-linear codes over $\mathbb{F}_{q^l}$, Appl. Algebra Engrg. Comm. Comput., {\bf 26} 369--388 (2015)

 \bibitem{Cao3}
  Cao Y., Gao  J., Fu F.-W., Semisimple multivariable $\mathbb{F}_q$-linear codes over $\mathbb{F}_{q^l}$, Des. Codes Cryptogr., {\bf 77} 153--177 (2015)

\bibitem{Dey}
Dey B.K., Rajan B.S., $\mathbb{F}_q$-linear cyclic codes over $\mathbb{F}_{q^m}$: DFT approach, Des. Codes Cryptogr., {\bf 34} 89--116 (2005)


\bibitem{Danielsen}
Danielsen L.E., On the classification of Hermitian self-dual additive codes over $GF(9)$, IEEE Trans. Inf. Theory, {\bf 58} 5500--5511 (2012)

\bibitem{Huffman1}
 Huffman W.C., Cyclic $\mathbb{F}_q$-linear $\mathbb{F}_{q^t}$-codes, Int. J. Inf. and Coding Theory, {\bf 1} (3) 249--284 (2010)

\bibitem{Huffman2}
 Huffman W.C., Self-dual $\mathbb{F}_q$-linear $\mathbb{F}_{q^t}$-codes with an automorphism of prime order, Adv. Math. Commun., {\bf 7} (1) 57--90 (2013)

\bibitem{Huffman3}
 Huffman W.C., On the theory of $\mathbb{F}_q$-linear $\mathbb{F}_{q^t}$-codes, Adv. Math. Commun., {\bf 7} (3) 57--90 (2013)

\bibitem{Markus}
Markus  G., Code Tables: Bounds on the parameters of various types of codes[Online]. Available: http://www.codetables.de/ accessed on 21. 05. 2016.

\bibitem{Rains}
Rains E.M., Nonbinary quantum codes, IEEE Trans. Inf. Theory, {\bf 45} 1827--1832 (1999)

\bibitem{Wan}
 Wan Z.-X., Cyclic codes over Galois rings$^*$, Algebra Colloquium, {\bf 6} (3) 291--304 (1999)

% etc
\end{thebibliography}

% Non-BibTeX users please use

\end{document}